\date{}
\newcommand{\be}{\begin{equation}}
\newcommand{\ee}{\end{equation}}
\def\la{\langle}
\def\ra{\rangle}
\def\R{\mathbb{R}}
\def\C{\mathbb{C}}
\def\Z{\mathbb{Z}}
\def\N{\mathbb{N}}
\renewcommand{\Re}{\text{{\rm Re}\;}}
\renewcommand{\Im}{\text{{\rm Im}\;}}
\newtheorem{theorem}{Theorem}[section]
\newtheorem{lemma}[theorem]{Lemma}
\newtheorem{proposition}[theorem]{Proposition}
\theoremstyle{definition}
\newtheorem{remark}[theorem]{Remark}
\numberwithin{equation}{section}
\title[Multidimensional phase space tunneling]{Resonance widths in a case of multidimensional phase space tunneling}
\author{
Alain GRIGIS${}^1$ \& 
Andr\'e MARTINEZ$ {}^2$
 }
\begin{document}

\maketitle 
\addtocounter{footnote}{1}
\footnotetext{{\tt\small Universit\'{e} Paris 13,  
D\'epartement de Math\'ematiques,  LAGA UMR CNRS 7539, Av. J.-B. Cl\'ement, 93430 Villetaneuse, France, grigis@math.univ-paris13.fr}}  
\addtocounter{footnote}{1}
\footnotetext{{\tt\small Universit\`a di Bologna,  
Dipartimento di Matematica, Piazza di Porta San Donato, 40127
Bologna, Italy, 
martinez@dm.unibo.it }\\Partly supported by Universit\`a di Bologna, Funds for Selected Research Topics}  
\begin{abstract}
We consider a semiclassical $2\times 2$ matrix Schr\"odinger operator of the form $P=-h^2\Delta {\bf I}_2 +
{\rm diag}(x_n-\mu , \tau V_2(x)) +hR(x,hD_x)$, where $\mu$ and $\tau$ are two  small positive constants, $V_2$ is real-analytic and admits a non degenerate minimum at 0, and $R=(r_{j,k}(x,hD_x))_{1\leq j,k\leq 2}$ is a symmetric off-diagonal $2\times 2$ matrix of first-order differential operators with analytic coefficients. Then, denoting by $e_1$ the first eigenvalue of $-\Delta +  \la \tau V_2''(0)x,x\ra /2$, and under some ellipticity condition on $r_{1,2}=r_{2,1}^*$, we show that, for any $\mu$ sufficiently small, and for $0<\tau \leq\tau(\mu)$ with some $\tau(\mu)>0$,  the unique resonance $\rho$ of $P$ such that $\rho = \tau V_2(0) + (e_1+r_{2,2}(0,0))h + {\mathcal O}(h^2)$ (as $h\rightarrow 0_+$) satisfies, 
$$
\Im \rho = -h^{\frac32}f(h,\ln\frac1{h})e^{-2S/h},
$$
where $f(h,\ln\frac1{h}) \sim \sum_{0\leq m\leq\ell} f_{\ell,m}h^\ell(\ln\frac1{h})^m$ is a symbol with $f_{0,0}>0$, and $S$ is the imaginary part of the complex action along some convenient closed path containing $(0,0)$ and consisting of a union of complex nul-bicharacteristics of $p_1:=\xi^2 - x_n-\mu$ and $p_2:=\xi^2 +\tau V_2(x)$ (broken instanton). This broken instanton is described in terms of the outgoing and incoming complex Lagrangian manifolds associated with $p_2$ at the point 
$(0,0)$, and their intersections with the characteristic set $p_1^{-1}(0)$ of $p_1$.
\end{abstract}  

\vskip 4cm
{\it Keywords:} Resonances; Born-Oppenheimer approximation; microlocal tunneling; pseudodifferential operators.
\vskip 0.5cm
{\it Subject classifications:} 35P15; 35C20; 35S99; 47A75.

\baselineskip = 18pt 
\vfill\eject

\section{Introduction}

We consider the semiclassical $2\times 2$ matrix Schr\"odinger operator,
$$
P= 
\left(\begin{array}{cc}
P_1 & 0\\
0 & P_2
\end{array}\right) + hR(x,hD_x)
$$
with,
\begin{eqnarray*}
&& P_1 := -h^2\Delta +x_n-\mu;\\
&& P_2 := -h^2\Delta +\tau V_2(x),
\end{eqnarray*}
where $x=(x', x_n)=(x_1,\dots ,x_n)$ is the current variable in $\R^n$ ($n\geq 2$),
$h>0$ denotes the semiclassical parameter, $R(x,hD_x)=(r_{j,k}(x,hD_x))_{1\leq j,k\leq 2}$ is a  formally self-adjoint off-diagonal $2\times 2$ matrix of first-order
semiclassical differential operators, 
 and $\mu , \tau$ are two positive extra-parameters that will
be taken sufficiently small later on. Moreover, we assume that the smooth potential $V_2$ is
bounded on $\R^n$  and
satisfies,
\be
\label{assV2}
V_2\geq 0\; ; \; V_2^{-1}(0) =\{ 0\} \; ; \;  {\rm Hess}V_2(0) >0 \; ; \; 
\liminf_{\vert x\vert\rightarrow\infty} V_2 >0.
\ee
In particular, the classically allowed region  at energy 0 associated with $P_2$ (that is, $\{ V_2(x)\leq 0\} = \{0\}$) is included into that associated with $P_1$ (that coincides with $\R^{n-1}\times (-\infty, \mu]$). For this reason, the possible tunneling between the two operators cannot be reduced to a tunneling in the $x$-variables (as it usually happens, for instance, in scalar multiple-wells problems -- see, e.g., \cite{HeSj1, Ma1} -- or for shape resonances -- see, e.g., \cite{HeSj2}), and actually, such a situation persists under any linear symplectic change of coordinates. Instead, one has to consider the corresponding energy shells which, in our case, are given by $\{(x,\xi)\in \R^{2n}\,; \, \xi^2 + x_n =\mu \}$ for $P_1$, and by $\{(x,\xi)\in \R^{2n}\,; \, \xi^2 + V_2(x) =0\}=\{(0,0)\}$ for $P_2$. Since they are disjoint, one expects some exponentially small tunneling between them. Indeed, some previous works already exists on such situations (see, e.g., \cite{Ba, Ma2, Ma3, Ma4, Ma5, MaNaSo, NaSo}), but, except in the one dimensional case (\cite{Ba, Na}), no optimal bounds are obtained. Here, we plan to give such an optimal bound in terms of the width of the lower resonance appearing in our model.
\vskip 0.3 cm

Since we plan to study the resonances of $P$ near 0, we also assume that $V_2$ and the coefficients of 
$R$ are  globally analytic on $\R^n$, and extend to bounded holomorphic functions in a complex strip of the form
${\mathcal S}_\delta :=\{ \vert \Im x\vert < \delta \}$ for some $\delta >0$ (actually, in the sequel, we will need a little bit more). Then, the resonances of $P$ near 0 are defined, e.g., as the eigenvalues close to  0 of  the complex translated operator $P_\theta$ ($0<\theta <\delta$) obtained by taking $x_n$ on $\R -i\theta$. 

As in \cite{Ma2}, one can construct formal WKB eigenfunctions of $P_\theta$ with energy
close to 0 and, using the 
ellipticity of $P_{1,\theta}$, one can justify these formal constructions as in \cite{HeSj1}, and show that the eigenvalues $\rho_j$ of $P_\theta$ in a complex domain of the form $ [0, Ch]-i[0, \varepsilon]$ ($\varepsilon >0$ small enough, $C>0$ arbitrarily large) admit asymptotic expansions as $h\rightarrow 0_+$, of the form,
$$
\rho_j \sim he_j +h\sum_{k\geq 1}\rho_{j,k}h^{k/2},
$$
where $\rho_{j,k}\in \R$ and $e_j$ is the $j$-th eigenvalue of the
harmonic oscillator $-\Delta + \la V_2''(0)x,x\ra /2$. Moreover, still as in \cite{Ma2}, one can show that the width of $\rho_j$ is exponentially small, that is, there exists $S_j>0$ such that,
$$
\vert\Im\rho_j\vert ={\mathcal O}(e^{-S_j/h})
$$ uniformly as $h\rightarrow 0$. However, the techniques used in \cite{Ma2} do not permit to have a reasonably good estimate on the best value that one can take for $S_j$, that is, on the quantity
$$
{\bf S}_j := -\lim_{h\rightarrow 0}h\ln \vert\Im\rho_j\vert .
$$
 Better estimates can be found in \cite{Ma4}, but they are not optimal, as can be deduced from the works \cite{Ba, Na} where  optimal results are obtained in the one dimensional case.

The object of this paper is to compute the exact value of ${\bf S}_1$ (when $\mu$ is taken sufficiently small and $\tau$ is sufficiently small with respect to $\mu$), and show that ${\bf S}_1=S$ where $S$ is the imaginary part of some action integral along the complex bicharacteristics of $p_1(x,\xi):= \xi^2+x_n-\mu$ and $p_2(x,\xi):= \xi^2+ \tau V_2(x)$ (see (\ref{actionS}) below). Actually, our result is even more precise, since it gives a complete asymptotic expansion of $e^{S/h}\Im \rho_1 $ (see Theorem \ref{mainth}). As a by-product of the techniques, we also obtain the estimate ${\bf S}_j \geq S$ for all $j\geq 2$ (see Remark \ref{autreSj}).\vskip 0.2cm

Our strategy consists in first making some (global) symplectic change of variables, in order to recover the geometrical situation of \cite{GrMa} (where tunneling was in the position variables only), and then to adapt the argument used in \cite{GrMa}. However, since this symplectic change is quadratic in the dual variable $\xi$,   a great amount of 
new technical difficulties appears, due to the bad behavior of the symbols as  $|\xi|\to \infty$. In order to overcome them, it has been necessary for us to considerably improve the estimates of \cite{GrMa} and to introduce some kind of exotic pseudodifferential calculus.\vskip 0.5cm
In the next section (Section2), we describe the geometric objects that enter in the statement of our main result (Section 3). Section 4 is devoted to the reduction of the problem to a geometric situation close to that of \cite{GrMa}, and to the construction of a WKB solution. In order to compare this asymptotic solution with the actual resonant state, in Section 5 we develop a  generalization of Agmon estimates \cite{Ag}, adapted to the peculiarities of the reduced operator (in particular, its somehow exotic aspects from a pseudodifferential point of view). In Section 6, the comparison between asymptotic and actual solution is studied, in the different regions of the $x$-space. Finally, the proof is completed in Section 7. Besides, various annex results and technical proofs are given in the six appendices.
\vskip 0.2cm
{\bf Acknowledgments} The authors would like to thank Johannes Sj\"ostrand for many enlightening discussions on the subject.

\section{Geometrical Preliminaries}

This section is devoted to the construction of the broken instanton that is used to express  ${\bf S}_1$ as an action integral.

To start with, we work with the principal symbol $p_2(x,\xi)= \xi^2+\tau V_2(x)$ of $P_2$ near $(0,0)$. Since $V_2$ is analytic, we can consider the Hamilton field $H_{p_2}=(2\xi ,-\tau \nabla V_2(x))$ in a complex neighborhood of $(0,0)$ in $\C^{2n}$, and, due to the assumptions on $V_2$, $(0,0)$ is a fix point for $H_{p_2}$.
Moreover,  following \cite{HeSj1}, near $(0,0)$ we can define the two following (outgoing, respectively  incoming) complex Lagragian submanifolds of $p_2^{-1}(0)$,
\be
\label{varsortentr}
\Sigma_\pm =\{ (x, \pm i\nabla\varphi_2(x))\; ;\; x\in {\mathcal V}_0\}
\ee
where $\varphi_2 (x) = d_2(x,0)$, $d_2$ is the analytic extension of the Agmon distance associated to the degenerate metric $\tau V_2(x)dx^2$ on $\R^n$, and ${\mathcal V}_0$ is a complex neighborhood of $0$ in $\C^n$, sufficiently small so that $\varphi_2 (x)$ remains holomorpic on it.

Next, we set $p_1(x,\xi)= \xi^2+x_n -\mu$ and,
$$
\Gamma_\pm =\Sigma_\pm \cap p_1^{-1}(0),
$$
and we say that a pair of points $(\rho_+,\rho_-)\in 
\Gamma_+\times\Gamma_-$ are in $p_1${\it -correspondence}, 
if there exits $t\in\C$  such that $\rho_-=\exp tH_{p_1}(\rho_+)$ 
(that is, if $\rho_+$ and $\rho_-$ are on the same complex 
integral curve of $H_{p_1}$).

We have,
\begin{lemma}\sl
\label{corresp}
Under Assumption (\ref{assV2}), there exists a complex neighborhood of $(0,0)$ that,   
for each $\mu$ sufficiently  small and for $\tau\in (0,1]$, contains
a unique pair  $(\rho_+,\rho_-)\in \Gamma_+\times\Gamma_-$ in 
$p_1$-correspondence.
\end{lemma}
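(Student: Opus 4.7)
I would proceed by the holomorphic implicit function theorem applied to the $p_1$-correspondence, recast as a square system of holomorphic equations.

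\emph{Parameterisation.} Since the eikonal equation $(\nabla\varphi_2)^2 = \tau V_2$ gives $p_2(x,\pm i\nabla\varphi_2(x)) = 0$ automatically, membership $(x,\pm i\nabla\varphi_2(x)) \in \Gamma_\pm$ reduces to
$$
p_1\bigpare{x,\pm i\nabla\varphi_2(x)} = x_n - \mu - \tau V_2(x) = 0.
$$
Since $\partial_{x_n}[x_n - \tau V_2(x)]|_{x=0} = 1$, the holomorphic IFT produces a holomorphic $g(x';\mu,\tau)$ with $g(0;0,\tau) = 0$, so that both $\Gamma_\pm$ are parameterised by $x' \in \C^{n-1}$ via $x = (x', g(x';\mu,\tau))$. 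Since $H_{p_1}(x,\xi) = (2\xi,-e_n)$ integrates to $\exp sH_{p_1}(x,\xi) = (x + 2s\xi - s^2 e_n,\xi - s e_n)$, the $p_1$-correspondence condition $\rho_- = \exp tH_{p_1}(\rho_+)$, after using the position equation to eliminate $x_-$, becomes the holomorphic system
\be\label{sysPhi}
\Phi(x_+',t;\mu,\tau) := \nabla\varphi_2(x_+) + \nabla\varphi_2\bigpare{x_+ + 2ti\nabla\varphi_2(x_+) - t^2 e_n} + it\,e_n = 0,
\ee
of $n$ equations in the $n$ unknowns $(x_+', t) \in \C^{n-1}\times\C$, with $x_+ = (x_+', g(x_+';\mu,\tau))$.

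\emph{Linearisation.} At $\mu = 0$, $(\ref{sysPhi})$ admits the trivial solution $(x_+', t) = (0,0)$, giving $\rho_+ = \rho_- = (0,0)$. Differentiating the eikonal equation twice at $0$ yields $2\bigpare{\nabla^2\varphi_2(0)}^2 = \tau V_2''(0)$, whence
$$
B := \nabla^2\varphi_2(0) = \sqrt{\tau/2}\,\sqrt{V_2''(0)}
$$
is real symmetric positive-definite. A direct computation at the trivial point, using also $\partial_{x'}g|_0 = 0$ (from $\nabla V_2(0) = 0$) and $\partial_\mu g|_0 = 1$, gives
$$
d_{(x_+',t)}\Phi\,(\delta x_+',\delta t) = 2B(\delta x_+',0) + i\delta t\,e_n,
$$
whose matrix
$$
\begin{pmatrix} 2B_{11} & 0 \\ 2B_{n1} & i \end{pmatrix}
$$
has determinant $i\cdot 2^{n-1}\det B_{11} \neq 0$, since $B_{11}$ is a principal block of the positive-definite $B$. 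The Jacobian is therefore invertible.

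\emph{Conclusion and uniformity in $\tau$.} The holomorphic IFT then produces unique holomorphic functions $x_+'(\mu,\tau), t(\mu,\tau)$ vanishing at $\mu = 0$ and solving $(\ref{sysPhi})$, delivering the pair $(\rho_+,\rho_-)$ in the statement. The main obstacle is to obtain a neighbourhood of $(0,0)$ that can be chosen uniformly in $\tau \in (0,1]$: since $\|B_{11}^{-1}\| = O(\tau^{-1/2})$ as $\tau \to 0^+$, a naive application of IFT gives a shrinking radius. This is addressed by the natural scaling $x_+' = \mu y$, $t = \mu\sqrt\tau\,s$: after dividing $(\ref{sysPhi})$ by $\mu\sqrt\tau$ and expanding, the rescaled equation for $(y,s)$ has a $\tau$-independent leading behaviour at $\mu = 0^+$, governed entirely by $\sqrt{V_2''(0)}$, so the IFT bounds become uniform in $\tau \in (0,1]$ and yield the claimed fixed neighbourhood.
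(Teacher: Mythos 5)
Your proposal is correct, but it takes a genuinely different route from the paper. The paper first performs a local symplectic change of coordinates of the form $(x,\xi)\mapsto (x,\xi)+{\mathcal O}(\xi^2)$ that straightens $p_1$ to the affine form $x_n-\mu$ while keeping $p_2$ even in $\xi$; then $H_{p_1}=-\partial_{\xi_n}$ is vertical, so two $p_1$-correspondent points of $\Gamma_+$ and $\Gamma_-$ necessarily share the same $(x,\xi')$ and differ only in $\xi_n$, which forces $\partial_{x'}\varphi_2(x',\mu)=0$ at $x_n=\mu$, solved uniquely near $0$ since ${\rm Hess}\,\varphi_2>0$. Moreover, since $\varphi_2=\sqrt\tau\,d_{V_2}$ with $d_{V_2}$ $\tau$-independent, that critical-point equation is $\tau$-independent, so the $\tau$-uniformity of the neighbourhood is essentially automatic. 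You instead work in the original variables, parameterise $\Gamma_\pm$ via the IFT applied to $x_n-\mu-\tau V_2=0$, integrate $H_{p_1}$ explicitly, and apply a holomorphic IFT to the $n\times n$ system $\Phi(x'_+,t;\mu,\tau)=0$; this is a more computational path which, as you correctly flag, requires the rescaling $x'_+=\mu y$, $t=\mu\sqrt\tau\,s$ to obtain a $\tau$-uniform radius, since $\|B_{11}^{-1}\|\sim\tau^{-1/2}$. The underlying nondegeneracy is identical in both treatments — positive-definiteness of the restriction of ${\rm Hess}\,\varphi_2$ to the $x'$-directions on $\{x_n=\mu\}$, i.e.\ your $B_{11}$ — but the paper's reduction makes both the uniqueness and the $\tau$-uniformity visible with essentially no computation, whereas your route is self-contained (no preliminary symplectic normalisation and no need to verify that the change of variables preserves the form of $\Sigma_\pm$), and it exhibits explicitly the size of the solution, $(x'_+,t)=({\mathcal O}(\mu),{\mathcal O}(\mu\sqrt\tau))$. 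To turn your plan into a full proof you should make the rescaling step precise, e.g.\ by writing the rescaled map $\Psi(y,s;\mu,\tau):=\Phi(\mu y,\mu\sqrt\tau s;\mu,\tau)/(\mu\sqrt\tau)$, checking that it extends holomorphically across $\mu=0$ with limit $\sqrt2\,\sqrt{V_2''(0)}\,(y,1)+ise_n$, and verifying that its $(y,s)$-Jacobian and its second derivatives are bounded uniformly in $\tau\in(0,1]$ on a fixed neighbourhood — which does hold, because every $\tau$ in $\Psi$ enters with a non-negative power.
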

\begin{proof} By a local change of symplectic coordinates centered at $(0,0)$ (of the form $(x,\xi)\mapsto (x,\xi)+{\mathcal O}(\xi^2)$), we can reduce to $p_1(x,\xi) =x_n -\mu$ and $p_2(x,\xi)= \xi^2+\tau V_2(x)+{\mathcal O}(|x|\xi^2 + |\xi|^4)$, both even in $\xi$. In particular, the image of $\Sigma_\pm$ keeps the same form (\ref{varsortentr}) with ${\rm Hess}\hskip 1pt \varphi_2 (x) >0$ near 0 on the real. In these coordinates, $H_{p_1}$ just becomes $-\partial_{\xi_n}$, and thus the points $\rho_\pm$ are necessarily given by $\rho_\pm = ((x'_\mu, \mu) ;  \pm i\nabla \varphi_2(x'_\mu, \mu))$, where $x'_\mu \in \R^{n-1}$ is the unique solution (near 0) of $\partial_{x'}\varphi_2 (x'_\mu, \mu) =0$.
\end{proof}

Now, let us denote by $\Gamma_2^+$ the complex integral curve of 
$H_{p_2}$ containing $\rho_+$,
$\Gamma_2^-$ the complex integral curve of $H_{p_2}$ containing 
$\rho_-$, and
$\Gamma_1$ the complex integral curve of $H_{p_1}$ containing 
$\rho_-$ and $\rho_+$. In particular, $(0,0) \in \overline {\Gamma_2^\pm}$, 
and we can choose a simple oriented loop 
$\gamma =\gamma_2^+\cup\gamma_1\cup\gamma_2^-$, such that,
\begin{eqnarray*}
&&  \gamma_2^+\mbox{ is a path from } 
(0,0)\; {\rm to\;}
\rho_+ \mbox{ in } \Gamma_2^+;\\
&& \gamma_1\; \mbox{ is a path from }  \rho_+\; 
{\rm to\;}
\rho_-  \mbox{ in } \Gamma_1;\\
&& \gamma_2^-\; \mbox{ is a path from } 
\rho_-\; {\rm  to\;}
(0,0)  \mbox{ in } \Gamma_2^-.
\end{eqnarray*}
Therefore, $\gamma$ is included in the union of integral curves of 
$H_{p_1}$ and $H_{p_2}$, and for this reason we call it a {\it broken 
instanton}. Moreover, since the two Hamilton flows define canonical 
transformations, the action defined by
$$
I:=\int_\gamma \xi dx
$$
does not depend on the particular choice of $\gamma$, and we set
\be
\label{actionS}
 S:=\Im \int_\gamma \xi dx.
\ee

\section{Main Result}

Now, we can state our main result. In addition to the previous assumptions,
we also assume that $V_2(x)$ and the coefficients $c(x)$ of $R$ extend to bounded holomorphic function near a complex domain of the form,
$$
\widetilde{\mathcal S}_\delta := \{ x\in\C^n\; ;\; 
\vert \Im x\vert \leq\delta\la\Re x_n\ra^{1/2}\}
$$
for some $\delta >0$. In particular, by Cauchy formula, for any multi-index $\alpha$, they satisfy,
\be
\label{ass2}
\vert\partial^\alpha V_2(x)\vert +\vert\partial^\alpha c(x)\vert
={\mathcal O}\left(\la \Re x_n\ra^{-\vert\alpha\vert /2}\right)
\ee
uniformly on $\widetilde{\mathcal S}_\delta$. Moreover, we assume that the coefficient of interaction $r_{1,2}$
satisfies,
\be
\label{ellr}
r_{1,2}(\rho_+)\not= 0
\ee
where $\rho_+$ is defined in Lemma \ref{corresp}. Then, our main result is,
\begin{theorem}\sl
\label{mainth}\sl
Assume (\ref{assV2}), (\ref{ass2}) and
(\ref{ellr}). Then, for any $\mu >0$ small enough, there exists $\tau(\mu)>0$ such that, for $\tau\in (0,\tau (\mu)]$, the  lowest resonance
$\rho_1=e_1h+{\mathcal O}(h^2)$ of $P$ is such that,
$$
\Im \rho_1 = -h^{\frac32}f(h, \ln\frac1{h})e^{-2S/h}
$$
where  $f(h, \ln\frac1{h})$ admits an asymptotic expansion of the form,
$$
f(h, \ln\frac1{h})\sim \sum_{0\leq m\leq\ell}f_{\ell, m}h^\ell (\ln\frac1{h})^m,\quad  (h\rightarrow 0),
$$
with $f_{0,0} >0$, and  $S=\Im \int_\gamma \xi dx >0$ is defined in (\ref{actionS}).
\end{theorem}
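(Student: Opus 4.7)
The plan follows the strategy announced in the introduction. First I would perform the global symplectic change of variables sketched in the proof of Lemma \ref{corresp}, namely a canonical transformation of the form $(x,\xi)\mapsto (x,\xi)+\mathcal{O}(\xi^2)$ that reduces $p_1$ to $x_n-\mu$ while keeping $p_2$ of the form $\xi^2+\tau V_2(x)+\mathcal{O}(|x|\xi^2+|\xi|^4)$. Quantizing this transformation as a semiclassical Fourier integral operator $U$ would conjugate $P$ into an operator $\widetilde P=U^{-1}PU$ for which the problem becomes formally analogous to the position-space tunneling treated in \cite{GrMa}. Because the transformation is quadratic in $\xi$, the conjugation produces symbols with loss in $\langle\xi\rangle$; this forces an exotic pseudodifferential calculus, which can however be controlled thanks to the $\langle\Re x_n\rangle^{-1/2}$-decay of derivatives encoded in (\ref{ass2}). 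This is precisely why the complex domain $\widetilde{\mathcal S}_\delta$ was chosen with a parabolic width.

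Second, I would construct a global WKB quasimode $u_{WKB}=(u_1,u_2)$ of $\widetilde P$ with energy $\rho\sim e_1h$. Near the well, $u_2$ is built from the ground state of the harmonic oscillator $-\Delta+\langle\tau V_2''(0)x,x\rangle/2$ dressed by an analytic amplitude with phase $\varphi_2(x)=d_2(x,0)$, while the ellipticity of $P_1-\rho$ determines $u_1$ order by order in $h$. I would then continue $u_2$ holomorphically along the outgoing Lagrangian $\Sigma_+$, use the transport along $\Gamma_1$ to cross the $p_1$-bicharacteristic from $\rho_+$ to $\rho_-$, and switch to $\Sigma_-$. Hypothesis (\ref{ellr}) is precisely what makes the transport equation along $\Gamma_1$ solvable with a nontrivial leading amplitude. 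The phase accumulated along $\gamma$ is $\int_\gamma\xi\,dx$, so after returning to the real we obtain an inhomogeneity of size $\mathcal{O}(e^{-S/h})$ with $S=\Im\int_\gamma\xi\,dx$.

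Third, I would extend the Agmon estimates of \cite{Ag, Ma2} to this exotic framework, as announced for Section~5. The weight is the Agmon distance associated with the degenerate metric $\tau V_2\,dx^2$, suitably modified outside a neighborhood of $0$, and the argument proceeds by conjugating $\widetilde P_\theta$ by $e^{\varphi/h}$ and establishing positivity of the symmetric part in the new symbol class. Combined with microlocal propagation in the elliptic region of $\widetilde P_{1,\theta}$, this yields $u-u_{WKB}=\mathcal{O}(e^{-(S+\varepsilon)/h})$ in appropriate weighted norms, for the true lowest resonant eigenstate $u$.

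Finally, I would extract $\Im\rho_1$ via the standard commutator identity
\begin{equation*}
\Im\rho_1\,\|\chi u\|^2=\Im\langle[\widetilde P_\theta,\chi]u,\chi u\rangle,
\end{equation*}
where $\chi$ is a real cutoff equal to $1$ near $0$ and vanishing past the projection of $\rho_\pm$. On $\supp\nabla\chi$ the WKB approximation is sharp, so the right-hand side becomes a Laplace-type integral of the form $\int e^{-2\varphi/h}A(x,h)\,dx$; stationary phase at the real point lying under $\rho_\pm$ then produces the prefactor $h^{3/2}$, the exponential $e^{-2S/h}$, and a full symbol expansion. The logarithmic terms $(\ln\tfrac{1}{h})^m$ come from the transport equations along $\Gamma_1$, whose solutions develop logarithmic singularities at the endpoints $\rho_\pm$, and the positivity $f_{0,0}>0$ follows from $|r_{1,2}(\rho_+)|^2>0$ combined with the sign of the leading stationary-phase factor, as in \cite{GrMa}. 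I expect the hardest single step to be the reconciliation of the WKB matching across the broken instanton with the exotic symbol calculus produced by the quadratic symplectic reduction, and the corresponding adaptation of the Agmon estimates — essentially all the analytic difficulty of the paper is concentrated there.
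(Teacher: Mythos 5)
Your overall blueprint (global symplectic reduction, WKB construction across the crossing, adapted Agmon estimates, commutator extraction) matches the paper's strategy in broad outline, but the proposal contains a substantive error at its very first step, and it omits the one piece of machinery that drives the logarithmic terms.

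The reduction you propose is the wrong one. You say you would use the local transformation from the proof of Lemma~\ref{corresp}, which sends $p_1$ to $x_n-\mu$. What the paper actually uses is the FIO $\mathcal{U}$ of~(\ref{conjug}), whose canonical map~(\ref{TC}) gives $y_n=x_n+2\xi^2$ and hence
$\widetilde p_1(y,\eta)=-(\eta^2-y_n+\mu)$,
\emph{not} $y_n-\mu$. This is not a cosmetic distinction. Keeping the $\eta^2$ term is what turns $\widetilde P_1$ into a genuine (sign-flipped) Schr\"odinger operator with linear potential, i.e.\ a shape-resonance-type operator in the sense of~\cite{HeSj2}. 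The Laplacian in $\widetilde P_1$ is indispensable in Proposition~\ref{agmon}: the conjugated quadratic form $\langle e^{\Phi/h}\widetilde Q v, e^{\Phi/h}\check v\rangle$ with $\check v=(-v_1,v_2)$ produces the gradient term $\|h\nabla(e^{\Phi/h}v_1)\|^2$ and the barrier term $\langle(\widetilde V_1-(\nabla\Phi)^2)e^{\Phi/h}v_1,e^{\Phi/h}v_1\rangle$, and both come from $\widetilde P_1=-h^2\Delta-(y_n-\mu)+\cdots$. If instead $\widetilde P_1$ were mere multiplication by $y_n-\mu$, conjugation by $e^{\Phi/h}$ would do nothing on the first component and you would have no mechanism to propagate exponential smallness of $v_1$. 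You would also lose the sign structure that forces the $\check v=(-v_1,v_2)$ pairing. So the reduction you describe would stall exactly where you expect the analytic difficulty to live.

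Second, the logarithms $(\ln\frac1h)^m$ do not come from transport along $\Gamma_1$. They are built into the WKB ansatz~(\ref{BKWcross}) through the Weber functions $Y_{k,\varepsilon}(z/\sqrt h)$ with $Y_{k,\varepsilon}=\partial_\varepsilon^k Y_{0,\varepsilon}$, and the $\partial_\varepsilon$ derivatives of $Y_{0,\varepsilon}(z)\sim\sqrt{2\pi}\,\Gamma(\varepsilon)^{-1}e^{z^2/4}z^{\varepsilon-1}$ produce $\ln z=\ln\bigl(z(y)/\sqrt h\bigr)$ factors at $z$ of order one. Your proposal makes no mention of the parabolic-cylinder/Weber matching across $\Gamma$, which is the crossing technology borrowed from~\cite{Pe} and~\cite{GrMa} and is where the symbol class~(\ref{symbcross}), $\sum_{l}\sum_{m\leq l}h^l(\ln h)^m\gamma^{l,m}(y)$, actually originates. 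Replacing it by generic transport along $\Gamma_1$ would not reproduce the $(\ln\frac1h)^m$ structure of the theorem, nor the consistent matching $\widetilde w_1\sim\widetilde w_2\sim\widetilde w_3$ of Proposition~\ref{BKWtotal}. Finally, while your closing commutator identity is indeed the right extraction mechanism (it is the Section~7 argument), the passage from it to the $h^{3/2}$ prefactor is not a free-standing stationary-phase computation: it relies on the full comparison estimates of Section~6, including the complex translation by $k=h\ln h^{-1}$, the modified operator $\widetilde Q$ with the capped potential $\widetilde V_1$, and the propagation-of-frequency-set step near $\{y_n=\mu\}$, none of which are visible in your sketch.
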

\begin{remark}\sl \label{autreSj}
As in \cite{GrMa}, the same techniques also give upper bounds on the widths of the other resonances $\rho_j$, $j\geq 2$, of the type: $|\Im \rho_j|={\mathcal O}(h^{-\beta_j}e^{-2S/h})$, for some $\beta_j>0$. In particular, one has ${\mathbf S}_j\geq S$ for all $j$.
\end{remark}
\begin{remark}\sl The assumption that the matrix $R(x,hD_x)$ is off-diagonal has been made for the sake of simplicity only. Our proof can be adapted without any problem to the case of a general (symmetric) matrix, with the difference that the lowest resonance is now given by: $\rho_1=(e_1+r_{2,2}(0,0))h+{\mathcal O}(h^2)$.
\end{remark}

\section{Reduction and WKB Constructions}
\label{sectionWKB}

This section is devoted to the construction of
asymptotics solutions of the equation
$Pu=\rho u$. In the next sections, these
approximate solutions will be shown to be
close enough to the actual ones, so that they
can be used to find the asymptotics of
Im$\rho$.

First of all, we conjugate our operator $P$ by
the unitary Fourier-integral
operator,
\be
\label{conjug}
{\mathcal U}:={\mathcal F}^{-1}{\rm exp}\left[
-\frac{2i}{h}\left(
\frac13\xi_n^3+\sum_{k=1}^{n-1}
\xi_k^2\xi_n\right) \right]{\mathcal F},
\ee
where ${\mathcal F}$ stands for the usual
$h$-dependent Fourier transform. The
canonical transformation $\kappa:\; (x,\xi
)\mapsto (y,\eta )$ associated with
${\mathcal U}$ is given by,
\begin{eqnarray}
\label{TC} 
&&y_k=x_k+4\xi_k\xi_n\quad (k\leq n-1);
\nonumber\\ 
&&
y_n = x_n +2\xi^2 ;\\ 
&& \eta =\xi,
\nonumber
\end{eqnarray}
and thus, under the action of $\mathcal U$, the operator $P(x,hD_x )$ is transformed into
$\widetilde P(y,hD_y) ={\rm diag}\hskip 1pt (\widetilde P_1,\widetilde P_2) +h\widetilde R(y,hD_y)$, with diagonal principal
symbol $\widetilde p (y,\eta )={\rm
diag}\left(\widetilde p_1(y,\eta ),\widetilde
p_2(y,\eta )\right)$, given by,
$$
\widetilde p_j (y,\eta ):=p_j(y'-4\eta_n\eta', y_n -2\eta^2 ; \eta).
$$
In particular,
\begin{eqnarray}
\label{symb} 
&&\widetilde p_1(y,\eta )=-(\eta^2 -y_n+\mu );
\nonumber\\ 
&&
\widetilde p_2(y,\eta )=p_2(y,\eta )+{\mathcal
O}\left(\vert (y,\eta )\vert ^3\right);\\  
&& \widetilde
p_2(y,-\eta )=\widetilde p_2(y,\eta ).
\nonumber
\end{eqnarray}
Moreover, thanks to (\ref{ass2}), we observe that,
for any compact set $K\subset\subset\R^n$, the full symbol of 
$\widetilde P$ extends, for $x\in K$, to an holomorphic function with
respect to $\eta$ in a complex strip around $\R^n$, and that its
derivatives of any order remain ${\mathcal O}(\la\eta\ra^2)$ on this strip.

Then, except from the minus sign in front of
$\widetilde p_1$, we see on (\ref{symb}) that we are in a situation very
similar to that studied by in \cite{GrMa}
(see also \cite{Pe}), namely, the transversal crossing 
between
two potentials such that one of them admits a
non degenerate point-well, and the two corresponding classically allowed regions are disjoints.

As in \cite{GrMa}, the starting point of the
construction consists in the WKB asymptotics
given near the well $y=0$ by the method of
Helffer and Sj\"ostrand \cite{HeSj1}. More precisely,
because of the matricial nature of the
operator and the fact that $\widetilde p_1$ is
elliptic above $y=0$, one finds a formal
solution $w_1$ of $\widetilde Pw_1=\rho w_1$ of the form,
\be
\label{BKWinit}
w_1(y;h)=\left(\begin{array}{clcr} 
ha_1(y,h) \\ 
a_2(y,h)\end{array}\right) e^{-\widetilde\varphi_2
(y)/h}
\ee
where $\widetilde\varphi_2 (y)$ is the (real and non
negative on the real domain) solution of $ \widetilde
p_2(y,i\nabla\widetilde \varphi_2 (y))=0\; ,\;
\widetilde\varphi_2 (0)=0$, defined by,
$$
\kappa 
(\Gamma_+)=\{ (y,i\nabla\widetilde\varphi_2(y))\; ;\; y {\rm\;
close\; to\;} 0\}.
$$
(In particular, we also have $\nabla\widetilde\varphi_2(0)=0$.) Moreover, the two $a_j$'s ($j=1,2$) are classical symbols of order
$0$ in $h$, that is,  they are formal series
in $h$ of the form,
\be
\label {AE1}
a_j(y,h) = \sum_{k=0}^\infty h^ka_{j,k}(y)
\ee
with $a_{j,k}$ smooth near 0, and $a_2$ is
elliptic in the sense that $a_{2,0}$ never
vanishes. 

It is easy to see that the above
constructions can be continued along the
integral curves of the (real) vector field
$\partial_\eta\widetilde
p_2(y,i\nabla\varphi_2)D_y=
(2\nabla\varphi_2(y)+
{\mathcal O}(y^2)).\nabla_y$,
as long as
$\widetilde p_1 (y,i\nabla 
\widetilde\varphi_2  (y))$ does not vanish.
We set,
\begin{eqnarray}
\label{inter} 
&&\Sigma_2^+=\{ (y, i\nabla
\widetilde\varphi_2(y))\; ;
\;  y\in {\mathcal V}\}
\nonumber\\ 
&&
\Gamma = \Gamma_\mu = \Sigma_2^+\cap \widetilde
p_1^{-1}(0)
\end{eqnarray}
where ${\mathcal V}$ is a fixed
($\mu$-independent) sufficiently small
neighbourhood of 0 in $\C^n$. Then $\Gamma$
is a complex $\C$-isotropic submanifold
of $\C^{2n}$ (that is, with respect to the
complex symplectic form
$\sigma = \sum d\eta_k \wedge {}dy_k$) of
dimension $n-1$, and standard arguments of
symplectic geometry show that if ${\mathcal T}$ is
a sufficiently small $\mu$-independent
neighborhood of 0 in
$\C$, then the set:
\be
\label{Sigma1}
\Sigma_1=\{ \exp tH_{\widetilde
p_1}(y,\eta )\; ; (y,\eta )\in \Gamma\; ,\;
t\in {\mathcal T}\}
\ee
is a complex Lagrangian manifold. Since
the projection
$\Pi_y\Gamma$ of
$\Gamma$ on  the
base $\{\eta =0\}$ has an equation of the form
$y_n= f(y')=\mu -\delta\mu^2 -f_1(y')+{\mathcal O}(\vert
y'\vert^3 +\mu^3)$ with $f_1(0)=0$, $f_1(y')\geq \delta'|y'|^2$ (where
$y'=(y_1,...,y_{n-1})$ and $\delta, \delta' >0$ do
not depend on $\mu$), an easy
computation shows that, for
$\mu >0$ small enough,
$\Pi_yH_{\widetilde p_1}$ keeps transversal to 
$\Pi_y\Gamma$, as long as $|y'| << 1$. Moreover, the only points of $\Sigma_1$ where $H_{\widetilde
p_1}$ becomes vertical are above $\{ y_n =\mu\}$, and $\Pi_y\Gamma\cap \R^n
\subset
\{ y_n <\mu\}$. 
As a consequence, there exists
a complex neighborhood ${\mathcal W}_\mu$ of
$\Gamma$ such that $\Pi_y{\mathcal W}_\mu\cap\R^n$ is of the form,
$$
\Pi_y{\mathcal W}_\mu\cap\R^n=\{ -\delta_0 <y_n<g(y')\, , \, |y'|<\delta_0\},
$$
with $\delta_0 >0$ independent of $\mu$, and $g(y') =\mu +{\mathcal O} (|y'-y'_\mu|^2)$ (for some $y'_\mu ={\mathcal O}(\mu^2)\in\R^n$), and such that 
$\Sigma_1\cap {\mathcal W}_\mu$ projects
bijectively on the base.
This
implies the existence of an analytic function
$\widetilde
\varphi_1$ defined on $\Pi_y{\mathcal W}_\mu$, such that,
\be
\label{phi1}
\Sigma_1\cap {\mathcal W}_\mu=\{ (y, i\nabla
\widetilde\varphi_1(y))\; ;
\;  y\in \Pi_y{\mathcal W}_\mu \}.
\ee
Moreover, $\widetilde\varphi_1$ can be normalized
by requiring that $\widetilde\varphi_1=\widetilde\varphi_2$ on
$\Pi_y\Gamma$, so that we then have,
\be
\widetilde\varphi_1=\widetilde\varphi_2\quad {\rm
and}\quad
\nabla\widetilde\varphi_1 = \nabla\widetilde\varphi_2
\quad {\rm on}\quad \Pi_y\Gamma .
\ee
Using that $\widetilde\varphi_2$ is real on
the real, and $\widetilde p_1(y,\eta )$ is even with
respect to $\eta$,  we also see that,
\be
\widetilde \varphi_1 \quad {\rm is\; real\; on\;
the\; real}.
\ee
Finally, one can easily check that the quantity,
$$
S(\mu):= \widetilde\varphi_1 (y'_\mu,\mu),
$$
is ${\mathcal O}(\mu^2)$, and that, on $\Pi_y{\mathcal W}_\mu\cap\R^n$, one has $\widetilde\varphi_1 \leq \widetilde\varphi_2$. Moreover, since $\widetilde\varphi_2(y)\geq |y|^2/C$ with some constant $C>0$, for $\mu$ small enough, we have
$$
\widetilde\varphi_2(y)\sim 1 >>S(\mu) \mbox{ on } \partial\left( \Pi_y{\mathcal W}_\mu\cap\R^n\right)\cap\{y_n < g(y')\}.
$$
Then, one can deduce,
$$
\widetilde\varphi_1(y)>S(\mu) \mbox{ on } \partial\left( \Pi_y{\mathcal W}_\mu\cap\R^n\right)\cap \{ f(y')\leq y_n <g(y')\}.
$$
On the other hand,
one can see as in \cite{HeSj2}, Lemma 10.2, that, for $y'\not= y'_\mu$, one has,
$$
\widetilde\varphi_1(y', g(y')) >S(\mu).
$$
In addition, a local study of $H_{p_1}$ shows that $(\Pi_y\Sigma_1)\cap\R^n$ has a contact of order exactly 2 with $\{y_n=\mu\}\cap \R^n$ at $(y'_\mu,\mu)$, and one can deduce the existence of a constant $C=C(\mu)>0$ such that $C^{-1}|y'-y'_\mu|^2\leq \mu -g(y')\leq C|y'-y'_\mu|^2$.
\begin{remark}\sl Let us observe that, by standard symplectic arguments, the quantity $S(\mu)$ is nothing but the constant $S$ defined in (\ref{actionS}).
\end{remark}

\vskip 0.2cm
As in \cite{GrMa}, the extension of the WKB constructions
(\ref{BKWinit}) accross $\Gamma$ are obtained by taking a formal ansatz of the form,
\be
\label{BKWcross}
w_2(y;h)=\sum_{k\geq 0}h^k\left( \alpha_k(y,h)
Y_{k,0}\left(\frac{z(y)}{\sqrt h}\right)
+{\sqrt h}\beta_k(y,h)Y_{k,1}
\left(\frac{z(y)}{\sqrt h}\right)\right)
e^{-\psi (y)/h},
\ee
where,
\be
\alpha_k(y,h)=\left(\begin{array}{clcr} 
h\alpha_{k,1}(y,h) \\ 
\alpha_{k,2}(y,h)\end{array}\right)
 \quad ;\quad
\beta_k(y,h)=\left(\begin{array}{clcr} 
\beta_{k,1}(y,h) \\ 
h\beta_{k,2}(y,h)\end{array}\right) ,
\ee
$\alpha_{k,j}$ and $\beta_{k,j}$ are formal
symbols of the form,
\be
\label{symbcross}
\sum_{l\geq 0}\sum_{m=0}^l h^l({\rm ln}h)^m
\gamma^{l,m}(y)
\ee
(with $\gamma^{l,m}$ analytic near $\Gamma$), the function $\psi$ is defined as,
\be
\psi (y):= \frac12\left(\widetilde\varphi_2(y)+
\widetilde\varphi_1(y)\right) ,
\ee
the function $z(y)$ is the only analytic function defined near $\Gamma$ such that,
\begin{eqnarray}
\label{z}
&& z(y)^2=2\left(\widetilde\varphi_2(y)-
\widetilde\varphi_1(y)\right) \nonumber\\
&& z(y)<0\quad {\rm on}\quad
\Gamma_-:=\{y_n<f(y')\; ; \; y\in \Pi_y
{\mathcal W}_\mu\cap\R^n\},
\end{eqnarray}
and for any $k\geq 0$ and $\varepsilon \in\C$, the function $Y_{k,\varepsilon}$ is the so-called Weber function, defined by,
\be
Y_{k,\varepsilon}(z)=\partial_\varepsilon^k
Y_{0,\varepsilon}(z)
\ee
where $Y_{0,\varepsilon}$ is the unique 
entire function with respect to $\varepsilon$ and $z$,
solution of the Weber equation,
\be
Y_{0,\varepsilon}''+(\frac12 -\varepsilon
-\frac{z^2}4)Y_{0,\varepsilon}=0
\ee
such that,  for $\varepsilon >0$, one has,
\be
Y_{0,\varepsilon}(z)\sim
\frac{\sqrt {2\pi}}{\Gamma (\varepsilon )}
e^{z^2/4}z^{\varepsilon -1}
\qquad (z\rightarrow +\infty ).
\ee
Then, a resummation of
(\ref{BKWcross}) is possible up to an error
of order ${\mathcal O}(h^\infty e^{-  \phi
/h})$, with,
\begin{eqnarray}
\label{phi}
&& \phi = \widetilde\varphi_2\quad {\rm in} \quad
\Gamma_- \, ;\nonumber\\
&& \phi = \widetilde\varphi_1= \widetilde\varphi_2\quad {\rm on}
\quad\Gamma \, ;\\
&& \phi = \widetilde\varphi_1\quad {\rm in} \quad
\Gamma_+:=\{ y_n> f(y')\; ; \; y\in \Pi_y
{\mathcal W}_\mu\cap\R^n\} .\nonumber
\end{eqnarray}
Actually, looking a little bit more carefully at the results of \cite{Pe} (in particular Proposition A.2 and the proof of Lemma 3.3), we see that the resummation can actually be done in a small enough complex neighborhood $\Omega$ of $\Gamma$, under the condition that $\Omega\subset \{  y\in\C^n\,;\, |\Im z(y)| \leq C\sqrt h\}$ (where $C>0$ is an arbitrary constant). In that case, the resummation is valid up to ${\mathcal O}(h^\infty e^{- \Re \phi
/h})$ error-terms. In what follows, we will use the notations,
$$
\psi_\delta (y):= \psi (y+i\delta)\quad ;\quad \phi_\delta (y):= \phi (y+i\delta),
$$
where $\delta =\delta (h) ={\mathcal O}(\sqrt h)\in \R^n$.
\vskip 0.2cm
For $\nu_0>0$ and $g\in C^\infty (\R^{2n}\; ;
\R_+)$,
we denote by
$S_{\nu_0}(g(x,\xi ))$ the set of (possibly $h$-dependent)
functions $p\in C^\infty (\R^{2n})$ that
extend to holomorphic functions with respect to
$\xi$  in the strip,
$$
{\mathcal A}_{\nu_0}:=\{(x,\xi )\in  \R^n\times\C^n\; ;\; 
\vert\Im\xi\vert < \nu_0\},
$$
and such that,
 for all $\alpha\in\N^{2n}$, one has,
\be
\partial^\alpha p(x,\xi )
={\mathcal O}(g(x,{\rm Re}\xi )),
\ee
uniformly with respect to $(x,\xi )\in
{\mathcal A}_{\nu_0}$ and $h>0$ small enough. We also denote by $S_0(g)$ the analogous  space of smooth symbols obtained by substituting $\R^{2n}$ to ${\mathcal A}_{\nu_0}$.
\vskip 0.2cm
If $a$ and $b$ are (scalar) formal
symbols of the type (\ref{symbcross}), and $k\in\N$, we
set,
\be
I_k(a,b)(y\; ;\; h)= a(y\; ;\;  h)Y_{k,0}\left(
\frac{z(y)}{\sqrt h}\right) +b(y\; ;\;  h)Y_{k,1}\left(
\frac{z(y)}{\sqrt h}\right).
\ee 
For $M\in\Z$ and $\Omega\subset\R^n$ open,
we also consider the space of sequences of formal
symbols,
\begin{eqnarray*}
S^M(\Omega ) := \{ a = (a_k)_{k\in\N}\; ;
a_k (y,h)=\sum_{l=-M}^\infty\sum_{m=0}^l
h^{l}({\rm ln}h)^m\gamma_k^{l,m}(y)\; ;\\
\gamma_k^{l,m}\in C^\infty (\Omega )\} .
\end{eqnarray*}
and, for $a,b\in S^M(\Pi_y
{\mathcal W}_\mu )$, we set,
\be
I(a,b)e^{-\psi_\delta /h}:=
\sum_{k\geq 0}h^k I_k(a_k,{\sqrt{h}}b_k)
e^{-\psi_\delta /h} .
\ee

Then, the arguments of \cite{GrMa}, Section4, give a natural way to make act ${\rm Op}_h^W(p)$ on expressions of the type,
\be
\label{typew}
w=\left(\begin{array}{clcr} 
I(h\alpha_1,\beta_1) \\ 
I(\alpha_2,h\beta_2)\end{array}\right)
e^{-\psi_\delta /h},
\ee
where $\alpha_j=(\alpha_{j,k})_{k\geq 0}$
and $\beta_j=(\beta_{j,k})_{k\geq 0}$ are in
$S^0(\Pi_y{\mathcal
W}_\mu )\;$ ($j=1,2$). Moreover, this action is well defined up to error terms that are exponentially smaller than $e^{\Re \phi_\delta /h}$.
\vskip 0.2cm
%%%%%%%%%%%%%%%%%%%%%%%%

Now, denote by $p$
the full symbol of $ P$, and recall the definition $\widetilde P := {\mathcal U}P{\mathcal U}^{-1}$, where ${\mathcal U}$ is given in (\ref{conjug}). In order to be able to make act $\widetilde P$, too,  on such expressions, we first show,
\begin{lemma}\sl 
\label{lemmfond}
Let $c=c(x)$ satisfying (\ref{ass2}). Then,
for any $\chi =\chi (y_n)\in C_0^\infty (\R)$, there exists $\nu_0>0$ such that the operator $\chi \, {\mathcal U}\, c\, {\mathcal U}^{-1}$ can be written as,
$$
\chi \, {\mathcal U}\, c\, {\mathcal U}^{-1} ={\rm Op}_h^W (\check c + r),
$$
with $\check c\in S_{\nu_0}(1)$, $r\in S_0(1)$,  and, 
$$
r ={\mathcal O}(e^{-\nu_0 /h}) \mbox{ in }  S_0(1).
$$
Moreover, the symbol $\check c$ satisfies,
$$
\check c (y,\eta ) = \chi (y_n) c(y'-4\eta_n\eta', y_n-2\eta^2 ) +{\mathcal O}(h) \mbox{ in } S_{\nu_0}(1).
$$
\end{lemma}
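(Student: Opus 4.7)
The plan is to exploit that $\mathcal U = e^{-i\Phi(hD)/h}$ with $\Phi(\xi) = \frac23\xi_n^3 + 2\sum_{k<n}\xi_k^2\xi_n$, so that conjugation by $\mathcal U$ admits a purely algebraic description on position operators. From the elementary identity
\[
[\hat x_k, e^{i\Phi(hD)/h}] = -\Phi_{\xi_k}(hD)\, e^{i\Phi(hD)/h},
\]
and the commutation of $\Phi_{\xi_k}(hD)$ with $e^{i\Phi(hD)/h}$ (both being functions of $hD$ only), one gets the exact formula $\mathcal U \hat x_k \mathcal U^{-1} = \hat x_k - \Phi_{\xi_k}(hD) =: X_k$, which in our case reads $X_k = \hat x_k - 4(hD)_k(hD)_n$ for $k<n$ and $X_n = \hat x_n - 2(hD)^2$. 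The Weyl symbol of $X_k$ is the linear-in-$x$ function $f_k(x,\xi) := x_k - \Phi_{\xi_k}(\xi)$; a short Moyal calculation, based on the symmetry $\partial_j\Phi_{\xi_k} = \partial_k\Phi_{\xi_j}$ and on $\partial_x^2 f_k \equiv 0$, shows that the $X_k$ commute pairwise and that for any polynomial $P$ the Weyl symbol of $\mathcal U P(\hat x)\mathcal U^{-1}$ is exactly $P(y - \nabla\Phi(\eta)) = P(y' - 4\eta_n\eta', y_n - 2\eta^2)$.

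The second step is to pass from polynomial to analytic $c$. Since the pointwise formula $\check c^{(0)}(y,\eta) := c(y' - 4\eta_n\eta', y_n - 2\eta^2)$ only involves the values of $c$ on the image of the shift, one can either represent $c$ by a Cauchy-type integral based on the analyticity domain $\widetilde{\mathcal S}_\delta$, or expand $c$ by a Taylor formula around a classical base point and apply the exact polynomial identity to each term. Left-multiplication by $\chi(y_n)$ is then handled by the Moyal calculus, with $\chi \#_h \check c^{(0)} = \chi\,\check c^{(0)} + \mathcal O(h)$ since $\chi$ is smooth and compactly supported, which produces the announced principal symbol $\chi(y_n) c(y' - 4\eta_n\eta', y_n - 2\eta^2)$ together with a formal $h$-asymptotic series of subprincipal corrections.

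The main difficulty — and the reason the lemma requires care — is to verify that $\check c$ actually lies in $S_{\nu_0}(1)$ for some $\nu_0 > 0$, and that this formal expansion can be resummed up to a remainder $r \in S_0(1)$ of size $\mathcal O(e^{-\nu_0/h})$. The obstruction is that $\nabla\Phi$ is quadratic in $\eta$, so the imaginary part of the shift $(y' - 4\eta_n\eta', y_n - 2\eta^2)$ grows linearly in $|\Re\eta|$ for $|\Im\eta|$ fixed, whereas a standard strip of holomorphy would only tolerate a bounded imaginary part. This is precisely where hypothesis (\ref{ass2}) comes in: the parabolic domain $\widetilde{\mathcal S}_\delta$ allows an imaginary shift of order $\delta\jap{\Re x_n}^{1/2}$, and the cutoff $\chi(y_n)$ keeps $y_n$ bounded, so that the real part of the shifted $x_n$-argument behaves like $-2|\Re\eta|^2$ for large $|\Re\eta|$; the matching square-root growth of the admissible imaginary shift then accommodates the linear-in-$|\Re\eta|$ imaginary-part growth of the actual shift, provided $\nu_0$ is chosen small enough in terms of $\delta$. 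The exponentially small remainder $r$ is obtained at the end by a Borel-type resummation of the formal Moyal series in the analytic symbol class, in the spirit of Sj\"ostrand's analytic pseudodifferential calculus: the analytic regularity of $c$, as opposed to mere smoothness, is what turns the formal $\mathcal O(h^\infty)$ asymptotic identity into an actual $e^{-\nu_0/h}$ bound.
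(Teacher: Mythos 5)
Your plan is structurally different from the paper's, and it contains both a concrete technical error and several gaps that would need substantial work to close.

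\textbf{The exactness claim for polynomials is false.} You assert that for any polynomial $P$ the Weyl symbol of $\mathcal U\, P(\hat x)\,\mathcal U^{-1}$ is \emph{exactly} $P\bigl(y-\nabla\Phi(\eta)\bigr)$. Take $n=1$, $\Phi(\xi)=\tfrac23\xi^3$, $P(x)=x^3$, so $X=\hat x-2(hD)^2$ has Weyl symbol $f(y,\eta)=y-2\eta^2$. The Moyal product $f^2\#^W f$ terminates, but the order-two term does not vanish:
$(\partial_y\partial_{\eta'}-\partial_\eta\partial_{y'})^2(f^2\otimes f)\big|_{\rm diag}=\partial_y^2(f^2)\cdot\partial_{\eta}^2 f=2\cdot(-4)=-8$,
so $f^2\#^W f = f^3 + \tfrac1{2!}\bigl(\tfrac{ih}{2}\bigr)^2(-8)=f^3+h^2$, and the Weyl symbol of $X^3=\mathcal U\,\hat x^3\,\mathcal U^{-1}$ is $(y-2\eta^2)^3+h^2$, not $(y-2\eta^2)^3$. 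The symmetry $\Phi_{\xi_j\xi_k}=\Phi_{\xi_k\xi_j}$ kills the degree-two Moyal term (hence the commutation and the exactness for $\deg P\le 2$), but for $\deg P\ge 3$ the third derivative $\Phi_{\xi_n^3}=4\ne 0$ produces nonzero $h^2$-corrections. Since you then propose to expand $c$ in a Taylor series and apply the exact polynomial identity termwise, the error propagates into the whole scheme.

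\textbf{The passage from polynomials to analytic $c$ is under-specified and cannot go through Taylor expansion.} A Taylor series of $c$ around a single base point does not converge on the entire image of the shift, and under the growth condition (\ref{ass2}) the Taylor coefficients are not summable uniformly. The Cauchy-integral alternative is more promising but is not worked out, and it is genuinely nontrivial: the Weyl symbol of the joint resolvent $(z-X)^{-1}$ of the commuting tuple $X_1,\dots,X_n$ is not $(z-f(y,\eta))^{-1}$, and controlling it in $S_{\nu_0}(1)$ with uniform Cauchy-type bounds over the parabolic domain is essentially the whole content of the lemma. Similarly, the step ``$\chi\#_h\check c^{(0)}=\chi\,\check c^{(0)}+\mathcal O(h)$'' is not a direct application of standard Moyal calculus, because $\check c^{(0)}(y,\eta)=c(y'-4\eta_n\eta',y_n-2\eta^2)$ has $\eta$-derivatives with anisotropic and $\eta$-dependent growth; checking that the composition stays in the required class is part of what needs to be proved.

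\textbf{The remainder $r$ is not produced.} You invoke ``Borel-type resummation'' and ``Sj\"ostrand's analytic pseudodifferential calculus'' to convert the formal $h$-series into an $\mathcal O(e^{-\nu_0/h})$ error. That is indeed the philosophy, but the lemma does not assert a formal identity — it asserts a concrete decomposition $\chi\,\mathcal U\, c\, \mathcal U^{-1}={\rm Op}_h^W(\check c+r)$ with quantified symbol estimates, and the exponentially small $r$ has to come from somewhere explicit. The paper obtains it by writing the left symbol of $\chi\,\mathcal U\, c\, \mathcal U^{-1}$ as an oscillatory integral, performing a contour deformation in $x$ into the parabolic domain $\widetilde{\mathcal S}_\delta$, then splitting the $\xi$-integral with a cutoff $\chi_0(\xi)$ (giving $r_1=\mathcal O(e^{-\delta'/h})$ by nonstationary phase away from $\xi=0$), and finally splitting the remaining integral into three regions $|x_n|\le 2C$, $2C\le|x_n|\le 2C+\jap\eta^2$, $|x_n|\ge 2C+\jap\eta^2$, each handled by its own $\xi$-contour deformation scaled appropriately. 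It is in that last split that the cutoff $\chi(y_n)$ plays its role (it forces $|\Re F_n|\sim\jap\eta^2$ in the first two regions) and where (\ref{ass2}) is used quantitatively. Your geometric insight — that the cutoff plus the parabolic analyticity domain compensate the quadratic growth of the $\eta$-shift — correctly identifies why the result is true, but the proposal never actually performs the estimate it describes.

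In short: the algebraic route via commuting $X_k$ is attractive but begins with a false exactness claim, and all the hard analytic content (uniform Cauchy estimates on the parabolic domain, control of the Moyal corrections, production of the $e^{-\nu_0/h}$ tail) is deferred to citations rather than carried out. The paper's proof works directly on the explicit oscillatory-integral representation of the left symbol and is self-contained.
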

\begin{proof} 
A straightforward computation leads to,
\be
\label{straightf}
\chi \, {\mathcal U}\, c\, {\mathcal U}^{-1} ={\rm Op}^L_h(a),
\ee
with,
\begin{eqnarray*}
a(y,\eta )&=&\frac1{(2\pi h)^n}\int \chi (y_n)e^{i(y-x)(\xi-\eta)/h}\times \\
&&\times c(x' -2\xi_n(\xi'+\eta'),x_n-\frac23(\eta_n^2+\eta_n\xi_n+\xi_n^2) -2|\eta'|^2)dxd\xi.
\end{eqnarray*}

In this integral, we can perform the change of contour of integration,
$$
\R^n \ni x\mapsto x-i\delta \frac{\xi -\eta}{\la \xi -\eta\ra}\in\C^n,
$$
where $\delta >0$ is a small enough constant. Setting $f=(f',f_n):= \delta (\xi -\eta)/\la \xi -\eta\ra$ and,
$$
F(x,\eta,\xi):=(x' -if'-2\xi_n(\xi'+\eta'),x_n-if_n-\frac23(\eta_n^2+\eta_n\xi_n+\xi_n^2) -2|\eta'|^2),
$$
we obtain,
\begin{eqnarray*}
a(y,\eta )&=&\frac1{(2\pi h)^n}\int \chi (y_n)e^{i(y-x)(\xi-\eta)/h - \frac{\delta}{h} |\xi -\eta|^2/\la\xi -\eta\ra} c(F(x,\eta,\xi))dxd\xi\\
&=&\frac1{(2\pi h)^n}\int \chi (y_n)e^{i(y-x)\xi/h - \frac{\delta}{h} |\xi |^2/\la\xi \ra} c(F(x,\eta,\xi+\eta))dxd\xi,
\end{eqnarray*}
and thus, for any fix $\chi_0\in C_0^\infty (\R^n;[0,1])$ such that $\chi_0=1$ near 0, 
\begin{eqnarray*}
a(y,\eta )=\frac1{(2\pi h)^n}\int\chi (y_n)\chi_0(\xi)e^{i(y-x)\xi/h - \frac{\delta}{h} |\xi |^2/\la\xi \ra} c(F(x,\eta,\xi+\eta))dxd\xi \\
+ r_1(y,\eta),
\end{eqnarray*}
where, by means of integrations by parts in $\xi$,  the symbol $r_1$ can be seen to satisfy,
$$
\partial^\alpha r_1 ={\mathcal O}(e^{-\delta' /h})  \mbox{ in } S_0(1),
$$
for some $\delta' >0$ independent of $\mu$ and $h$. 

Then, we split the previous integral into,
$$
I_1+I_2+I_3:=\int_{|x_n|\leq 2C} +\int_{2C\leq |x_n|\leq 2C +\la\eta\ra^2} + \int_{|x_n|\geq 2C +\la\eta\ra^2},
$$
where $C>0$ is chosen sufficiently large in order to have $|y_n-x_n|\geq |x_n|/2$ when $|x_n|\geq 2C$ and $y_n\in \mbox{Supp }\chi$.

On $\{|x_n|\leq 2C +\la\eta\ra^2\}\cap {\rm Supp} \chi_0$, we observe that $|\Re F_n| \sim  \la\eta\ra^2$. Therefore, if $\chi_1\in C_0^\infty(\R^n;[0,1])$ is such that $\chi_1=1$ near 0 and Supp$\chi_1 \subset \{ \chi_0=1\}$, in this region we can perform the change of contour of integration,
$$
\R^n \ni\xi \mapsto \widetilde \xi:=\xi + i\delta' \chi_1(\xi)\frac {y-x}{\la y-x\ra},
$$
that gives us,
\begin{eqnarray*}
&& I_1=\chi (y_n)\int_{|x_n|\leq 2C} {\mathcal O}(\chi_0(\xi)h^{-n}e^{-\frac{\delta'\chi_1(\xi)|y-x|^2}{2h\la y-x\ra} -\frac{\delta'|\xi|^2}{h\la\xi\ra}})c(F(x,\eta,\widetilde\xi +\eta))dxd\xi ;\\
&& I_2=\chi (y_n)\int_{2C\leq |x_n|\leq 2C +\la\eta\ra^2} {\mathcal O}(\chi_0(\xi)h^{-n}e^{-\frac{\delta'\chi_1(\xi)|y-x|^2}{2h\la y-x\ra} -\frac{\delta'|\xi|^2}{h\la\xi\ra}})\\
&& \hskip 8cm\times c(F(x,\eta,\widetilde\xi +\eta))dxd\xi,
\end{eqnarray*}
and thus, using Assumption (\ref{ass2}), we see that, for $\nu_0>0$ sufficiently small, we have $I_1\in S_{\nu_0}(1)$. Moreover, a stationary-phase expansion shows that,
$$
I_1 =\chi(y_n)c(F(y,\eta,\eta))+{\mathcal O}(h) =\chi(y_n)c(y'-4\eta_n\eta', y_n-2\eta^2)+{\mathcal O}(h),
$$
where the estimates ${\mathcal O}(h)$ hold in the space $S_{\nu_0}(1)$.

On the other hand, since $|x-y|\geq |x_n-y_n|\geq |x_n|/2$ for $|x_n|\geq 2C$ and $y_n\in \mbox{Supp }\chi$, we see that,
$$
I_2 ={\mathcal O}(e^{-\nu_1/h}) \mbox{ in } S_0(1),
$$
for some $\nu_1 >0$.

Finally, in the expression of $I_3$, we perform the change of contour of integration,
$$
\R^n \ni\xi \mapsto \hat\xi:= \xi + i\delta \chi_1(\xi)\frac {y-x}{\la y-x\ra\la \eta\ra},
$$
and since,  in the relevant region, $|y-x|\geq |x_n-y_n| \geq |x_n|/2\geq C+\la\eta\ra^2/2$, this leads to,
\begin{eqnarray*}
I_3 =\chi (y_n)\int_{|x_n|\geq 2C+\la\eta\ra^2} {\mathcal O}(\chi_0(\xi)h^{-n}e^{-\frac{\delta'\chi_1|x|}{h\la\eta\ra}-\frac{\delta'\chi_1\la\eta\ra}{h} -\frac{\delta'|\xi|^2}{h\la\xi\ra}})\\
\times V_2(F(x,\eta,\hat\xi +\eta ))dxd\xi,
\end{eqnarray*}
and thus,
$$
I_3 ={\mathcal O}(e^{-\delta' /h}) \mbox{ in } S_{0}(1).
$$
Therefore, we have proved the result of the lemma, except from the fact that we have worked with the left-quantization instead of the Weyl-one. But the passage from one quantization to the other one can be done in a standard way (see, e.g., \cite{Ma6}), and
this completes the proof of the lemma.
\end{proof}
As a corollary, and since ${\mathcal U}\,hD_x\,{\mathcal U}^{-1}=hD_y$, we immediately obtain (denoting by $ {\mathcal M}_2(E)$ the space of $2\times 2$ matrices with coefficients in $E$),
\begin{proposition}\sl Assume (\ref{ass2}), and denote by $T_\delta$ the complex translation given by $T_\delta u(y):=u(y+i\delta)$. Then,
for any $\chi =\chi (y_n)\in C_0^\infty (\R)$, there exists $\nu_0>0$ (independent of $\mu$) such that the operator $\chi \widetilde P^\delta:= \chi T_\delta \widetilde PT_\delta^{-1}$ can be written as,
$$
\chi \widetilde P^\delta ={\rm Op}_h^W (\check p_\delta + r_0),
$$
with $\check p_\delta\in {\mathcal M}_2\left( S_{\nu_0}(\la\eta\ra^2)\right)$, $r_0\in {\mathcal M}_2\left( S_0(\la\eta\ra^2)\right)$,  and, 
$$
r_0 ={\mathcal O}(e^{-\nu_0 /h}) \mbox{ in } {\mathcal M}_2\left( S_0(\la\eta\ra^2)\right).
$$
Moreover, the symbol $\check p_\delta$ satisfies,
$$
\check p_\delta (y,\eta ) = \chi (y_n) p(y'+i\delta'-4\eta_n\eta', y_n+i\delta_n-2\eta^2 ; \eta ) +{\mathcal O}(h) \mbox{ in } {\mathcal M}_2\left( S_{\nu_0}(\la\eta\ra^2)\right).
$$
\end{proposition}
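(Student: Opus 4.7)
My plan is to reduce the proposition to repeated applications of Lemma \ref{lemmfond}, exploiting the fact that $P$ decomposes into pieces whose non-trivial behavior under conjugation by $\mathcal U$ is entirely captured by that lemma. Write
$$
P = -h^2\Delta\,{\bf I}_2 + {\rm diag}(x_n-\mu,\tau V_2(x)) + hR(x,hD_x).
$$
Since the canonical transformation (\ref{TC}) preserves the dual variable ($\eta=\xi$), we have $\mathcal U\,hD_x\,\mathcal U^{-1}=hD_y$, so the Laplacian term conjugates to $-h^2\Delta_y\,{\bf I}_2$, whose Weyl symbol is just $\eta^2\,{\bf I}_2$. Each first-order differential operator $r_{j,k}(x,hD_x)$ splits as a finite sum $\sum_l c_{j,k}^l(x)\,hD_{x_l}+c_{j,k}^0(x)$; after conjugation, the derivatives become $hD_{y_l}$ while each scalar coefficient is reduced to a multiplication operator of the form $\mathcal U c\,\mathcal U^{-1}$ to which Lemma \ref{lemmfond} applies verbatim.

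Next, I multiply the resulting expression by $\chi(y_n)$ and insert, for each occurrence of $\mathcal U c\,\mathcal U^{-1}$, a slightly larger cutoff $\tilde\chi(y_n)$ equal to $1$ on ${\rm supp}\,\chi$: this legitimates the use of Lemma \ref{lemmfond} (whose statement requires a compactly supported $\chi$) and replaces each multiplication by ${\rm Op}_h^W(\check c_{j,k}^l+r_{j,k}^l)$ with $\check c_{j,k}^l(y,\eta)=\tilde\chi(y_n)c_{j,k}^l(y'-4\eta_n\eta',y_n-2\eta^2)+{\mathcal O}_{S_{\nu_0}(1)}(h)$ and $r_{j,k}^l={\mathcal O}(e^{-\nu_0/h})$ in $S_0(1)$. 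Composition with the factors $hD_{y_l}={\rm Op}_h^W(\eta_l)$, together with the polynomial part $\eta^2\,{\bf I}_2$ contributed by the Laplacian and the polynomial argument shift $y_n\mapsto y_n-2\eta^2$ occurring inside $x_n-\mu$, gives an assembled Weyl symbol of total growth ${\mathcal O}(\la\eta\ra^2)$, i.e.\ exactly in $S_{\nu_0}(\la\eta\ra^2)$, modulo an exponentially small error in $S_0(\la\eta\ra^2)$ gathering the $r_{j,k}^l$. Finally, conjugation by $T_\delta$ acts on any Weyl symbol holomorphic in $y$ by the substitution $y\mapsto y+i\delta$; since $V_2$ and the coefficients of $R$ extend holomorphically to $\widetilde{\mathcal S}_\delta$, this shift is well defined on the assembled symbol and produces exactly the announced form of $\check p_\delta$.

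The main obstacle is maintaining the holomorphic extensions and the $\la\eta\ra^2$-growth uniformly through the composition of the three operations (conjugation by $\mathcal U$, multiplication by $\chi$, conjugation by $T_\delta$). The delicate point is that the argument $y_n-2\eta^2$ fed into $V_2$ may leave the strip ${\mathcal S}_\delta$ when $\eta$ acquires a complex part, but the refined strip $\widetilde{\mathcal S}_\delta$ introduced before (\ref{ass2}) is precisely designed to permit such a shift of size $\la\Re(y_n-2\eta^2)\ra^{1/2}$, keeping all derivatives of $V_2$ and of the coefficients $c$ controlled by ${\mathcal O}(\la\Re\eta\ra^{-|\alpha|})$ after a bit of rewriting --- this is exactly the structural reason why the class $S_{\nu_0}(\la\eta\ra^2)$ arises rather than a more symmetric one. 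Once this bookkeeping is done, the passage from the left quantization produced by the proof of Lemma \ref{lemmfond} to the Weyl quantization asserted in the Proposition is a standard reduction (cf.\ \cite{Ma6}), completing the argument.
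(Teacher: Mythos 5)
Your decomposition of $P$ and reduction of the non-Laplacian, non-polynomial parts to Lemma~\ref{lemmfond} is exactly what the paper intends (the paper's ``proof'' is just the remark ``As a corollary, and since ${\mathcal U}\,hD_x\,{\mathcal U}^{-1}=hD_y$, we immediately obtain\ldots''). However, the final step --- conjugating by $T_\delta$ after you have already multiplied by $\chi(y_n)$ and assembled the Weyl symbol --- has a genuine gap. You invoke the rule that ``conjugation by $T_\delta$ acts on any Weyl symbol holomorphic in $y$ by the substitution $y\mapsto y+i\delta$,'' but the assembled symbol contains the non-holomorphic cut-off factors $\chi(y_n)$ and $\tilde\chi(y_n)$, so the substitution is not licit there. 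Worse, even if one forced it through, one would be computing $T_\delta\bigl(\chi\widetilde P\bigr)T_\delta^{-1}$ rather than the required $\chi\, T_\delta\widetilde P T_\delta^{-1}$: since $T_\delta\,\chi(y_n)\,T_\delta^{-1}=\chi(y_n+i\delta_n)$, the two operators differ, and your route would yield a shifted factor $\chi(y_n+i\delta_n)$ instead of the unshifted $\chi(y_n)$ that appears in the stated formula for $\check p_\delta$.

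The correct ordering is to conjugate by $T_\delta$ \emph{before} cutting off, which is harmless because $T_\delta$ and $\mathcal U$ commute: both are Fourier multipliers ($T_\delta={\mathcal F}^{-1}e^{-\delta\xi/h}{\mathcal F}$, $\mathcal U={\mathcal F}^{-1}e^{-\frac{2i}{h}(\frac13\xi_n^3+\sum_k\xi_k^2\xi_n)}{\mathcal F}$). Hence $\chi\,T_\delta\widetilde P\,T_\delta^{-1}=\chi\,{\mathcal U}\bigl(T_\delta P\,T_\delta^{-1}\bigr){\mathcal U}^{-1}$, and $T_\delta P\,T_\delta^{-1}$ is again an operator of the same structure with coefficients $c(x+i\delta)$, $V_2(x+i\delta)$, which still satisfy~(\ref{ass2}) because the holomorphic extension to $\widetilde{\mathcal S}_\delta$ tolerates an $\mathcal O(\sqrt h)$ imaginary shift. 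Lemma~\ref{lemmfond} then applies verbatim to these shifted coefficients, the polynomial pieces $\eta^2$, $y_n-2\eta^2$ assemble exactly as you describe, and the cutoff $\chi(y_n)$ is multiplied afterwards, producing the unshifted $\chi(y_n)$ in $\check p_\delta$. With this reordering the rest of your argument goes through.
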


Consequently, for $\mu$ sufficienly small, and up to error terms that are exponentially smaller that $e^{-\Re\phi_\delta /h}$, the operator $\widetilde P^\delta$  naturally acts 
on expressions of the type (\ref{typew}).
\vskip 0.2cm
Then, proceeding as in \cite{GrMa} (but in a quite simpler geometric situation, here), we finally obtain the existence of a 
$\mu$-independent neighborhood ${\mathcal V}$ of 0 in
$\R^n$, such that, if one sets,
\begin{eqnarray}
&& {\mathcal V}_\mu^-={\mathcal V}\cap\{y_n < f(y')\};
\nonumber\\
&&
{\mathcal V}_\mu^+={\mathcal V}\cap\{f(y') <y_n <
g(y')\},
\end{eqnarray}
then, we have,
\begin{proposition}\sl
\label{BKWtotal}
There exists a neighborhood $\Omega_\mu$ of
$\Pi_y\Gamma$ and symbols,
\begin{eqnarray}
&& a_1,a_2\in S^0({\mathcal
V}_\mu^-)\; ,\quad a_2\; 
elliptic\, ;\nonumber\\ &&
\alpha_1,\alpha_2,\beta_1,\beta_2
\in S^0(\Omega_\mu ) \; , \quad\alpha_2\; and\;
\beta_1\; elliptic\, ;\\ 
&& b_1,b_2\in S^0({\mathcal
V}_\mu^+)\; ,\quad  b_1\; elliptic,\nonumber
\end{eqnarray}
such that, if one sets,
\begin{eqnarray*}
 &&w_1=\left(\begin{array}{clcr} 
ha_1 \\ 
a_2\end{array}\right) e^{-\widetilde\varphi_2
/h}\quad ;\quad
w_2=\left(\begin{array}{clcr} 
I(h\alpha_1,\beta_1) \\ 
I(\alpha_2,h\beta_2)\end{array}\right)
e^{-\psi /h}\quad ;\\
&&
w_3={\sqrt {2\pi
h}}\left(\begin{array}{clcr}  b_1 \\ 
hb_2\end{array}\right) 
e^{-\widetilde\varphi_1
/h},
\end{eqnarray*}
then, one  formally has,
$$
\widetilde Pw_1=\rho w_1\quad ; \quad
\widetilde Pw_2=\rho w_2\quad ; \quad
\widetilde Pw_3=\rho w_3 .
$$
Moreover, if $\widetilde w_1$, $\widetilde w_2$ and
$\widetilde w_3$ denote resummations of $w_1$, $w_2$
and $w_3$ respectively, one has,
\begin{eqnarray}
&&
\widetilde w_1 - \widetilde w_2 ={\mathcal O}(h^\infty
e^{-\widetilde\varphi_2
/h})\quad in\; 
{\mathcal
V}_\mu^-\cap \Omega_\mu \\
&& \widetilde w_2 - \widetilde w_3 ={\mathcal O}(h^\infty
e^{-\widetilde\varphi_1
/h})\quad in\; 
{\mathcal
V}_\mu^+\cap \Omega_\mu .
\end{eqnarray}
\end{proposition}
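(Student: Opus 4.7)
The proof follows the three-stage strategy of \cite{GrMa}: construct $w_1$ on the well side of $\Gamma$ by the Helffer--Sj\"ostrand WKB method, resolve the crossing on $\Gamma$ via the Weber-function ansatz $w_2$, and continue past $\Gamma$ by a standard WKB ansatz $w_3$ attached to $\Sigma_1$. The preceding Proposition is the crucial enabler: since $\chi\widetilde P^\delta$ is a Weyl pseudodifferential operator with symbol in $S_{\nu_0}(\langle\eta\rangle^2)$, its action on a WKB function of the form $a(y)e^{-\psi_\delta(y)/h}$ reduces, by stationary phase in $\eta$ and modulo $\mathcal{O}(h^\infty e^{-\mathrm{Re}\,\psi_\delta/h})$, to a formal differential action: each power of $h$ produces a polynomial in derivatives of $a$ with coefficients obtained from derivatives of $\check p_\delta$ evaluated on the complex Lagrangian $\{(y,i\nabla\psi_\delta(y))\}$.

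Construction of $w_1$ on ${\mathcal V}_\mu^-$. The eikonal $\widetilde p_2(y,i\nabla\widetilde\varphi_2)=0$ holds by the very definition of $\widetilde\varphi_2$. Because $\widetilde p_1(y,i\nabla\widetilde\varphi_2(y))=|\nabla\widetilde\varphi_2(y)|^2+y_n-\mu$ stays non-zero on ${\mathcal V}_\mu^-$, the first component of $(\widetilde P-\rho)w_1=0$ can be solved algebraically order by order for $a_{1,k}$, while the second component reduces to a first-order linear transport equation along the real flow of $\partial_\eta\widetilde p_2(\cdot,i\nabla\widetilde\varphi_2)\cdot\partial_y$; the initial data at $y=0$ are fixed by the eigenfunction of the harmonic oscillator $-\Delta+\langle V_2''(0)y,y\rangle/2$ for the eigenvalue $e_1$. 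This yields $a_1,a_2\in S^0({\mathcal V}_\mu^-)$ with $a_2$ elliptic by the chosen normalization at $y=0$, valid as long as one stays away from $\Pi_y\Gamma$.

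Construction of $w_2$ and $w_3$ and matching. Substituting the Weber ansatz into $\widetilde P^\delta w_2=\rho w_2$ and using the Weber equation together with the algebraic identities for $zY_{k,\varepsilon}$ and $Y_{k,\varepsilon}'$, one reorganises the equation as a formal sum of terms of the type $I_k(\cdot,\cdot)e^{-\psi_\delta/h}$. The choices $\psi=\tfrac12(\widetilde\varphi_1+\widetilde\varphi_2)$ and $z^2/2=\widetilde\varphi_2-\widetilde\varphi_1$ are designed precisely so that the principal identity $\widetilde p_1\widetilde p_2=|\nabla\psi|^2-(z/2)^2|\nabla z|^2$ vanishes on $\Gamma$, reducing the subsequent orders to a coupled first-order transport system on the isotropic manifold $\Gamma$ for $(\alpha_k,\beta_k)$ along $H_{\widetilde p_1}$, solvable inductively; the integer index of $Y_{k,0}$ generates the logarithmic factors $(\ln h)^m$ in (4.19), exactly as in \cite{GrMa,Pe}. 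Past $\Gamma$, on ${\mathcal V}_\mu^+$, the phase is $\widetilde\varphi_1$ (solving $\widetilde p_1(y,i\nabla\widetilde\varphi_1)=0$) while $\widetilde p_2(y,i\nabla\widetilde\varphi_1)$ stays non-zero there (the two Lagrangians meet only on $\Gamma$), and $w_3$ is built as the $1\leftrightarrow 2$ dual of $w_1$. The matching (4.30)--(4.31) then follows from the classical asymptotics of $Y_{0,0}(z/\sqrt h)$ and $Y_{0,1}(z/\sqrt h)$ at $z\to\pm\infty$: the dominant pieces combine with $e^{-\psi/h}$ to reproduce the phase $\widetilde\varphi_2$ at $z\to-\infty$ and the phase $\widetilde\varphi_1$ at $z\to+\infty$, and the prefactors $h$ and $\sqrt{2\pi h}$ in $w_1$ and $w_3$ are tuned precisely so that the agreement holds modulo $\mathcal{O}(h^\infty)$ after resummation, by uniqueness of the formal WKB solutions on each side.

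The main obstacle, absent from \cite{GrMa}, is that $\widetilde P$ is now a genuine pseudodifferential operator whose symbol only extends holomorphically in $\eta$ in a thin complex strip and grows like $\langle\eta\rangle^2$. Making it act on WKB functions whose phases live on complex Lagrangians, and controlling the formal transport hierarchy up to errors $\mathcal{O}(h^\infty e^{-\mathrm{Re}\,\phi_\delta/h})$ uniformly on the complex neighborhood $\{|\mathrm{Im}\,z|\le C\sqrt h\}$ required by the Weber resummation, is the delicate point that forces the exotic symbol class $S_{\nu_0}$ and the stationary-phase bounds of Lemma 4.1. Once this framework is in place, the transport equations and their resolutions are a transcription of the geometric construction of \cite{GrMa} into the present setting.
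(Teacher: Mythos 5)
Your proposal correctly outlines the same approach as the paper: the paper itself simply invokes the construction of \cite{GrMa} once the pseudodifferential framework (Lemma~\ref{lemmfond} and the ensuing proposition on $\chi\widetilde P^\delta$) is in place, and your sketch faithfully fills in the Helffer--Sj\"ostrand WKB expansion at the well, the Weber-function crossing ansatz with $\psi=(\widetilde\varphi_1+\widetilde\varphi_2)/2$ and $z^2=2(\widetilde\varphi_2-\widetilde\varphi_1)$, the attached WKB solution on $\Sigma_1$ past $\Gamma$, and the matching via the $z\to\pm\infty$ asymptotics of $Y_{0,\varepsilon}$. One small imprecision: the ``principal identity'' you display is not well-typed (its left side depends on $\eta$, its right side does not), but the intended content---that $\psi$ and $z$ are chosen so that the determinant of the diagonal principal symbol restricted to $\{\eta=i\nabla\psi\}$ reproduces the Weber potential $z^2/4$---is the correct mechanism and matches what the GrMa/Pettersson construction uses.
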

Now, let $(\chi_1,\chi_2,\chi_3)$ be a
partition of unity adapted to $({\mathcal
V}_\mu^-,\Omega_\mu ,\widetilde {\mathcal
V}_\mu^+)$ with $\widetilde {\mathcal
V}_\mu^+\subset\subset {\mathcal
V}_\mu^+$ and $\chi_3\in C_0^\infty (
{\mathcal
V}_\mu^+)$, and
 set,
\be
\widetilde U=\sum_{j=1}^3\chi_j \widetilde w_j \in C_0^\infty ({\mathcal V}
\cap \{y_n<g(y')\}).
\ee
Then, the arguments in \cite{GrMa} show that,
\be
\widetilde P\widetilde U = \rho \widetilde U +{\mathcal
O}(h^\infty e^{-\phi /h})
\quad {\rm uniformly\; in}\;
\{ \chi_1 +\chi_2 +\chi_3 =1\} .
\ee

Finally, the caustic set ${\mathcal C}:=\{ y_n =g(y')\}$ can be crossed over, too, by using, as in 
\cite{HeSj2} Section 10, an Airy representation near $\mathcal C$, 
and by using the 
stationary phase method in ${\mathcal V}\cap\{y_n> g(y')\}$ in order to recover
 there an expression of the type ${\sqrt {2\pi
h}}\left(\begin{array}{clcr}  c_1 \\ 
hc_2\end{array}\right) 
e^{\widetilde\varphi
/h}$
where the symbol $c_1$ is elliptic and $\widetilde\varphi$ is a complex-valued
 analytic function that satisfies $(\nabla\widetilde\varphi)^2=\mu - y_n$,
 $\Re\widetilde \varphi
=Cte=S(\mu)$ on $\gamma_\mu:=\{ \exp tH_{\widetilde p_1}(y'_\mu, \mu)\, ;\, t\in \R \mbox{ small enough}\}$, $\Re\widetilde \varphi (y)\geq S(\mu) + 
\delta_1{\rm dist\,}(y,\gamma_\mu)^2$ for some $\delta_1>0$. In particular, one also has,
\be
\vert\partial_{y_n}\Im\widetilde\varphi (y'_\mu,y_n)\vert =\sqrt{y_n -\mu}.
\ee
By using an additional cut-off function $\chi_4$ near $(0,\mu)$, and 
by shrinking
a little bit (but still in a $\mu$-independent way)  $\mathcal V$ around 0,
we finally obtain a function, 
$$
{\mathbf w}=\sum_{j
=1}^4\chi_j\widetilde w_j \in
C_0^\infty ( {\mathcal V}\cap \{y_n< g(y') + 2\varepsilon\}
$$
 ($\varepsilon = 
\varepsilon (\mu) >0$),
such that, setting $\widetilde{\mathcal V}_1:= \{ \chi_1+\chi_2+\chi_3+\chi_4=1\}$ and $\widetilde{\mathcal V}_2:= \bigcup_{1\leq j\leq 4}{\rm Supp}( \chi_j)$, one has,
\begin{eqnarray}
\label{eqW1}
\widetilde P {\mathbf w} &=& \rho {\mathbf w} + {\mathcal O}(h^\infty e^{-\varphi /h})\quad{\rm on\quad}
\widetilde{\mathcal V}_1\cap \{y_n< g(y') + \varepsilon\}\, ;\\
\label{eqW2}
\widetilde P {\mathbf w} &=& \rho {\mathbf w} + {\mathcal O}(h^{-N_0} e^{-S(\mu ) /h})\quad{\rm on\quad}
\widetilde{\mathcal V}_2\cap \{y_n\geq g(y')\},
;\\
\label{eqW3}
\widetilde P {\mathbf w} &=& \rho {\mathbf w} + {\mathcal O}(h^{-N_0} e^{-\varphi /h})\quad{\rm on\quad}
\widetilde{\mathcal V}_2\backslash\widetilde{\mathcal V}_1,
\end{eqnarray}
with $\varphi = \widetilde\varphi_2$ in $ \widetilde{\mathcal V}_2\cap \{y_n \leq f(y')\}$,
$\varphi = \widetilde\varphi_1$ in $ \widetilde{\mathcal V}_2\cap \{ f(y')\leq y_n \leq g(y')\}$,
$\varphi =\Re \widetilde\varphi$ in $ \widetilde{\mathcal V}_2\cap \{ g(y')\leq y_n 
\leq g(y')+\varepsilon\}$, and $N_0\geq 0$.
\vskip 0.2cm
Observe that, by construction,  $\varphi > S(\mu)$ on $( \widetilde{\mathcal V}_2\backslash  \widetilde{\mathcal V}_1)\cap \{ y_n\leq g(y')\}$.

\section{Agmon estimates}

In order to estimate the width of the resonance, we need
to prove that the asymptotic solution ${\mathbf w}$ constructed in
 the previous section gives
a sufficiently good approximation of the true resonant state ${\mathbf u}$, specially
in the region $\{y_n > \mu\}$. 
\vskip 0.2cm
For this, we would like to adopt the usual strategy of Agmon estimates (see \cite{Ag}), 
but, since we work with operators that are not differential (even at the level of principal symbols), we have to be 
more careful. In particular, our weight functions need to be smooth (and not 
only Lipschitz like in the case of a Schr\"odinger operator). Moreover, 
these weight functions need to be very close (at least up to $O(h)$) to
the $C^{1,1}$ weight $\phi$ introduced in the previous section. This lead 
to estimates of the type $\partial^\alpha \widetilde\phi =O(h^{-(\vert\alpha\vert
 -2)_+})$ for such functions, and also estimates of the type 
$\partial_x^\alpha \partial_\xi^\beta a(x,\xi ) 
=O(h^{-(\vert\alpha\vert -1)_+})$ for the related symbols. 

Thus, we first have to establish some preliminary results for such objects.

\subsection{Preliminaries}
In the sequel, we denote by $\lambda_h$ a positive function of $h$ such that,
$$
\lambda_h\geq 1\quad ;\quad  h\lambda_h={\mathcal O}(1) \, \mbox{ as } h\rightarrow 0_+.
$$
For $\Omega\subset\R^n$, we also denote by ${\bf 1}_\Omega$ the characteristic function of $\Omega$.

\begin{proposition}\sl
\label{poidsvar}
Let  $\nu_0 >0$, $m\geq 0$, $a=a(y,\eta )\in S_{\nu_0}(\langle\eta\rangle^{2m})$, and $\Omega\subset\R^n$.
For $h>0$ small enough, let also
$\Phi_h\in C^\infty (\R^n)$ real valued, such that,
\be
\label{estreg1}
 \sup\vert\nabla\Phi_h
\vert < \nu_0,
\ee
and, for 
 any multi-index $\alpha\in\N^n$ with $|\alpha|\geq 2$,
\be
\label{estreg}
\partial^\alpha\Phi_h (y)= {\mathcal O}\left((1+\lambda_h{\bf 1}_\Omega (y) ) h^{2-\vert
\alpha\vert} \right),
\ee
uniformly for $y\in\R^n$ and $h>0$ small enough. 
Then,  for any $\widetilde\Omega\subset \R^n$ with dist$(\Omega ,\R^n\backslash \widetilde\Omega )>0$,  the operator $e^{\Phi_h /h}Ae^{-\Phi_h /h}:=e^{\Phi_h /h} {\rm Op}_h^W(a)e^{-\Phi_h /h}$ can be split into ,
\be
e^{\Phi_h /h}Ae^{-\Phi_h /h}={\rm Op}_h^W\left( a_0(y,\eta +i\nabla\Phi_h(y)
\right)
+hB+R
\ee
where  $a_0$ is the principal
symbol of  $A={\rm Op}_h^W(a)$, and the two operators $B$ and $R$ satisfy,
\begin{eqnarray}
\label{estA}
&&\vert\langle Bu,u\rangle\vert \leq C_1\left(  \Vert\langle hD_y\rangle^m u\Vert^2+\lambda_h \Vert\langle hD_y\rangle^m u\Vert^2_{L^2(\widetilde\Omega)}\right)\, ;\\
&& \vert\langle Ru,u\rangle\vert \leq C_2e^{(2\sup|\Phi| -\varepsilon_0)/h}\Vert \la hD_y\ra^m u\Vert^2,
\end{eqnarray}
where the two positive constants $C_2$ and $\varepsilon_0$ do not depend on $\Phi$, and where the estimate holds
for all $h>0$ small enough and $u\in H^m(\R^n )$.
\end{proposition}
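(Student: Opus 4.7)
The plan is to realize $e^{\Phi_h/h}Ae^{-\Phi_h/h}$ as the Weyl quantization of a shifted amplitude via a deformation of the $\eta$-contour, and then to separate the principal part from an $O(h)$ correction and an exponentially small remainder.

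Writing the Weyl kernel of $A$ as
$$Au(y) = (2\pi h)^{-n}\iint e^{i(y-y')\eta/h}\,a\bigl((y+y')/2,\eta\bigr)\,u(y')\,dy'\,d\eta,$$
and using $\Phi_h(y)-\Phi_h(y')=(y-y')\cdot\Psi(y,y')$ with $\Psi(y,y'):=\int_0^1\nabla\Phi_h(sy+(1-s)y')\,ds$, the phase of the conjugated kernel becomes $i(y-y')[\eta-i\Psi(y,y')]/h$. Since $|\Psi|\leq\sup|\nabla\Phi_h|<\nu_0$ and $a$ is holomorphic in $\eta\in\mathcal{A}_{\nu_0}$ with polynomial growth in $\Re\eta$, I would deform the $\eta$-contour from $\R^n$ to $\R^n+i\Psi(y,y')$ (inserting a Gaussian cutoff in $|\eta|$ to justify the shift despite the growth of $a$), producing the equivalent amplitude $b(y,y',\eta):=a((y+y')/2,\eta+i\Psi(y,y'))$. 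The boundary-at-infinity error from the cutoff contributes the $R$-term, exponentially small in $1/h$ relative to the trivial bound $e^{2\sup|\Phi_h|/h}$, with the margin $\varepsilon_0>0$ inherited from the holomorphy gap $\nu_0-\sup|\nabla\Phi_h|$.

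Setting $z=(y+y')/2$ and $w=y-y'$, the change of variable $t=s-\tfrac12$ gives $\Psi=\int_{-1/2}^{1/2}\nabla\Phi_h(z+tw)\,dt$, whose Taylor expansion in $t$ around $z$ has odd-in-$t$ terms vanishing under integration, so $\rho(y,y'):=\Psi-\nabla\Phi_h(z)=O(|w|^2(1+\lambda_h\mathbf{1}_\Omega(z))/h)$ by the $|\alpha|\geq 3$ case of the hypothesis on $\partial^\alpha\Phi_h$. Since the segment $(1-s)\nabla\Phi_h(z)+s\Psi$ stays in $\{|\text{Im}|<\nu_0\}$ for all $s\in[0,1]$, I may write
$$a(z,\eta+i\Psi) = a_0(z,\eta+i\nabla\Phi_h(z)) + (a-a_0)(z,\eta+i\nabla\Phi_h(z)) + i\rho\cdot\tilde a_1(y,y',\eta),$$
with $a-a_0=O(h)$ in $S_{\nu_0}(\la\eta\ra^{2m})$ and $\tilde a_1:=\int_0^1\partial_\eta a(z,\eta+i[(1-s)\nabla\Phi_h(z)+s\Psi])\,ds=O(\la\eta\ra^{2m-1})$. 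The leading amplitude depends only on $(z,\eta)$, so its quantization is exactly $\text{Op}_h^W(a_0(y,\eta+i\nabla\Phi_h(y)))$, producing the principal part of the statement.

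The two correction pieces form $hB$. The second one, $(a-a_0)=O(h)$, gives $h\,\text{Op}_h^W(h^{-1}(a-a_0)(y,\eta+i\nabla\Phi_h(y)))$, uniformly bounded on $\la hD_y\ra^m$-weighted $L^2$ by Calder\'on--Vaillancourt. The third, $i\rho\cdot\tilde a_1$, carries a factor $|w|^2$ that I convert to $\eta$-derivatives via $w_j e^{i(y-y')\eta/h}=ih\partial_{\eta_j}e^{i(y-y')\eta/h}$ and integration by parts, yielding a Weyl symbol of order $h(1+\lambda_h\mathbf{1}_\Omega)\cdot O(\la\Re\eta\ra^{2m-3})$; the pseudolocality of the Weyl kernel (exponential decay in $|y-y'|/h$ from the $\eta$-holomorphy) spreads the amplifying $\mathbf{1}_\Omega$-factor by a scale $\ll\dist(\Omega,\R^n\setminus\widetilde\Omega)$, replacing it by $\mathbf{1}_{\widetilde\Omega}$ at the operator level, and a weighted Calder\'on--Vaillancourt estimate (with $\la hD_y\ra^m$ absorbing the $\la\eta\ra^{2m}$-growth) delivers the claimed bound on $B$. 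The main technical obstacle is that the standard pseudodifferential calculus formally breaks down when $\partial^\alpha\Phi_h$ is not $O(h)$ for $|\alpha|\geq 2$: one rescues it by systematically pairing each extra derivative of $\Phi_h$ with a factor of $w$ (through the Taylor expansion of $\Psi$) that, after IBP, recovers an $h$-gain, while the relation $h\lambda_h=O(1)$ ensures that all higher-order corrections remain in $hB$-form instead of leaking into an uncontrolled $O(1)$ contribution.
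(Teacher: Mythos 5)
Your overall strategy — rewrite $\Phi_h(y)-\Phi_h(y')=(y-y')\cdot\Psi(y,y')$, shift the $\eta$-contour to $\eta+i\Psi$, Taylor-expand $\Psi$ around $z=(y+y')/2$ exploiting the odd-in-$t$ cancellation to see $\rho:=\Psi-\nabla\Phi_h(z)={\mathcal O}(|w|^2\sup|\partial^3\Phi_h|)$, and convert the $|w|^2$-loss into $h^2\partial_\eta^2$ by integration by parts — is in essence the paper's route, and the $\Psi$-identity and principal-symbol computation are exactly what the paper uses. But there is one genuine gap and one misidentification.

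The gap is the absence of a near-diagonal cutoff in $w=y-y'$, which is what actually controls the argument and produces the remainder $R$. The Taylor bound $\rho={\mathcal O}(|w|^2 h^{-1})$ is useful only for $|w|\lesssim \sqrt h$; for $|w|\gtrsim \sqrt h$ it is worse than the trivial bound $|\rho|\leq 2\nu_0$, and the same blow-up appears in every higher Taylor order $(|w|^{2k} h^{-(2k-1)})$, so the remainder at any finite order is unbounded in $w$ without a cutoff. Moreover the $\mathbf{1}_\Omega$-factor in your estimate for $\rho$ is evaluated on the whole segment $[y',y]$, not at $z$, so for $|w|$ large it is not $\mathbf{1}_\Omega(z)$ and the "spread by a scale $\ll \dist(\Omega,\R^n\setminus\widetilde\Omega)$" claim requires $|w|$ small; your appeal to "pseudolocality" is the right idea, but to make it an actual estimate one must install the damping that implements it. The paper does precisely this: it \emph{first} performs a fixed small shift $\eta\mapsto\eta+i\delta\,(y-y')/\langle y-y'\rangle$ using the $\eta$-holomorphy of $a$, which produces the damping factor $e^{-\delta|y-y'|^2/(h\langle y-y'\rangle)}$, and \emph{then} inserts a cutoff $\chi_0(y-y')$ to split near-diagonal from off-diagonal. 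The off-diagonal piece is exactly $R$, and the factor $e^{-g/h}\leq e^{-2\varepsilon_0/h}$ on $\supp(1-\chi_0)$, together with the crude bound $e^{(\Phi(y)-\Phi(y'))/h}\leq e^{2\sup|\Phi|/h}$, gives $|\langle Ru,u\rangle|\leq C e^{(2\sup|\Phi|-\varepsilon_0)/h}\|\langle hD\rangle^m u\|^2$. Only after this localization does the second shift to $\eta+i\Psi$ and the $3n\to 2n$ reduction (the content of Lemma \ref{symbOh}) take place, on a region where $|w|$ is ${\mathcal O}(1)$ and the $\mathbf{1}_\Omega$-factor only spreads to $\mathbf{1}_{\Omega_\delta}$.

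The misidentification: you attribute $R$ to a "Gaussian cutoff in $|\eta|$" whose "boundary-at-infinity error" is exponentially small, with margin from the holomorphy gap $\nu_0-\sup|\nabla\Phi_h|$. In fact, once the contour shift to $\eta+i\Psi$ is justified (and it is, as a standard oscillatory-integral identity, regardless of the polynomial growth of $a$), the Gaussian regularization can be removed with zero residual error; $R$ does not come from large $|\eta|$, and its exponential bound $e^{(2\sup|\Phi|-\varepsilon_0)/h}$ is a spatial (large-$|y-y'|$) estimate, not an $\eta$-holomorphy estimate. Finally, two small symbol-class slips: by the definition of $S_{\nu_0}(\langle\eta\rangle^{2m})$ used in the paper, all $\eta$-derivatives of $a$ are only ${\mathcal O}(\langle\Re\eta\rangle^{2m})$ (not ${\mathcal O}(\langle\Re\eta\rangle^{2m-1})$ for one derivative), so $\tilde a_1={\mathcal O}(\langle\eta\rangle^{2m})$ and the final correction symbol is ${\mathcal O}(h(1+\lambda_h\mathbf{1}_{\Omega_\delta})\langle\Re\eta\rangle^{2m})$, not ${\mathcal O}(\langle\Re\eta\rangle^{2m-3})$. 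Boundedness then follows from Proposition \ref{propA1}(iv), which is where the passage from $\mathbf{1}_\Omega$-in-symbol to $\|\cdot\|_{L^2(\widetilde\Omega)}$-in-estimate is actually effected.
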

\begin{proof} For $u\in C_0^\infty (\R^n)$, we write,
$$
e^{\Phi /h}Ae^{-\Phi /h}u(y) =\frac1{(2\pi h)^n}\int e^{i(y-y')\eta /h + (\Phi (y)-\Phi (y'))/h}
a(\frac{y+y'}2, \eta )u(y')dy'd\eta,
$$
and we first make the change of contour of integration given by,
$$
\R^n\ni\eta \mapsto \eta +i\delta\frac{y-y'}{\la y-y'\ra} \in \C^n,
$$
where $\delta >0$ is fixed small enough (independently of $\Phi$).

Then, we fix a cut-off $\chi_1\in C_0^\infty (\R^n;[0,1])$ such that  $\chi_0=1$ near 0. Using this cut-off to split the integral, and setting $f:= \delta\frac{y-y'}{\la y-y'\ra}$ and $g:=\delta\frac{|y-y'|^2}{\la y-y'\ra}$, we obtain,
$$
e^{\Phi /h}Ae^{-\Phi /h}u(y) = I_1 +I_2,
$$
with,
\begin{eqnarray*}
I_1 := \frac1{(2\pi h)^n}\int \chi_0 (y-y')e^{i(y-y')\eta /h + (\Phi (y)-\Phi ('y))/h - g/h}\\
\times a(\frac{y+y'}2, \eta +i f )u(y')dy'd\eta ;
\end{eqnarray*}
\begin{eqnarray*}
I_2 := \frac1{(2\pi h)^n}\int (1-\chi_0 (y-y'))e^{i(y-y')\eta /h + (\Phi (y)-\Phi (y'))/h - g/h}\\
\times a(\frac{y+y'}2, \eta +i f )u(y')dy'd\eta.
\end{eqnarray*}
Now, on the support of $1-\chi_0 (y-y')$, we have $g\geq 2\varepsilon_0$ for some $\varepsilon_0>0$ that does not depend on $\Phi$, and thus, by the Calder\'on-Vaillancourt theorem (see, e.g., \cite{Ma6}), we immediately obtain,
\be
|\la I_2, u\ra | \leq Ce^{(2\sup|\Phi| -\varepsilon_0)/h}\Vert \la hD_y\ra^m u\Vert^2,
\ee
where the constant $C>0$ does not depend on $\Phi$.

On the other hand, if $\delta$ has been chosen sufficiently small, the property (\ref{estreg1}) shows that, in $I_1$, we can make the change of contour of integration given by,
$$
\R^n\ni \eta \mapsto \eta +i\Psi (y,y'),
$$
where the smooth vectorial function $\Psi (y,y')$ is defined by the identity: $\Phi (y)-\Phi (y') = (y-y')\Psi (y,y')$. Then, denoting by ${\rm Op}_h$  the semiclassical quantization of symbols depending on
$3n$ variables (see e.g. \cite{Ma6} Section 2.5), we obtain,
$$
I_1 = {\rm Op}_h\left(e^{-g/h}\chi_0(y-y')a(\frac{y+y'}2, \eta +i\Psi (y,y') +if)\right)u (y).
$$
Next, passing to 
symbols depending
on $2n$ variables (see e.g. \cite{Ma6} Theorem 2.7.1), this gives,
$$
I_1 = {\rm Op}_h^W( b)u (y)
$$
with $b$  given by the oscillatory integral,
$$
b(y,\eta )= \frac1{(2\pi h)^n}\int e^{i(\eta'-\eta )\theta /h-g_1/h}\chi_0(\theta )a(y, \eta'+i\Psi (y+\frac{\theta}2, y-\frac{\theta}2)+if_1)d\eta'd\theta,
$$
where we have used the notations: $g_1:= \delta |\theta|^2/\la \theta\ra$; $f_1:= \delta \theta/\la \theta\ra$.
Then, we need,
\begin{lemma} \sl
\label{symbOh} For any $\delta>0$, the symbol $b$ can be split into,
$$
b(y,\eta) = a(y, \eta +i\nabla\Phi (y))+ \widetilde b(y,\eta),
$$
where, for any $\alpha,\beta\in\N^n$, $\widetilde b$ satisfies,
$$
\partial_y^\alpha\partial_\eta^\beta \widetilde b (y,\eta)={\mathcal O}\left( (1+\lambda_h{\bf 1}_{\Omega_\delta}(y)) h^{1-|\alpha|}\la \eta\ra^{2m}\right),
$$
uniformly for $h>0$ small enough and $(y,\eta)\in\R^{n}\times \C^n$, $|\Im \eta|$ small enough. Here, we have set,
$$
\Omega_\delta := \{ x\in\R^n\, ;\, {\rm dist}(x,\Omega )\leq \delta\}.
$$
\end{lemma}
\begin{proof} see Appendix \ref{appB}.
\end{proof}

This means that $h^{-1}\la\xi\ra^{-2m}\widetilde b$ belongs to the space of symbols $\widetilde S_{0}(\lambda_h, \Omega_\delta)$ introduced in Appendix \ref{appA}. Therefore, applying Proposition \ref{propA1} (iv), we deduce that,
for any $\delta>0$, one has,
\be
\label{3}
\vert \langle {\rm Op}_h^W( \widetilde b)u, u\rangle\vert =
{\mathcal O}(h\lambda_h)\Vert \langle hD_y\rangle^m u\Vert_{L^2(\Omega_\delta)}^2+ {\mathcal O}(h)\Vert \langle hD_y\rangle^m u\Vert^2,
\ee 
uniformly as $h\rightarrow 0$
 (see e.g. \cite{Ma6} Exercise 2.10.15). Moreover, one also has,
$$
a(y,\eta +i\nabla\Phi_h(y))= a_0(y,\eta +i\nabla\Phi_h(y))+h
a_1(y,\eta +i\nabla\Phi_h(y)),
$$
and,
\begin{eqnarray*}
\partial_y^\alpha\partial_\eta^\beta [a_1 (y,\eta +i\nabla\Phi_h(y))]
 &=& {\mathcal O}
\left( (1+\sup_{1\leq \vert\gamma\vert
\leq \vert\alpha\vert +1}\vert\partial^\gamma\Phi_h\vert )
\langle\eta\rangle^{2m}\right)\\
&=&
{\mathcal O}((1+\lambda_h h^{1-\vert
\alpha\vert}) \langle\eta\rangle^{2m})\\
&=&
{\mathcal O}( h^{-\vert
\alpha\vert} \langle\eta\rangle^{2m}).
\end{eqnarray*}
Thus, $\vert \langle {\rm Op}_h^W( a_1)u, u\rangle\vert =
{\mathcal O}(1)\Vert \langle hD_y\rangle^m u\Vert^2$,
and the result follows.
\end{proof}
\vskip 0.2cm

Thanks to the previous lemma, only the principal symbols of our operators 
will be involved in the estimates.  However, the study of $\widetilde P_2$ remains 
delicate since its principal symbol is not a polynome in $\eta$.
However, due to the parity of $\widetilde p_2(y,\eta)$ with respect to $\eta$, we have for
any real-valued Lipschitz function $\Phi (y)$ with 
$\vert\nabla\Phi\vert$ small enough,
\be
\Re \widetilde p_2(y,\eta + i\nabla\Phi(y)) = \widetilde p_2 (y,i\nabla\Phi(y))
+ \frac12\langle {\rm Hess}_\eta \widetilde p_2 (y,i\nabla\Phi(y))\eta ,\eta\rangle
+ {\mathcal O}(\vert\eta\vert^3)
\ee
and therefore, since ${\rm Hess}_\eta \widetilde p_2(0,0)$ is positive definite,
\be
\label{agm}
\Re \widetilde p_2(y,\eta + i\nabla\Phi(y))\geq \widetilde p_2 (y,i\nabla\Phi(y))+
\delta_0\vert\eta\vert^2
\ee
for some constant $\delta_0 >0$ and $(y,\eta)$ small enough. 
Actually, we have a more complete result on 
$\widetilde P_2:={\mathcal U}P_2{\mathcal U}^{-1}$, as stated in the following proposition:
\begin{proposition}\sl
\label{opp2}
 Let  $K\subset \R$ be a fix compact set, and let $\Phi =\Phi_{h,\mu}\in C^\infty (\R^n ;\R)$ be
 such that,
 \be
{\rm Supp }\nabla\Phi \subset \{ y_n\in K\},
 \ee
 and, 
\be
\label{hypoPhi1}
|\Phi| + | \nabla\Phi | ={\mathcal O}(\mu)
 \ee
 uniformly with respect to $h,\mu>0$ both  small enough. Moreover, assume that
 for some $\Omega\subset \R^n$, for
 any multi-index $\alpha\in\N^n$ with $|\alpha|\geq 2$, and for any fix $\mu>0$ small enough, one has,
\be
\label{hypoPhi2}
\partial^\alpha\Phi = {\mathcal O}\left((1+\lambda_h{\bf 1}_\Omega) h^{2-\vert
\alpha\vert } \right),
\ee
uniformly with respect to $h>0$ small enough. 
\vskip 0.1cm
 Then, for $\mu$ sufficiently small, there exist three positive
constants $\tau(\mu)$, $\delta =\delta(\mu)$ and $C_\mu $, such that, for all $\tau\in (0,\tau(\mu)]$ and for all $h>0$ small enough, one has,
\begin{eqnarray*}
\Re e^{\Phi /h} \widetilde P_2e^{-\Phi /h}\geq 
\frac1{C_\mu}\left( -\frac{h^2}2\Delta+\widetilde p_2 (y,i\nabla\Phi(y))\right)
 - C_\mu h (1+\lambda_h{\bf 1}_{\Omega_{\delta}}).
\end{eqnarray*}
\end{proposition}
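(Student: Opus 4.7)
The plan is to combine the conjugation formula of Proposition~\ref{poidsvar} with a sharp-G{\aa}rding-type bound adapted to the exotic symbol calculus of Appendix~\ref{appA}. First, since $\widetilde P_2 = -h^2\Delta + \tau\,{\mathcal U}V_2{\mathcal U}^{-1}$ and $\nabla\Phi$ is supported in $\{y_n\in K\}$, I would pick a cutoff $\chi(y_n)\in C_0^\infty(\R)$ equal to $1$ on $K$ and apply Lemma~\ref{lemmfond} to write $\chi\,{\mathcal U}V_2{\mathcal U}^{-1}={\rm Op}_h^W(\check V_2)+O(e^{-\nu_0/h})$ with $\check V_2\in S_{\nu_0}(1)$. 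Off the support of $\chi$ the conjugation by $e^{\Phi/h}$ acts trivially since $\Phi$ is locally constant there, so Proposition~\ref{poidsvar} applied to ${\rm Op}_h^W(\eta^2+\tau\check V_2)$ (with $m=1$) produces
\begin{equation*}
e^{\Phi/h}\widetilde P_2\, e^{-\Phi/h}={\rm Op}_h^W\bigl(\widetilde p_2(y,\eta+i\nabla\Phi)\bigr)+hB+R,
\end{equation*}
with $B$ and $R$ controlled by the estimates of Proposition~\ref{poidsvar}, the remainder $R$ being absorbed into the exponentially small error.

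The second step is a pointwise symbol inequality. Since $\widetilde p_2$ is real on the real, the principal symbol of $\Re e^{\Phi/h}\widetilde P_2\, e^{-\Phi/h}$ is $\Re\widetilde p_2(y,\eta+i\nabla\Phi)$. Expanding $\widetilde p_2(y,\eta)=\eta^2+\tau V_2(y'-4\eta_n\eta',y_n-2\eta^2)$ around $\eta=i\nabla\Phi$, the imaginary part of the argument of $V_2$ is $O(|\eta||\nabla\Phi|)$, so a Taylor expansion using \eqref{ass2} and analyticity gives, on bounded $|\eta|$,
\begin{equation*}
\Re\widetilde p_2(y,\eta+i\nabla\Phi)\ge \widetilde p_2(y,i\nabla\Phi)+\tfrac12|\eta|^2-O\bigl(\tau(|\nabla\Phi|^2+|\eta|^2)|\eta|^2\bigr),
\end{equation*}
which becomes $\ge \widetilde p_2(y,i\nabla\Phi)+\tfrac14|\eta|^2$ for $\tau\le\tau(\mu)$ small. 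For $|\eta|$ large, the boundedness of $V_2$ gives $\Re\widetilde p_2\ge|\eta|^2/2$, and since $\widetilde p_2(y,i\nabla\Phi)=O(\mu)$ is bounded, gluing the two regimes yields a global inequality
\begin{equation*}
\Re\widetilde p_2(y,\eta+i\nabla\Phi)\;\ge\;\tfrac{1}{C_\mu}\Bigl(\tfrac12|\eta|^2+\widetilde p_2(y,i\nabla\Phi)\Bigr)
\end{equation*}
for some $C_\mu=C_\mu(\mu)>0$.

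The third step is a G{\aa}rding-type passage from this non-negative symbol inequality to the operator inequality. The difference symbol inherits, via the chain rule applied through $\nabla\Phi$, the exotic bounds $\partial^\alpha\Phi=O(\lambda_h h^{2-|\alpha|})$ on $\Omega$ from \eqref{hypoPhi2}, so it belongs to a weighted symbol class of the type $\widetilde S_0(\lambda_h,\Omega_\delta)$ introduced in Appendix~\ref{appA}. Proposition~\ref{propA1}(iv) therein --- already used in the proof of Proposition~\ref{poidsvar} --- then provides the corresponding quadratic-form lower bound with an error of the form $C_\mu h(1+\lambda_h{\bf 1}_{\Omega_\delta})\|u\|^2$. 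Combined with the $hB$ contribution, which is already of this form by \eqref{estA}, and with the exponentially small $R$, this yields the claimed estimate.

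The main obstacle will be to make sharp G{\aa}rding cooperate with the exotic calculus: the usual square-root (or Fefferman--Phong) argument must be carried out so that the symbolic square root of the positive difference still belongs to a class $\widetilde S_0(\sqrt{\lambda_h},\Omega_\delta)$, and so that the commutator remainders produced by the Weyl composition are absorbed by the $h\lambda_h{\bf 1}_{\Omega_\delta}$ loss rather than by a worse power of $h\lambda_h$. The smallness of $\tau$ relative to $\mu$ is what ensures, in the first place, that the non-degenerate quadratic behaviour of $\widetilde p_2$ in $\eta$ dominates the corrections coming from $V_2$ and makes the pointwise symbol inequality valid.
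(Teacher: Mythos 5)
Your first step --- conjugating via Lemma~\ref{lemmfond} and Proposition~\ref{poidsvar} and isolating the region $\{y_n\in K\}$ by a cutoff --- is the same as in the paper, and your pointwise symbol inequality $\Re\widetilde p_2(y,\eta+i\nabla\Phi)\geq C_\mu^{-1}\bigl(\tfrac12|\eta|^2+\widetilde p_2(y,i\nabla\Phi)\bigr)$ is essentially the observation made before the proposition (equation~(\ref{agm})). Where the two routes diverge is the passage from symbol to operator, and there your plan has a genuine gap.

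The paper does \emph{not} use a sharp G{\aa}rding argument. Instead it quantizes $\eta^2$ exactly (conjugating $-h^2\Delta$ is algebraic), and performs a Taylor expansion of $\Re\widetilde V_2(y,\eta+i\nabla\Phi)$ writing it as a principal term $V_2(y'+4(\partial_{y_n}\Phi)\nabla_{y'}\Phi,\,y_n+2(\nabla\Phi)^2)$ minus a quadratic form $\langle Y,AY\rangle$ in which $Y=\mathcal{O}(|\eta|\,|\nabla\Phi|)$. Since $Y$, $\chi A$ live in the exotic symbol classes of Appendix~\ref{appA} with $Y$ carrying the weight $\langle\eta\rangle$, Proposition~\ref{propA1}(iii)--(v) express ${\rm Op}_h^W(\chi\langle Y,AY\rangle)$ as a composition up to an error of size $h\lambda_h$ on $\Omega_\delta$ and $h$ globally, and then estimate it two-sidedly by $\mathcal{O}_\mu(\|h\nabla u\|^2+h\lambda_h\|u\|^2_{\Omega_\delta}+h\|u\|^2)$. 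Crucially, no positivity of the symbol is invoked; the quadratic form is bounded in absolute value and the $\tau$-smallness absorbs the $C_\mu\tau\,\eta^2$ contribution into the exact $(hD_y)^2$. This yields the error $C_\mu h(1+\lambda_h\mathbf{1}_{\Omega_\delta})$ with no loss.

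By contrast, your third step claims that Proposition~\ref{propA1}(iv) provides a lower bound with error $C_\mu h(1+\lambda_h\mathbf{1}_{\Omega_\delta})$, but (iv) is an $L^2$-boundedness statement, not a positivity statement; it cannot turn a non-negative symbol into a lower bound on the operator. The tool for positivity in the exotic calculus is Proposition~\ref{propA1}(vi), and it produces a loss of order $\sqrt{h\lambda_h}$ --- which, since $h\lambda_h=\mathcal{O}(1)$, is strictly worse than $h\lambda_h$. For instance, in the application in Proposition~\ref{estile} one has $\lambda_h=1$ and the Agmon estimate only closes because the error is $\mathcal{O}(h)$ and can be absorbed by the $C'h$-coercivity on the left; an $\mathcal{O}(\sqrt h)$ loss breaks that bookkeeping. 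You flag this as ``the main obstacle'' but do not resolve it, and the paper's actual proof shows that the right move is to avoid G{\aa}rding entirely. Finally, you dismiss the region off the support of $\chi$ as trivial because $\Phi$ is locally constant there, but the operator $(1-\chi)\,\mathcal{U}V_2\mathcal{U}^{-1}(1-\chi)$ is not a standard pseudodifferential operator (its $\eta$-derivatives degenerate as $|\eta|\to\infty$), and the paper devotes a separate lemma to establishing the lower bound $(1-\chi)\mathcal{U}V_2\mathcal{U}^{-1}(1-\chi)\geq C^{-1}(1-\chi)^2 - C(hD_y)^2 - Ch^2$ there; this step is missing from your outline.
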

\begin{proof} Let  $\chi  \in C_0^\infty (\R)$ be
 such that
$  \chi =1$ in a n neighborhood of   $K\cup \{0\}$.
Since $P_2 =-h^2\Delta +\tau V_2$ with $V_2$ satisfying (\ref{ass2}), by Lemma \ref{lemmfond} there exists $\nu_1>0$, such that,
$$
\chi(y_n)\widetilde P_2=\chi(y_n) (-h^2\Delta_y )+\tau {\rm Op}^W_h(\check V_2) +{\mathcal O}(e^{-\nu_1/h}),
$$
with $\check V_2\in S_{\nu_1}(1)$ satisfying,
$$\check V_2(y,\eta )=\chi (y_n)
V_2(y'-4\eta_n\eta', y_n-2\eta^2) +{\mathcal O}(h) \mbox{ in } S_{\nu_1}(1).
$$
Moreover, by assumption  $|\Phi| +|\nabla\Phi(y)| \leq C\mu$ for some constant $C>0$, and thus, for $\mu$ sufficiently small, we can apply Proposition \ref{poidsvar} with  $m=0$, and we obtain,
$$
e^{\Phi /h}{\rm Op}^W_h(\check V_2)e^{-\Phi /h}={\rm Op}_h^W\left( \chi(y_n) \widetilde V_2(y,\eta +i\nabla\Phi)
\right)
+hB+R,
$$
where we have set,
\be
\label{defVtilde}
\widetilde V_2(y,\eta) :=V_2(y'-4\eta_n\eta', y_n-2\eta^2),
\ee
and with,
$$
\vert\langle Bu,u\rangle\vert \leq C_1(\Vert u\Vert^2+\lambda_h\Vert u\Vert_{L^2(\Omega_\delta)}^2)\, ;\, \vert\langle Ru,u\rangle\vert \leq C_2e^{(2C\mu -\varepsilon_0)/h}\Vert u\Vert^2.
$$
Here, $\varepsilon _0 >0$ does not depend on $\mu$, and thus, for $\mu$ small enough, this gives,
$$
\Re e^{\Phi /h}{\rm Op}^W_h(\check V_2)e^{-\Phi /h}\geq \Re {\rm Op}_h^W\left( \chi(y_n) \widetilde V_2(y,\eta +i\nabla\Phi)
\right) -Ch(1+\lambda_h{\bf 1}_{\Omega_{\delta}}).
$$
Since we also have,
$$
\Re e^{\Phi /h}\chi(y_n)(-h^2\Delta)e^{-\Phi /h}=\Re \chi(y_n)(-h^2\Delta -|\nabla\Phi|^2)+{\mathcal O}(h),
$$
we obtain,
\begin{eqnarray}
\label{premest}
&& \Re e^{\Phi /h} \chi \widetilde P_2e^{-\Phi /h}\\
&&\geq \Re {\rm Op}_h^W\left( \chi(y_n)(\eta^2  -|\nabla\Phi|^2+\tau \widetilde V_2(y,\eta +i\nabla\Phi))
\right) \nonumber\\
&& \hskip 7cm -Ch(1+\lambda_h{\bf 1}_{\Omega_{\delta}}),\nonumber
\end{eqnarray}
with some new constant $C>0$. 

On the other hand, using (\ref{defVtilde}) and making a Taylor expansion, we see,
\begin{eqnarray*}
\Re \widetilde V_2(y,\eta +i\nabla\Phi) = V_2(y'-4\eta_n\eta' +4(\partial_{y_n}\Phi) \nabla_{y'}\Phi, y_n-2\eta^2+2(\nabla\Phi)^2)\\ -\la Y, AY \ra,
\end{eqnarray*}
where we have set,
$$
Y=(Y',Y_n):=(4\eta_n\nabla_{y'}\Phi+\eta'(\partial_{y_n}\Phi), 4\eta\cdot\nabla\Phi)\in \R^n,
$$
and,
$$
A:=\int_0^1 \Re V_2'' (y'-4\eta_n\eta' +4(\partial_{y_n}\Phi) \nabla_{y'}\Phi -itY', y_n-2\eta^2+2(\nabla\Phi)^2-itY_n)dt.
$$
In particular, by (\ref{hypoPhi1})-(\ref{hypoPhi2}),  we see that, for $\mu$ sufficienly small, 
$$
\chi (y_n)\la Y, AY \ra \in {\mathcal S}_{0}(\lambda_h, \Omega),
$$
where ${\mathcal S}_{0}(\lambda_h, \Omega)$ is the exotic class of symbols introduced in Appendix \ref{appA}. Actually, both $Y$ and $\chi A$ are in similar (vectorial) classes of symbols (with weight $\la\eta\ra$ for $Y$), and, for any $u\in C_0^\infty (\R^n)$, the results of Appendix \ref{appA} permit us to write,
\begin{eqnarray*}
\la {\rm Op}^W_h(\chi(y_n)\la Y, AY \ra)u,u\ra &=& \la {\rm Op}^W_h(\chi A){\rm Op}^W_h(Y)u, {\rm Op}^W_h(Y)u\ra \\
&& +{\mathcal O}_{\mu}(h\lambda_h)\Vert  \la hD_x\ra u\Vert^2_{\Omega_\delta} +{\mathcal O}_{\mu}(h)\Vert  \la hD_x\ra u\Vert^2\\
&=& {\mathcal O}_{\mu}(\Vert {\rm Op}^W_h(Y)u\Vert^2  \\
&& +{\mathcal O}_{\mu}(h\lambda_h)\Vert  \la hD_x\ra u\Vert^2_{\Omega_\delta} +{\mathcal O}_{\mu}(h)\Vert  \la hD_x\ra u\Vert^2\\
&=&  {\mathcal O}_{\mu}(\Vert  h\nabla u\Vert^2 +h\lambda_h\Vert  u\Vert_{\Omega_\delta}^2+h\Vert  u\Vert^2).
\end{eqnarray*}
Here,  ${\mathcal O}_{\mu}$ means that the estimate is not necessarily uniform with respect to $\mu$ (and this is precisely the reason why we need  the small parameter $\tau$ in our problem).

In the same way, we also have,
\begin{eqnarray*}
{\rm Op}^W_h(\chi (y_n)V_2(y'-4\eta_n\eta' +4(\partial_{x_n}\Phi) \nabla_{x'}\Phi, y_n-2\eta^2+2(\nabla\Phi)^2))\\
={\rm Op}^W_h(\chi (y_n)V_2(y'+4(\partial_{x_n}\Phi) \nabla_{x'}\Phi, y_n+2(\nabla\Phi)^2))+ B_1,
\end{eqnarray*}
with,
$$
\la B_1u,u\ra ={\mathcal O}_{\mu}(\Vert  h\nabla u\Vert^2 +h\lambda_h\Vert  u\Vert_{\Omega_\delta}^2+h\Vert  u\Vert^2).
$$
Therefore, coming back to (\ref{premest}), we deduce,
\begin{eqnarray}
 e^{\Phi /h} \chi (y_n) \widetilde P_2e^{-\Phi /h}\geq   {\rm Op}_h^W\left( (\chi(y_n)-C_\mu \tau)\eta^2  +\chi (y_n)\widetilde p_2(y,i\nabla\Phi(y))\right)\nonumber\\
\label{estP2-1}
 -C_\mu h\lambda_h{\bf 1}_{\Omega_\delta}-C_\mu h,
 \end{eqnarray}
where the constant $C_\mu>0$ may depend on $\mu$. Since $e^{\Phi /h}(1-\chi)  \widetilde P_2\chi e^{-\Phi /h}=\left(e^{-\Phi /h}  \chi \widetilde P_2(1-\chi)e^{\Phi /h}\right)^*$, we also have,
\begin{eqnarray}
\Re  e^{\Phi /h} (1-\chi)\widetilde P_2\chi e^{-\Phi /h}&\geq&  {\rm Op}_h^W\left((1-\chi(y_n)  \chi(y_n) (\eta^2  +\widetilde p_2(y,i\nabla\Phi(y))\right)\nonumber\\
\label{estP2-2}
&&- C_\mu\tau (hD_y)^2-C_\mu h\lambda_h{\bf 1}_{\Omega_\delta}-C_\mu h.
\end{eqnarray}
Finally, since $\Phi$ is constant on the support of $(1-\chi)$, we can write,
\begin{eqnarray}
&& \Re \la e^{\Phi /h} (1-\chi)\widetilde P_2(1-\chi )e^{-\Phi /h}u,u\ra\nonumber\\
\label{unmoinschiP}
&& =\la  (1-\chi)\widetilde P_2(1-\chi )u,u\ra\\
&& =\la {\rm Op}_h^W\left((1-\chi)^2\eta^2\right)u,u\ra + \tau \la (1-\chi){\mathcal U}V_2{\mathcal U}^{-1}(1-\chi)u,u\ra. \nonumber
\end{eqnarray}
Here, one must be aware of the fact that the pseudodifferential operator $(1-\chi){\mathcal U}V_2{\mathcal U}^{-1}(1-\chi)$, though being uniformly bounded on $L^2$, does not have its symbol in the class $S_0(1; \R^n)$ (there are directions where the repeated derivatives with respect to $\eta$ increase more and more as $|\eta|\rightarrow \infty$). However, we have,
\begin{lemma} The operator  $(1-\chi){\mathcal U}V_2{\mathcal U}^{-1}(1-\chi)$ satisfies,
$$
 \la (1-\chi){\mathcal U}V_2{\mathcal U}^{-1}(1-\chi)\geq \frac{(1-\chi)^2}{C} -C(hD_y)^2 -Ch^2,
$$
where $C>0$ is a constant.
\end{lemma}
\begin{proof} Let $\chi_0\in C_0^\infty (\R^n ;\R)$ be such that Supp $\chi_0 \subset \{ \chi (y_n) =1\}$ and $\chi_0(0)\not=0$.
Then, the assumptions on $V_2$ imply the existence of a positive constant $C$ such that,
$$
V_2 +\chi_0^2 \geq \frac1{C}.
$$
As a consequence,
\be
\label{unmoinschi}
(1-\chi){\mathcal U}V_2{\mathcal U}^{-1}(1-\chi)\geq \frac{(1-\chi)^2}{C}-(1-\chi){\mathcal U}\chi_0^2{\mathcal U}^{-1}(1-\chi),
\ee
and thus, it is enough to estimate $(1-\chi){\mathcal U}\chi_0^2{\mathcal U}^{-1}(1-\chi)$. We write it as,
$$
(1-\chi){\mathcal U}\chi_0^2{\mathcal U}^{-1}(1-\chi)={\mathcal U}A\chi_0^2 A{\mathcal U}^{-1},
$$
with,
$$
A:={\mathcal U}^{-1}(1-\chi){\mathcal U} =1-{\mathcal U}^{-1}\chi{\mathcal U}.
$$
By a straightforward computation (similar to that of (\ref{straightf})), we have,
$$
A\chi_0 =\chi_0 -{\rm Op}_h^R (a),
$$
with,
$$
a(y,\eta ):= \frac1{2\pi h}\int e^{i(x_n-y_n)(\xi_n -\eta_n)/h}\chi (x_n+\theta (\xi_n,\eta))\chi_0(y)dx_nd\xi_n,
$$
where we have set: $\theta (\xi, \eta )=2|\eta'|^2 +2(\eta_n^2+\eta_n\xi_n+\xi_n^2)/3$. Therefore, writing $\chi (x_n+\theta ) =\chi (x_n) +\theta \int_0^1\chi' (x_n+t\theta )dt$ and using the fact that $\chi\chi_0 =\chi_0$, we find,
$$
\chi_0(y)-a(y,\eta ) ={\rm Op}_h^R (b),
$$
with,
$$
b(y,\eta ):= \frac1{2\pi h}\int_0^1\int_{\R^{2}} e^{i(x_n-y_n)(\xi_n -\eta_n)/h}\theta(\xi_n,\eta) \chi' (x_n+t\theta (\xi,\eta))\chi_0(y)dxd\xi dt.
$$
Then, writing $\theta (\xi_n, \eta )=2\eta^2+ 2\eta_n(\xi_n-\eta_n) + \frac23 (\xi_n -\eta_n)^2$, and making integrations by parts with respect to $x_n$ and $\xi_n$, for any $M,N\geq 1$, we find,
$$
b(y,\eta )=2\eta^2A_1 + 2h\eta_nA_2 + h^2A_3,
$$
where, for $j=1,2,3$,
$$
A_j= \int_0^1\int_{{\mathcal E}_t }{\mathcal O}\left(h^{-1}\frac{(1+t|\xi_n|+t|\eta|)^M}{\la h^{-1}(\xi_n -\eta_n)\ra^N\la x_n-y_n\ra^M}\right) dx_nd\xi_n dt.
$$
Here, ${\mathcal E}_t$ stands for the  support of $\chi (x_n+t\theta (\xi_n,\eta))\chi_0(y)$. In particular, on this set we have,
$$
t|\xi_n| + t|\eta|= {\mathcal O}(\sqrt{t|\xi_n|^2 + t|\eta|^2}) ={\mathcal O}(\la x_n\ra^{1/2} +\la\xi_n -\eta_n\ra) \quad ;\quad |y| ={\mathcal O}(1),
$$
and thus, for $N\geq 2M$, this gives,
$$
h^{-1}\frac{(1+t|\xi_n|+t|\eta|)^M}{\la h^{-1}(\xi_n -\eta_n)\ra^N\la x_n-y_n\ra^M}={\mathcal O}\left(\frac{h^{-1}}{\la h^{-1}(\xi_n -\eta_n)\ra^M\la x_n\ra^{M/2}}\right).
$$
Therefore, $A_j={\mathcal O}(1)$ uniformly, and the same is true for all its derivatives. The same procedure also shows that $a\in S_0(1)$, and we can write,
\begin{eqnarray*}
&& \la A\chi_0^2 Au,u\ra=  \la {\rm Op}_h^R(b)\chi_0Au,u\ra={\mathcal O}( \Vert hD_y\chi_0Au\Vert\cdot \Vert hD_yu\Vert \\
&&\hskip 5cm + h\Vert \chi_0A u\Vert \cdot \Vert hD_yu\Vert +h^2\Vert \chi_0A u\Vert \cdot \Vert u\Vert),
\end{eqnarray*}
that is, since $\chi_0A ={\rm Op}_h^L(a)$ and $[D_y, \chi_0A]$ are uniformly bounded,
$$
\la A\chi_0^2 Au,u\ra ={\mathcal O}(\Vert hD_yu\Vert^2 +h^2\Vert u\Vert^2).
$$
Then, by (\ref{unmoinschi}) (and the fact that $D_y$ and ${\mathcal U}$ commute), the result follows.
\end{proof}

Going back to (\ref{unmoinschiP}), we deduce,
\begin{eqnarray*}
&& \Re e^{\Phi /h} (1-\chi)\widetilde P_2(1-\chi )e^{-\Phi /h}\\
&& \geq 
 {\rm Op}_h^W\left(((1-\chi)^2-C\tau)\eta^2+\frac{\tau}{ C}(1-\chi)^2-Ch^2\right).
 \end{eqnarray*}
Summing up with (\ref{estP2-1})-(\ref{estP2-2}), we have proved,
\begin{eqnarray}
&&\Re e^{\Phi /h} \widetilde P_2e^{-\Phi /h}\geq  {\rm Op}_h^W\left((1-C\tau )\eta^2+(1-(1-\chi)^2)\widetilde p_2(y,i\nabla\Phi(y))\right)\nonumber\\
\label{estP2-3}
 && \hskip 4cm+ \frac{\tau}{C} {\rm Op}_h^W\left((1-\chi)^2\right) - Ch\lambda_h{\bf 1}_{\Omega_\delta} -C h.
 \end{eqnarray}
On the other hand, on the support of $(1-\chi)$, we have $\widetilde p_2(y,i\nabla\Phi(y)) =\widetilde p_2(y,0)= \tau V_2(y)\geq 0$, and thus $(1-(1-\chi)^2)\widetilde p_2(y,i\nabla\Phi(y)) + \frac{\tau}{C}\geq \frac{\tau}{C}\geq C_1^{-1}\tau V_2(y)=C_1^{-1}\widetilde p_2(y,i\nabla\Phi(y))$, for some positive constant $C_1$. As a consequence, for $\tau >0$ sufficiently small, we deduce from (\ref{estP2-3}),
$$
\Re e^{\Phi /h} \widetilde P_2e^{-\Phi /h}\geq \frac1{C} {\rm Op}_h^W\left(\eta^2+\widetilde p_2(y,i\nabla\Phi(y))\right)-Ch\lambda_h{\bf 1}_{\Omega_\delta} -C h,
$$
where the (new) constant $C>0$ may depend on $\mu$. Thus, Proposition \ref{opp2} is proved.
\end{proof}

\subsection{Agmon Estimates}
\label{sectionAgmon}

Now, we can specify what we call Agmon estimates in our case. In the rest of the paper, we set,
$$k:=h\ln h^{-1},$$
and, for $t\in \R$, we consider the  translation $e^{itD_{x_n}}$ given by,
$$
e^{itD_{x_n}} f(y) = 
f(y', y_n+t).
$$
We also set,
$$
\widetilde P^t:=  e^{itD_{y_n}} \widetilde P e^{-itD_{y_n}}.
$$
Since $e^{itD_{y_n}}$ and  ${\mathcal U}$ commute, we have,
$$
\widetilde P^t ={\mathcal U}P^t{\mathcal U}^{-1},
$$
where,
$$
P^t:=e^{itD_{x_n}}Pe^{-itD_{x_n}}=\left(\begin{array}{cc}
P_1^t & 0\\
0 & P_2^t
\end{array}\right) + hR(x', x_n+t;hD_x),
$$
with,
$$
P_1^t:= -h^2\Delta + x_n+t-\mu\quad ;\quad P_2^t:=-h^2\Delta +\tau V_2(x',x_n+t).
$$
In particular, by Assumption (\ref{ass2}), we see that $\widetilde P^t $ extends analytically to complex values of $t$, $|t|$ small enough, and we set,
\be
Q := \widetilde P^{-ik}.
\ee
Then, the essential spectrum of $Q$ is $\R-ik$, and the resonances  of $P$ (or $\widetilde P$) close enough to 0 are discrete eigenvalues of $Q$. In particular, the resonant state $\widetilde{\mathbf u}={\mathcal U}{\mathbf u}$ of $\widetilde P$ associated to the resonance $\rho$ (where ${\mathbf u}$ is the corresponding resonant state of $P$), is such that ${\mathbf v}:=e^{kD_{y_n}}\widetilde{\mathbf u}$ is well defined, belongs to $L^2(\R^n)\oplus L^2(\R^n)$, and satisfies,
$$
(Q-\rho){\mathbf v} =0.
$$
Moreover, ${\mathbf v}$ can be normalized by requiring: $\Vert{\mathbf v}\Vert =1$.
\vskip 0.2cm
In the sequel, we set,
$$
Q_j= \widetilde P_j^{-ik}:={\mathcal U}P_j^{-ik}{\mathcal U}^{-1}.
$$
Actually,  since ${\mathcal F}x_n{\mathcal F}^{-1} = ih\partial_{\xi_n}$, the operator $Q_1$ can be explicitly computed, and one  finds,
$$
Q_1= h^2\Delta + y_n  -ik-\mu.
$$
Moreover, setting $\widetilde R_k:= {\mathcal U}R(x', x_n-ik; hD_x){\mathcal U}^{-1}$, we have,
\be
\label{Pjtildek}
Q=\left(\begin{array}{cc}
Q_1 & 0\\
0 & Q_2
\end{array}\right) + h\widetilde R_k,
\ee
Now, we fix an $h$-dependent cut-off function $\chi_h\in C^\infty (\R ; [0,1])$, such that,
$$
\chi_h =1 \mbox{ on } (-\infty, -2k^{2/3}]\, ; \, \chi_h =0 \mbox{ on } [-k^{2/3}, +\infty)\, ;\, \partial^\alpha\chi_h={\mathcal O}(k^{-2|\alpha|/3}),
$$
and we set,
\begin{eqnarray}
&&\widetilde V_1(y_n):= (\mu -y_n)\chi_h(y_n-\mu)+k^{2/3}\la y_n\ra(1-\chi_h(y_n-\mu))\, ;\nonumber\\
&&\widetilde Q_1:= h^2\Delta   -ik -\widetilde V_1(y_n);\nonumber\\
\label{defQtilde}
&& \widetilde Q=\left(\begin{array}{cc}
\widetilde Q_1 & 0\\
0 & Q_2
\end{array}\right) + h\widetilde R_k.
\end{eqnarray}
We also denote by ${\mathcal D}(\widetilde Q_1)$ the (natural) domain of $\widetilde Q_1$,
and we prove,

\begin{proposition}\sl
\label{agmon}
Let $\Phi$ be as in Proposition \ref{opp2}. Then, for any $\delta >0$, there exists  a positive
constant $C_2=C_2(\mu)$ such that, for any $v=(v_1,v_2)\in 
{\mathcal D}(\widetilde Q_1)\times H^2(\R^n) $, and setting $\check v :=(-v_1 , v_2)$, one
 has,
\begin{eqnarray*}
\Re\langle e^{\Phi /h}\widetilde Q v,e^{\Phi /h}\check v\rangle &\geq & \frac1{C_2}
\Vert h\nabla (e^{\Phi/h}v_1)\Vert^2
+\Vert h\nabla (e^{\Phi/h}v_2)\Vert^2\\
&&+\frac1{C_2}\langle \widetilde p_2(y,i\nabla\Phi(y))e^{\Phi/h}v_2,
e^{\Phi/h}v_2\rangle  \\
 && 
+\langle \left(\widetilde V_1(y_n)-(\nabla\Phi)^2)\right)e^{\Phi/h}v_1,e^{\Phi/h}v_1\rangle\\
&&  
-C_2h\Vert  e^{\Phi/h}v\Vert^2-C_2h\lambda_h\Vert  e^{\Phi/h}v_2\Vert_{\Omega_\delta}^2,
\end{eqnarray*}
uniformly for $h>0$ small enough. 
\end{proposition}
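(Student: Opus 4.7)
My plan is to expand the bilinear form along the decomposition $\widetilde Q = \mathrm{diag}(\widetilde Q_1, Q_2) + h\widetilde R_k$, estimate each piece, and combine, writing $\tilde v_j := e^{\Phi/h}v_j$ throughout.

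For the $\widetilde Q_1$-block, since $\widetilde Q_1 = h^2\Delta - ik - \widetilde V_1(y_n)$ is a classical scalar differential operator, the exponential conjugation is explicit:
$$e^{\Phi/h}\widetilde Q_1 e^{-\Phi/h} = h^2\Delta - 2h\nabla\Phi\cdot\nabla - h\Delta\Phi + (\nabla\Phi)^2 - ik - \widetilde V_1,$$
and pairing with $-\tilde v_1$, taking the real part, and integrating by parts yields exactly
$$\Re\langle e^{\Phi/h}\widetilde Q_1 v_1, -e^{\Phi/h}v_1\rangle = \|h\nabla\tilde v_1\|^2 + \langle(\widetilde V_1 - (\nabla\Phi)^2)\tilde v_1,\tilde v_1\rangle,$$
because the $-ik$ contribution drops out (as $k\in\R$) and $-2h\Re\langle\nabla\Phi\cdot\nabla\tilde v_1,\tilde v_1\rangle$ cancels $h\langle\Delta\Phi\tilde v_1,\tilde v_1\rangle$ after one integration by parts. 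The sign flip $\check v_1 = -v_1$ is essential here: it turns the unfavourable $+h^2\Delta$ in $\widetilde Q_1$ into a positive quadratic form on $\nabla\tilde v_1$.

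For the $Q_2$-block, write $Q_2 = \mathcal U\bigl(-h^2\Delta + \tau V_2(y',y_n-ik)\bigr)\mathcal U^{-1}$ and Taylor-expand $V_2(y',y_n-ik) - V_2(y) = -ik\partial_{y_n}V_2 + O(k^2)$. The leading $O(k)$ correction is purely imaginary, and by Lemma \ref{lemmfond} its $\mathcal U$-conjugate is skew-adjoint modulo $O(h)$, so it contributes only $O(hk)$ to $\Re\langle\cdot,\cdot\rangle_\Phi$; the $O(k^2) = O(h^2\ln^2 h^{-1})$ piece is likewise $o(h)$. Both are absorbed into the target error $C_2 h\|\tilde v_2\|^2$, and what remains is essentially $\Re\langle e^{\Phi/h}\widetilde P_2 v_2, e^{\Phi/h}v_2\rangle$, on which Proposition \ref{opp2} applies directly, producing the lower bound with $\|h\nabla\tilde v_2\|^2$, $\langle\widetilde p_2(y,i\nabla\Phi)\tilde v_2,\tilde v_2\rangle$, and the $\lambda_h$-localised error.

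The coupling $h\widetilde R_k$ contributes $-h\langle\tilde r_{12}^k v_2, v_1\rangle_\Phi + h\langle\tilde r_{21}^k v_1, v_2\rangle_\Phi$ to the bilinear form, and since $R$ is formally self-adjoint in $L^2$ ($r_{21} = r_{12}^*$), these two terms would cancel in real part but for (i) the weight $e^{2\Phi/h}$, whose commutator with a first-order operator is controlled by $|\nabla\Phi| = O(\mu)$, and (ii) the complex translation, which breaks exact self-adjointness by $O(k)$. The resulting residue is bounded by $C_\mu h(\|h\nabla\tilde v\| + \|\tilde v\|)\|\tilde v\|$, and Cauchy--Schwarz with a small parameter absorbs it into a small fraction of the gradient norms already secured in Steps 1-2, plus an extra $O(h)\|\tilde v\|^2$. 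Summing the three contributions and renaming constants delivers the claimed inequality. The main obstacle is Step 2: because $k = h\ln h^{-1}$ is strictly larger than $h$, a naive treatment of the translation yields corrections larger than the target error, and one must exploit both the reality of $V_2$ (so the $O(k)$ piece is purely imaginary) and the approximate self-adjointness supplied by Lemma \ref{lemmfond} to reduce the actual impact on the real bilinear form to the absorbable $o(h)$ level.
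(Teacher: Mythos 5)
Your overall architecture --- conjugate $\widetilde Q_1$ explicitly, reduce the $Q_2$-block to Proposition~\ref{opp2}, and control the off-diagonal $h\widetilde R_k$ --- is the same as the paper's, and the $\widetilde Q_1$ computation is correct. For the cross terms, the real-part cancellation you invoke is unnecessary: the paper first proves as a lemma that $e^{\Phi/h}\mathcal U c\,\mathcal U^{-1}e^{-\Phi/h}$ is uniformly bounded on $L^2$ for any $c$ satisfying (\ref{ass2}), and then the explicit prefactor $h$ in $h\widetilde R_k$ together with Cauchy--Schwarz already gives $\mathcal O\bigl(h\|\langle hD_y\rangle e^{\Phi/h}v\|^2\bigr)$; the sign flip in $\check v$ matters only for the diagonal blocks.

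The genuine gap is in your Step~2, in the sentence claiming that after $\Phi$-conjugation the $\mathcal O(k)$ piece is ``skew-adjoint modulo $\mathcal O(h)$'' and hence contributes $\mathcal O(hk)$ to $\Re\langle\cdot,\cdot\rangle_\Phi$. The operator $-ik\tau\,\mathcal U(\partial_{y_n}V_2)\mathcal U^{-1}$ is exactly skew-adjoint, but conjugation by $e^{\Phi/h}$ breaks skew-adjointness by $\mathcal O(|\nabla\Phi|)=\mathcal O(\mu)$, \emph{not} by $\mathcal O(h)$: writing $B:=\mathcal U(\partial_{y_n}V_2)\mathcal U^{-1}$ with Weyl symbol $b(y,\eta)$, Proposition~\ref{poidsvar} shows that the self-adjoint part of $e^{\Phi/h}(iB)e^{-\Phi/h}$ has leading symbol $-\Im b(y,\eta+i\nabla\Phi)=-\nabla\Phi\cdot\nabla_\eta b(y,\eta)+\mathcal O(|\nabla\Phi|^3)$, which is $\mathcal O(\mu)$ in general. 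A crude bound then produces a contribution $\mathcal O(\tau\mu k)\,\|e^{\Phi/h}v_2\|^2=\mathcal O(\tau\mu h\ln h^{-1})\,\|e^{\Phi/h}v_2\|^2$, which exceeds the allowed error $C_2h\,\|e^{\Phi/h}v_2\|^2$ for fixed $\tau,\mu$ and cannot be absorbed.

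What actually saves the estimate --- and what the paper implicitly exploits by ``applying Proposition~\ref{opp2}'' to $Q_2$ itself, i.e.\ to the shifted potential $V_2(\cdot,\cdot-ik)$, rather than peeling the shift off as a separate bounded perturbation --- is the \emph{evenness} of $b(y,\eta)=\partial_{x_n}V_2(y'-4\eta_n\eta',\,y_n-2\eta^2)$ in $\eta$: one has $\nabla_\eta b(y,0)=0$, hence $\Im b(y,\eta+i\nabla\Phi)=\mathcal O(\mu|\eta|)+\mathcal O(\eta^2)$ vanishes at $\eta=0$. The extra factor $|\eta|$ lets one absorb the perturbation into the kinetic term $\frac1{C}\|h\nabla(e^{\Phi/h}v_2)\|^2$ already secured by Proposition~\ref{opp2}, via $\tau\mu k\,|\eta|\le\varepsilon\eta^2+(\tau\mu k)^2/(4\varepsilon)$, leaving a residual $\mathcal O\bigl((\tau\mu k)^2\bigr)\|e^{\Phi/h}v_2\|^2=o(h)\|e^{\Phi/h}v_2\|^2$ which is absorbable. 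Equivalently, $\Re\widetilde p_2^{(k)}(y,i\nabla\Phi(y))-\widetilde p_2(y,i\nabla\Phi(y))=\mathcal O(\tau k^2)=o(h)$ because at $\eta=0$ the first-order term in $k$ is purely imaginary. Either way, the evenness of $\widetilde p_2$ in $\eta$ must be invoked; your blanket ``$\mathcal O(hk)$'' skips it and is not correct as stated.
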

\begin{proof}   We first prove,
\begin{lemma}\sl
Let $c=c(x)$ satisfying (\ref{ass2}). Then, with $\Phi$ as in Proposition \ref{opp2}, the operator $e^{\Phi/h}{\mathcal U}c{\mathcal U}^{-1}e^{-\Phi/h}$ is uniformly bounded on $L^2$ as $h\rightarrow 0_+$.
\end{lemma}
\begin{proof}
Using Lemma \ref{lemmfond}, and Propositions \ref{poidsvar} and \ref{propA1}, we see that is is enough to study the operator $(1-\chi)e^{\Phi/h}{\mathcal U}c{\mathcal U}^{-1}e^{-\Phi/h}(1-\chi)$, where $\chi\in C_0^\infty (\R)$ is such that $\chi (y_n) =1$ on Supp$\nabla \Phi$. But then, this operator coincides with $(1-\chi){\mathcal U}c{\mathcal U}^{-1}(1-\chi)$, and the result is obvious.
\end{proof}
Now, if $c$ satisfies (\ref{ass2}), so does $c(x', x_n-ik)$, and thus, we deduce from the previous lemma that we have,
\be
\Vert e^{\Phi /h}h\widetilde R_k v\Vert \leq Ch\Vert \la hD_y\ra e^{\Phi /h}v\Vert.
\ee
As a consequence, by (\ref{defQtilde}), we deduce,
\begin{eqnarray}
\langle e^{\Phi /h}\widetilde Qv,e^{\Phi /h}\check v\rangle = -\la e^{\Phi /h}\widetilde Q_1 v_1,e^{\Phi /h}v_1\ra + \la e^{\Phi /h}Q_2 v_2,e^{\Phi /h}v_2\ra \nonumber \\
\label{estQtilde1}
+{\mathcal O}(h\Vert \la hD_y\ra e^{\Phi /h}v\Vert^2).
\end{eqnarray}
We also compute,
$$
-e^{\Phi/h}\widetilde Q_1e^{-\Phi/h} =-h^2\Delta + \widetilde V_1(y_n)   -|\nabla\Phi|^2+ik-ih(D_y\cdot\nabla\Phi +\nabla\Phi\cdot D_y),
$$
and thus,
$$
-\Re \la e^{\Phi /h}\widetilde Q_1 v_1,e^{\Phi /h}v_1\ra = \Vert h\nabla u\Vert^2 + \la (\widetilde V_1(y_n)   -|\nabla\Phi|^2)e^{\Phi /h}v, e^{\Phi /h}v\ra.
$$
Then, the result follows from (\ref{estQtilde1}) by applying Proposition \ref{opp2} and by taking $h$ small enough.
\end{proof}

\begin{remark}
\label{unifwrphi} \sl
It results from the proof that the dependence with respect to $\Phi$ of the constant $C_2$ is performed only through the estimates (\ref{hypoPhi1})-(\ref{hypoPhi2}). As a consequence, if $\Phi$ depends on some extra-parameter and satisfies (\ref{hypoPhi1})-(\ref{hypoPhi2}) uniformly with respect to this parameter, then the constant $C_2$ can be taken independent of it.
\end{remark}

\section{Comparison Between Formal and True Solution}
\subsection{On $\{\phi < S(\mu)\}$}\label{intboule}

Now, we start by studying the region $\{\phi < S(\mu)\}$. The idea is to use the previous Agmon estimates in order to have informations on the decay of the first  eigenfunction $\widetilde{\mathbf v}$ of $\widetilde Q$, and then to transfer these informations on ${\mathbf v}$ by comparing $\widetilde{\mathbf v}$ and  ${\mathbf v}$. 
\vskip 0.2cm
Note that, by construction, we have $\widetilde V_1\geq \delta k^{2/3}\la y_n\ra$ everywhere (with some $\delta >0$ constant). As a consequence, we see that the operator $\widetilde Q_1 - z$ is invertible for any complex number $z$ with  $z={\mathcal O}(h)$, and the norm of its inverse is ${\mathcal O}(k^{-2/3})$. Therefore, the $2\times 2$ system of equations: $Qu=zu$ (with $u=(u_1,u_2)$), can be reduced to a scalar equation on $u_2$, of the form,
$$
(Q_2 + h^2k^{-2/3}B(z))u_2=zu_2,
$$
with $\Vert B(z)\Vert ={\mathcal O}(1)$ uniformly. Since $h^2k^{-2/3} =h^{4/3}|\ln h|^{-2/3} << h$, one can perform the standard regular-perturbation procedure, and conclude (also using Proposition \ref{propC1} in Appendix \ref{appC}) that there exists a unique discrete eigenvalue $\widetilde\rho$ of $\widetilde Q$ near $\rho$, that coincides with $\rho$ up to ${\mathcal O}(h^{4/3}|\ln h|^{-2/3})$. Finally, using, e.g., the WKB constructions of Section \ref{sectionWKB}, we easily conclude that 
$\widetilde \rho$ coincides with $\rho$ up to ${\mathcal O}(h^\infty)$.
\vskip 0.2cm
We have the following result:
\begin{proposition}
\label{estile}
For any multi-index 
$\alpha$, the (conveniently normalized) eigenfunction $\widetilde{\mathbf v}$ of $\widetilde Q$, associated with the eigenvalue $\widetilde\rho$, satisfies,
\be
\partial^\alpha ({\widetilde{\mathbf v}} - {\mathbf w}) = {\mathcal O}(h^\infty e^{-\phi /h})
\ee
locally uniformly in $\{\phi < S(\mu)\}$. Moreover, for any $\varepsilon >0$
\be
\partial^\alpha {\widetilde{\mathbf v}} = {\mathcal O}(e^{-(S(\mu)-\varepsilon) /h}),
\ee
locally uniformly on $\R^n\backslash \{\phi < S(\mu)\}$.
\end{proposition}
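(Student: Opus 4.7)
The overall strategy is to combine the WKB quasi-mode $\mathbf w$ of Proposition~\ref{BKWtotal} with the Agmon-type weighted estimate of Proposition~\ref{agmon} applied to the difference $\widetilde{\mathbf v}-c(h)\widehat{\mathbf w}$, where $\widehat{\mathbf w}(y):=\mathbf w(y',y_n-ik)$ is the analytic continuation of $\mathbf w$. Since $\widetilde Q=T_{-ik}\widetilde PT_{ik}$, by (\ref{eqW1})--(\ref{eqW3}) the function $\widehat{\mathbf w}$ is a quasi-eigenfunction of $\widetilde Q$ at $\rho$ with a residual $(\widetilde Q-\rho)\widehat{\mathbf w}$ that is $O(h^\infty e^{-\phi/h})$ on the interior region $\widetilde{\mathcal V}_1\cap\{y_n<g(y')+\varepsilon\}$ and at worst $O(h^{-N_0}e^{-S(\mu)/h})$ on the exceptional strips $\widetilde{\mathcal V}_2\setminus\widetilde{\mathcal V}_1$ and past the caustic set. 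A normalising constant $c(h)$ is fixed so that $r:=\widetilde{\mathbf v}-c(h)\widehat{\mathbf w}$ is orthogonal to $\widetilde{\mathbf v}$, which is possible because $\widetilde\rho$ is simple (by the perturbative reduction preceding the statement).

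The first step is the construction of a smoothed weight $\Phi_h$ which approximates $\min(\phi,S(\mu)-\eta_h)$ with $\eta_h\downarrow 0$ slowly (typically $\eta_h\asymp k^{2/3}$). Since $\phi$ is only $C^{1,1}$ across $\Gamma$ and across the level set $\{\phi=S(\mu)-\eta_h\}$, I smooth it by inf-convolution over a transition zone $\Omega$ of width $O(\sqrt{h\lambda_h^{-1}})$, which produces a function satisfying (\ref{hypoPhi1})--(\ref{hypoPhi2}) with $\lambda_h\asymp k^{-1}$. Away from the transition zone, $\Phi_h$ coincides with the real-analytic expressions $\widetilde\varphi_2$, $\widetilde\varphi_1$, $\Re\widetilde\varphi$ already used in the WKB construction, and one arranges $\Phi_h\le\phi$ with $\Phi_h\to\phi$ locally uniformly.

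The second step applies Proposition~\ref{agmon} to $v=r$ with this weight. One has $(\widetilde Q-\widetilde\rho)r=-c(h)(\widetilde Q-\rho)\widehat{\mathbf w}+O(h^\infty)\widetilde{\mathbf v}$, and $\|e^{\Phi_h/h}(\widetilde Q-\widetilde\rho)r\|_{L^2}=O(h^\infty)$: on $\widetilde{\mathcal V}_1$ the pointwise bound $e^{\Phi_h/h}\cdot h^\infty e^{-\phi/h}$ is $O(h^\infty)$ because $\Phi_h\le\phi$, while on the exceptional strips $\Phi_h\le S(\mu)-\eta_h$ dominates the $O(h^{-N_0}e^{-S(\mu)/h})$ error by an $e^{-\eta_h/h}$ factor. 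On the RHS of the Agmon inequality the $\widetilde p_2$-term is non-negative and can be dropped; the term $\langle(\widetilde V_1-|\nabla\Phi_h|^2)e^{\Phi_h/h}r_1,e^{\Phi_h/h}r_1\rangle$ is coercive because, on the support of $r_1$, $|\nabla\Phi_h|^2\approx|\nabla\phi|^2\lesssim\mu-y_n\le\widetilde V_1$ by the eikonal equation; the terms $h\|e^{\Phi_h/h}r\|^2$ and $h\lambda_h\|e^{\Phi_h/h}r_2\|_{\Omega_\delta}^2$ are absorbed using the invertibility of $\widetilde Q-\widetilde\rho$ on $(\widetilde{\mathbf v})^\perp$ with norm $O(1/h)$ supplied by the Grushin/Fredholm argument recalled at the beginning of Section~\ref{intboule} (and Proposition~\ref{propC1}). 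This yields $\|e^{\Phi_h/h}r\|_{L^2}=O(h^\infty)$. A semiclassical elliptic bootstrap on the conjugated operator $e^{\Phi_h/h}(\widetilde Q-\widetilde\rho)e^{-\Phi_h/h}$ (whose diagonal $\widetilde Q_1$-block is a real second-order differential operator, and whose $\widetilde Q_2$-block is elliptic at infinity in $\eta$ by Proposition~\ref{opp2}) upgrades this $L^2$ bound to pointwise control of all derivatives, and letting $\Phi_h\to\phi$ gives the first assertion. The second assertion follows by the same scheme applied directly to $\widetilde{\mathbf v}$ (with $(\widetilde Q-\widetilde\rho)\widetilde{\mathbf v}=0$) using the capped weight $\Phi_h\approx\min(\phi,S(\mu)-\varepsilon)$.

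The main obstacle is calibrating $\lambda_h$, $\eta_h$, and the transition zone $\Omega$ so that the three competing requirements coexist: $\lambda_h\to 0$ fast enough that $h\lambda_h$ is absorbable by the Grushin-inverse cost, $\Phi_h$ approximates $\phi$ closely enough to preserve $\widetilde V_1\ge|\nabla\Phi_h|^2$ on $\{y_n<\mu\}$, and $\Phi_h$ remains bounded above by $S(\mu)-\eta_h$ on the exceptional support where only the crude $O(h^{-N_0}e^{-S(\mu)/h})$ quasi-mode bound is available. This last constraint is precisely what forces the loss of an arbitrarily small $\varepsilon$ in the second assertion.
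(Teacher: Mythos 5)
Your proposal correctly identifies the overall Agmon-estimate framework and several of the right ingredients (smoothing $\phi$ near $\Gamma$ at scale $h$, capping the weight below $S(\mu)$ so the crude $h^{-N_0}e^{-S(\mu)/h}$ quasi-mode error is harmless, elliptic bootstrap for derivatives). However, there is a genuine gap at the heart of the argument: you never introduce the \emph{logarithmic modification} of the weight, and the mechanism you propose in its place does not work.

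The paper sets $\Phi_0 = \widetilde\phi - Ch\ln\bigl(1 + (\widetilde\phi/(Ch)-1)\chi(\widetilde\phi/(Ch))\bigr)$ with $C$ large but fixed, then caps $\Phi = \chi_\varepsilon(\Phi_0)$. The logarithmic term is not cosmetic: it is precisely what makes the Agmon quadratic form coercive. Indeed, with your weight $\Phi_h\approx\min(\phi, S(\mu)-\eta_h)$, on $\Gamma_-$ one has $\Phi_h\approx\widetilde\varphi_2$ and hence $\widetilde p_2(y,i\nabla\Phi_h)\approx\widetilde p_2(y,i\nabla\widetilde\varphi_2) = 0$; and on $\Gamma_+$ one has $\Phi_h\approx\widetilde\varphi_1$, with $|\nabla\widetilde\varphi_1|^2 = \mu - y_n = \widetilde V_1$ exactly (eikonal for $\widetilde p_1$), so $\widetilde V_1 - |\nabla\Phi_h|^2\approx 0$ — your claim that $|\nabla\phi|^2\lesssim\mu-y_n$ holds with a strict inequality is false precisely where it is needed. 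Thus neither positive term on the right of Proposition~\ref{agmon} dominates the fixed $-C_2h\Vert e^{\Phi_h/h}v\Vert^2$ and $-C_2h\lambda_h\Vert e^{\Phi_h/h}v_2\Vert^2_{\Omega_\delta}$ errors, and the inequality does not close. The paper's $\Phi_0$ fixes this: since $\nabla\Phi_0 = (1 - F(\widetilde\phi/(Ch)))\nabla\widetilde\phi$ with $F(t)\sim 1/t$, one gets $\widetilde p_2(y,i\nabla\Phi_0)\geq Ch|\nabla\widetilde\phi|^2/(4\widetilde\phi)\gtrsim C'h$ and $\widetilde V_1-|\nabla\Phi_0|^2\geq C'h\la y_n\ra$ on $\{\widetilde\phi\geq C^2h\}$, with $C' = C'(C)\to\infty$ as $C\to\infty$, so choosing $C$ large absorbs the fixed constant $C_2$. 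On $\{\widetilde\phi\leq C^2h\}$, $e^{\Phi/h}={\mathcal O}(1)$ and the local estimate (\ref{prespuits}) supplies the $O(h^\infty)$; the $(Ch/\widetilde\phi)^C$ loss from the logarithm is then harmlessly absorbed into the $h^\infty$.

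Your proposed substitute — absorbing the $O(h)$-order terms via the resolvent bound $\Vert(\widetilde Q-\widetilde\rho)^{-1}\Vert_{\widetilde{\mathbf v}^\perp} = {\mathcal O}(h^{-1})$, enforced by choosing $c(h)$ so that $r\perp\widetilde{\mathbf v}$ — does not do what you need. The Agmon inequality is a statement about a conjugated, non-self-adjoint quadratic form; the spectral gap is an \emph{unweighted} $L^2$ statement, the conjugation by $e^{\Phi/h}$ does not preserve $\widetilde{\mathbf v}^\perp$, and a resolvent bound does not convert into a lower bound on the weighted quadratic form that could absorb $-C_2h\Vert e^{\Phi/h}r\Vert^2$. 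Without a strictly positive replacement for the vanishing eikonal coercivity, the inequality cannot be closed and the $h^\infty$ improvement is lost. Also note that the paper takes $\lambda_h=1$ here (the scale-$h$ smoothing $\widetilde\phi = \widetilde\varphi_2 + \chi_0(z/h)(\widetilde\varphi_1-\widetilde\varphi_2)$ already has $\partial^\alpha\widetilde\phi = {\mathcal O}(h^{2-|\alpha|})$ because $\widetilde\varphi_1-\widetilde\varphi_2$ vanishes to second order on $\Gamma$); taking $\lambda_h\asymp k^{-1}$ as you suggest only makes the $h\lambda_h$ error larger and harder to absorb.
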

\begin{proof} At first, 
we observe that, by arguments similar to those of  \cite{HeSj1} (that is, using the WKB solution ${\mathbf w}$ constructed in Section \ref{sectionWKB}, and comparing it with $\widetilde\Pi {\mathbf w}$, where $\widetilde\Pi$ is the spectral projector of $\widetilde Q$ onto $Ker(\widetilde Q -\widetilde \rho)$),  for any multi-index $\alpha$, one has,
\be
\label{prespuits}
\partial^\alpha (\widetilde{\mathbf v} -{\mathbf w})={\mathcal O}(h^\infty),
\ee
uniformly in some ($\mu$-dependent) neighborhood of 0.
\vskip 0.2cm

Then, we  consider the regularization $\widetilde \phi$ of $\phi$ obtained by 
modifying it near $\Gamma$ in the following way: 
$$
\widetilde\phi = \widetilde\varphi_2 + \chi_0(z/h)(\widetilde\varphi_1-\widetilde\varphi_2)
$$
where  $z=z(y)$ is defined in (\ref{z}), and $\chi_0
\in C^\infty 
(\R; [0,1])$, $\chi_0(t)=0$ for $t\leq -1$, $\chi_0 (t) = 1$ for $t\geq 1$, and $0\leq \chi'\leq 3/2$. 
In particular, $\widetilde \phi$ is well defined on $\{ \phi < S(\mu)\}$, and, by (\ref{z}), we see that $\widetilde \phi$ satisfies
(\ref{hypoPhi1})-(\ref{hypoPhi2})  locally uniformly, with $\lambda_h=1$. 
Moreover, one also has,
$$
\widetilde\phi - \phi ={\mathcal O}( h^2)\quad ; \quad \nabla\widetilde\phi - \nabla\phi
 ={\mathcal O}( h) {\rm \quad a.e.},
$$
and, for some $\delta =\delta(\mu)>0$ small, $\partial^\alpha\widetilde\phi ={\mathcal O}(1)$ on $\{\widetilde\phi <\delta\}$.

Now, for $C>0$ fixed large enough, we consider the function
\be
\label{Phi0}
\Phi_0 = \widetilde\phi -Ch\ln \left( 1 + (\frac{\widetilde\phi}{Ch} -1)\chi 
( \widetilde\phi /(Ch) ) \right)
\ee
where $\chi \in C^\infty (\R)$, $\chi (t) =0$ for $t\leq 1$, $\chi (t) =1$ 
for $t\geq 2$, $0\leq \chi'\leq 2$.
\vskip 0.2cm 
On the support of $\chi'(\widetilde\phi /(Ch))$, we have $\widetilde\Phi =\widetilde\varphi_2$ e $\nabla\widetilde\phi ={\mathcal O}(\sqrt {Ch})$. Therefore, on this set, we see that $\Phi_0$ satisfies,
$$
\partial^\alpha \Phi_0 = {\mathcal O}(1 + h^{1-\vert\alpha\vert /2}),
$$
uniformly with respect to $C\geq 1$ and $h$ small enough. On the other hand, away from Supp$\chi'(\widetilde\phi /(Ch))$, we have  either $\Phi_0 =\widetilde\phi$, or $\Phi_0 =\widetilde\phi -Ch\ln (\widetilde\phi /Ch)\\ = \widetilde\phi -Ch\ln \widetilde\phi +Ch\ln (Ch)$. In any case,  we conclude that $\Phi_0$ satisfies (\ref{hypoPhi1})-(\ref{hypoPhi2}) with $\lambda_h=1$, locally uniformly on $\{ \phi < S(\mu)\}$, and uniformly with respect to the pair $(C,h)$ such that $Ch\leq 1$.

\vskip 0.2cm
We also have,
$$
\widetilde p_2 (y, i\nabla\Phi_0(y))=\widetilde p_2 (y, i(1-F(\widetilde\phi /(Ch)))
\nabla\widetilde\phi(y))
$$
with $F(t)= ((t-1)\chi'(t) +\chi(t))/(1+(t-1)\chi (t))$, and one can easily check that $\vert F(t)\vert\leq 4/t$. Therefore,  $F(
\widetilde\phi /(Ch))\nabla\widetilde\phi(y) \leq 
4h\vert\nabla\widetilde\phi\vert /\widetilde
\phi$, and thus, 
by a Taylor expansion, we find,
\begin{eqnarray*}
\widetilde p_2 (y, i\nabla\Phi_0(y))=\widetilde p_2 (y, i\nabla\widetilde\phi(y))-
iF(\widetilde\phi /(Ch))\nabla\widetilde\phi(y)\nabla_\eta
\widetilde p_2 (y, i\nabla\widetilde\phi(y)) \\+{\mathcal O}((Ch\vert\nabla\widetilde\phi\vert 
/\widetilde\phi)^2).
\end{eqnarray*}
Since $\widetilde p_2 (y, i\nabla\widetilde\varphi_2(y))=0$ and
$\widetilde p_2(y,i\nabla\widetilde\varphi_1(y))>0$ on $\Gamma_+$ (see Appendix \ref{appD}),
 and 
$-i\nabla\widetilde\phi(y)\nabla_\eta
\widetilde p_2 (y, i\nabla\widetilde\phi(y))\geq \delta_0
\vert\nabla\widetilde\phi(y)\vert^2$ 
in some fixed neighborhood of 0 and for some uniform constant $\delta_0>0$, we conclude,
$$
\widetilde p_2 (y, i\nabla\Phi_0(y))\geq \delta_0F(\widetilde\phi /(Ch))
\vert\nabla\widetilde
\phi(y)\vert^2+{\mathcal O}((Ch\vert\nabla\widetilde\phi\vert /\widetilde\phi)^2)+
{\mathcal O}(h^2) +{\mathcal O}_\Gamma(h)
$$
where the last ${\mathcal O}_\Gamma(h)$ is uniform with respect to $C$ and 
supported near $\Gamma$.
In particular, on $\{ \widetilde \phi\geq C^2h\}$, and if $C$ is 
large enough,
$$
\widetilde p_2 (y, i\nabla\Phi_0(y))\geq C\frac{h}{2\widetilde\phi}\vert\nabla
\widetilde\phi(y)\vert^2 +{\mathcal O}(h^2) +{\mathcal O}_\Gamma(h).
$$
Since, near $\Gamma$, $\vert\nabla\widetilde\phi\vert$
 stays uniformly away from 0, we 
finally obtain, by taking $C$ large enough,
\be
\label{ellp2}
\widetilde p_2 (y, i\nabla\Phi_0(y))\geq C\frac{h}{4\widetilde\phi}\vert\nabla
\widetilde\phi(y)\vert^2\quad {\rm on}\quad \{ \widetilde \phi\geq C^2h\},
\ee
uniformly for $h$ small enough.

Now, for a given $\varepsilon >0$ arbitrarily small, we set,
$$
\Phi(y)= \chi_\varepsilon ( \Phi_0(y) ),
$$
where   
$\chi_\varepsilon\in C^\infty (\R_+ )$ is such that 
$\chi_\varepsilon (t) 
= t$ if
$0\leq  t\leq S(\mu ) -3\varepsilon$, $\chi_\varepsilon (t) = 
S(\mu ) -2\varepsilon$ if
$t \geq S(\mu ) -\varepsilon$, $0\leq \chi_\varepsilon'(t)\leq 1$ everywhere.
Then, the function $\Phi$ is smooth on $\R^n$ and it  satisfies,
\begin{eqnarray*}
&&\Phi = \Phi_0(y)\quad {\rm on}\quad \{ \phi \leq S(\mu) - 
4\varepsilon\};\\
&&\Phi= S(\mu) -2\varepsilon\quad {\rm on}\quad \{ \phi \geq S(\mu) - 
\frac{\varepsilon}{2}\};\\
&&\vert\nabla\Phi\vert \leq  \vert \nabla\Phi_0\vert \quad {\rm everywhere};\\
&& \partial^\alpha\Phi ={\mathcal O}(h^{-(2-|\alpha|)_+}) \mbox{ uniformly on } \R^n.
\end{eqnarray*}
Moreover, the last estimate is also uniform with respect to the pair $(C,h)$ as long as $Ch\leq 1$.
\vskip 0.2cm
Then, thanks to  (\ref{ellp2}), and the fact that $\vert\nabla
\phi\vert^2 \geq \delta \phi$ for some positive constant  $\delta$ on 
$\{\phi\leq S(\mu )- \varepsilon \}$, we see
 that, (denoting by ${\bf 1}_A$ the characteristic function of a set $A$),
$$
\widetilde p_2(y,i\nabla\Phi(y))\geq C'h+\delta_3{\bf 1}_{\{ \phi (y)\geq S(\mu )-\varepsilon /2\}}
$$
on $\{\widetilde\phi \geq
C^2h\}$, with $\delta_3 >0$ and $C'=C'(C)>o$ arbitrarily large, depending on the 
choice of $C$ (here, we also use the fact that
$\widetilde p_2(y,is\nabla\widetilde\varphi_1(y)) >0$ for $y\in \Gamma_+$ and 
$0\leq s\leq 1$: See Appendix \ref{appD}).
On the other hand, on the support of $\nabla\Phi
$, we have $\vert\nabla\Phi\vert^2\leq \vert\nabla\Phi_0\vert^2\leq
(1-Ch/\widetilde\phi )\vert\nabla\widetilde\phi\vert^2$, and we also obtain
on this set,
$$
\widetilde V_1 (y)-(\nabla\Phi)^2= \mu - y_n - (\nabla\Phi)^2\geq C'h\la y_n\ra,
$$
with $C'>0$ arbitrarily large,
while, away from Supp$\nabla\Phi$, we have $\widetilde V_1 (y)-(\nabla\Phi)^2=\widetilde V_1 (y)\geq \delta k^{2/3}\la y_n\ra >> h\la y_n\ra $. 

Finally, on $\{\phi\leq C^2h\}$, one has $\Phi=\phi$, and thus $e^{\Phi /h} = 
{\mathcal O}(1)$.

Now, we apply the Agmon 
estimates of Proposition \ref{agmon} with $v:=\widetilde{\mathbf v}-{\mathbf w}$ (and $\lambda_h=1$). Then,
thanks also to (\ref{eqW1})-(\ref{eqW3}) and Remark \ref{unifwrphi},
we  obtain,
\begin{eqnarray*}
\frac1{C_2}\Vert h\nabla (e^{\Phi /h}v)\Vert^2 + (C'-C_2)h\Vert e^{\Phi /h}v_2\Vert^2_{\phi\geq C^2h}+(C'-C_2)h\Vert \sqrt{\la y_n}\ra e^{\Phi /h}v_1\Vert^2 \\
={\mathcal O}(\Vert v\Vert_{\phi\leq C^2h} )+{\mathcal O}(h^\infty),
\end{eqnarray*}
where the constant $C_2$ does not depend on the choice of $C$ in (\ref{Phi0}), and where $C'=C'(C)$ tends to $\infty$ as $C\rightarrow +\infty$. Therefore, choosing $C$ sufficiently large, and using (\ref{prespuits}), we conclude,
$$
\Vert h\nabla (e^{\Phi /h}v)\Vert + \Vert \sqrt{\la y_n}\ra e^{\Phi /h}v_1\Vert +\Vert e^{\Phi /h}v_2\Vert={\mathcal O}(h^\infty).
$$
Finally, using repeatedly the partial differential equations satisfied by $\widetilde{\mathbf v}$ and $\mathbf w$, and taking advantage of the classical ellipticity of $\widetilde Q_1$ and $Q_2$, together with the positivity of $\widetilde V_1$ for $|y_n|$ large, we deduce in a standard way that, for all $s\in\R$,
\be
\label{estv1}
\Vert \la y_n\ra^s e^{\Phi /h}v_1\Vert_{H^s(\R^n)} +\Vert e^{\Phi /h}v_2\Vert_{H^s(\R^n)}={\mathcal O}(h^\infty),
\ee
and thus, for any $\alpha\in\N^n$, by standard Sobolev estimates,
$$
\partial^{\alpha}v ={\mathcal O}(h^\infty e^{-\Phi /h}),
$$
uniformly. Then, the results follows from the fact that, on $\{\phi \leq S(\mu) -4\varepsilon\}$, one has  $\Phi =\Phi_0\geq \phi -Ch\ln (S(\mu)/Ch)$, while, on $\R^n\backslash \{\phi \leq S(\mu) -4\varepsilon\}$, one has $\Phi \geq S(\mu) - 5\varepsilon$.
\end{proof}

Now, we are able to compare the (complex-translated) resonant state ${\mathbf v}$ with $\mathbf w$.
\begin{proposition} 
\label{estv-w1}
For any multi-index 
$\alpha$, the first (conveniently normalized) eigenfunction ${\mathbf v}$ of $Q$ satisfies,
\be
\partial^\alpha ({\mathbf v} - {\mathbf w}) = {\mathcal O}(h^\infty e^{-\phi /h})
\ee
locally uniformly in $\{\phi < S(\mu)\}$. Moreover, for any $\varepsilon >0$
\be
\partial^\alpha {{\mathbf v}} = {\mathcal O}(e^{-(S(\mu)-\varepsilon) /h}),
\ee
locally uniformly on $\R^n\backslash \{\phi < S(\mu)\}$.
\end{proposition}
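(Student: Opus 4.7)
The plan is to derive Proposition~\ref{estv-w1} from Proposition~\ref{estile} by a spectral comparison between the resonant state $\mathbf{v}$ of $Q$ and the eigenfunction $\widetilde{\mathbf{v}}$ of $\widetilde Q$. The central point is that $Q$ and $\widetilde Q$ coincide on $\{y_n\leq\mu-2k^{2/3}\}$, while Proposition~\ref{estile} already furnishes sharp exponential decay of $\widetilde{\mathbf{v}}$ on the complementary region, so that the difference of the two eigenfunctions is driven by an exponentially small source term.

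Let $\Pi$ denote the spectral projector of $Q$ onto its simple eigenvalue $\rho$, let $\mathbf{v}^*$ be a dual eigenfunction of $Q^*$ normalized by $\la\mathbf{v},\mathbf{v}^*\ra=1$, and set $c:=\la\widetilde{\mathbf{v}},\mathbf{v}^*\ra$, so that $u:=\widetilde{\mathbf{v}}-c\mathbf{v}\in\mathrm{Ran}(1-\Pi)$ and
\[
(Q-\rho)u=(Q-\widetilde Q)\widetilde{\mathbf{v}}+(\widetilde\rho-\rho)\widetilde{\mathbf{v}}.
\]
The first right-hand term is supported in $\{y_n>\mu-2k^{2/3}\}$, where the multiplier $y_n-\mu+\widetilde V_1$ defining $Q-\widetilde Q$ grows like $\la y_n\ra$; combining the pointwise decay of Proposition~\ref{estile} with the $\la y_n\ra^s$-weighted $L^2$-estimate (\ref{estv1}) yields $\|(Q-\widetilde Q)\widetilde{\mathbf{v}}\|_{L^2}=\mathcal{O}(e^{-(S(\mu)-\varepsilon)/h})$ for any $\varepsilon>0$, and the second term is $\mathcal{O}(h^\infty)$ in $L^2$. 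Since $\rho$ is isolated from the remaining spectrum of $Q$ by a gap of order $h$ (inherited from the harmonic-oscillator asymptotics by the perturbation argument opening this subsection), the resolvent $(Q-\rho)^{-1}$ is polynomially bounded on $\mathrm{Ran}(1-\Pi)$, so that $\|u\|_{L^2}=\mathcal{O}(e^{-(S(\mu)-2\varepsilon)/h})$; elliptic regularity then converts this into $\partial^\alpha u=\mathcal{O}(e^{-(S(\mu)-3\varepsilon)/h})$ locally uniformly, while the spectral-projector argument of \cite{HeSj1} applied to $\mathbf{v}$ in a neighborhood of $0$ also gives $|c-1|=\mathcal{O}(h^\infty)$.

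To conclude, write
\[
\mathbf{v}-\mathbf{w}=c^{-1}(\widetilde{\mathbf{v}}-\mathbf{w})+(c^{-1}-1)\mathbf{w}-c^{-1}u.
\]
On $\{\phi<S(\mu)-4\varepsilon\}$, the first term is $\mathcal{O}(h^\infty e^{-\phi/h})$ by Proposition~\ref{estile}, the second is the product $\mathcal{O}(h^\infty)\cdot\mathcal{O}(e^{-\phi/h})$, and the third is bounded by $e^{-(S(\mu)-3\varepsilon)/h}\leq h^\infty e^{-\phi/h}$ in this region. Letting $\varepsilon\to 0$ gives the first assertion of Proposition~\ref{estv-w1}; the second assertion follows from the identity $\mathbf{v}=c^{-1}(\widetilde{\mathbf{v}}-u)$ combined with the pointwise decay of $\widetilde{\mathbf{v}}$ on $\R^n\setminus\{\phi<S(\mu)\}$ supplied by Proposition~\ref{estile}.

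The main technical obstacle is the estimate of $\|(Q-\widetilde Q)\widetilde{\mathbf{v}}\|_{L^2}$ in the first step: because $Q-\widetilde Q$ grows linearly in $y_n$, the plain $L^\infty$-decay provided by Proposition~\ref{estile} is not enough, and one must invoke the $\la y_n\ra^s$-weighted $L^2$-decay obtained in its proof, itself a consequence of the positivity of $\widetilde V_1$ at infinity engineered into the definition of $\widetilde Q_1$.
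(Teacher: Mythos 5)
Your overall strategy is the same as the paper's: compare $\widetilde{\mathbf{v}}$ with its projection onto the resonant eigenspace of $Q$, estimate the discrepancy through the perturbation $Q-\widetilde Q$ (which is supported in $\{y_n>\mu-2k^{2/3}\}$ and controlled via the weighted estimate (\ref{estv1})), and then combine with Proposition~\ref{estile}. The decomposition you use — writing $u:=\widetilde{\mathbf{v}}-c\mathbf{v}$ with $c=\la\widetilde{\mathbf{v}},\mathbf{v}^*\ra$ — is equivalent to the paper's $\mathbf{v}_0:=\Pi\widetilde{\mathbf{v}}$, since for a simple eigenvalue the spectral projector is precisely $\Pi=\la\cdot,\mathbf{v}^*\ra\,\mathbf{v}$. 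So structurally you and the paper are doing the same thing.

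The genuine gap is in your justification of the resolvent bound. You assert that because $\rho$ is isolated from the rest of the spectrum of $Q$ by a gap of order $h$, the reduced resolvent $(Q-\rho)^{-1}$ on $\mathrm{Ran}(1-\Pi)$ is polynomially bounded. For a self-adjoint operator this would follow immediately from the spectral theorem, but $Q$ is not self-adjoint (it is a complex distortion of $P$), and for non-self-adjoint operators the distance to the spectrum controls nothing: the pseudospectrum can be much larger than the spectrum, and isolated eigenvalues can have reduced resolvents growing arbitrarily fast. A bound on $(z-Q)^{-1}$ near $\rho$ therefore cannot be taken for granted and must be established. The paper does exactly this: along the circle $|z-\rho|=\delta h$ it proves $\Vert(z-Q)^{-1}\Vert=\mathcal{O}(h^{-1})$ by exploiting the block structure of $Q$, namely that $\Vert(z-Q_1)^{-1}\Vert=\mathcal{O}(k^{-1})$, $\Vert(z-Q_2)^{-1}\Vert=\mathcal{O}(h^{-1})$, and hence $\Vert hR_{2,1}(Q_1-z)^{-1}hR_{1,2}(Q_2-z)^{-1}\Vert=\mathcal{O}(|\ln h|^{-1})$, which allows a Neumann-series inversion of the coupled system. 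This structural argument — not the spectral gap — is what actually delivers the polynomial bound you need, and your proof should incorporate it (or an equivalent pseudospectral estimate). Once this is filled in, the rest of your argument, including the $\la y_n\ra^s$-weighted estimate of $(Q-\widetilde Q)\widetilde{\mathbf{v}}$ and the elliptic bootstrap, matches the paper's route.

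Two minor remarks. First, the paper obtains the normalization constant as $\lambda=1+\mathcal{O}(e^{-(S(\mu)-\varepsilon)/h})$ directly from $\Vert\mathbf{v}_0\Vert=1+\mathcal{O}(e^{-(S(\mu)-\varepsilon)/h})$, which is sharper and more natural in this setting than the $\mathcal{O}(h^\infty)$ bound you invoke from the \cite{HeSj1}-type argument. Second, the contour-integral formula $\mathbf{v}_0=\widetilde{\mathbf{v}}+\frac1{2i\pi}\int_\gamma(z-\widetilde E)^{-1}(z-Q)^{-1}r\,dz$ is a cleaner way to package the resolvent estimate than working directly with the reduced resolvent on $\mathrm{Ran}(1-\Pi)$, since it only requires the bound on the full resolvent away from $\rho$ (on $\gamma$).
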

\begin{proof} Let $\gamma$ be the oriented circle centered at $\rho$ with radius $\delta h$, where $\delta >0$ is a small enough constant. Then, $\gamma$ does not meet the spectrum of $Q$, and we can write the spectral projector $\Pi$ of $Q$ associated with $\rho$, as,
$$
\Pi = \frac1{2i\pi}\oint_\gamma (z-Q)^{-1}dz.
$$
We set,
$$
{\mathbf v}_0 := \Pi \widetilde{\mathbf v}.
$$
Then, ${\mathbf v}_0$ is solution of $(Q-\rho){\mathbf v}_0=0$ and is colinear with ${\mathbf v}$.
Therefore, the result will be an easy consequence of Proposition \ref{estile} and the following lemma:
\begin{lemma}\sl
\label{estv0vtilde}
For any multi-index 
$\alpha$ and for any $\varepsilon >0$, one has,
\be
\partial^\alpha ({\mathbf v}_0 -  \widetilde{\mathbf v}) = {\mathcal O}(e^{-(S(\mu) -\varepsilon)/h})
\ee
locally uniformly in $\R^n$.
\end{lemma}
\begin{proof} Since $\widetilde Q  \widetilde{\mathbf v} =\widetilde E  \widetilde{\mathbf v}$, for any $z\in \gamma$, we have,
\be
\label{eqapprvtilde}
(Q-z) \widetilde{\mathbf v} = (\widetilde E -z) \widetilde{\mathbf v} + r,
\ee
with $r:= (Q-\widetilde Q) \widetilde{\mathbf v}=((\la y_n\ra + y_n -\mu)(1-\chi_h(y_n-\mu)) \widetilde{\mathbf v}_1, 0) $ (where we have set $ \widetilde{\mathbf v}=:( \widetilde{\mathbf v}_1,  \widetilde{\mathbf v}_2)$). In particular, by (\ref{estv1}), for any $s\geq0$ and any $\varepsilon >0$, we see,
\be
\label{estr}
\Vert r\Vert_{H^s(\R^n)} ={\mathcal O}(e^{-(S(\mu)-\varepsilon )/h}).
\ee
Moreover, we deduce from (\ref{eqapprvtilde}),
$$
(z-Q)^{-1}\widetilde{\mathbf v}= (z-\widetilde E)^{-1}\widetilde{\mathbf v} +(z-\widetilde E)^{-1}(z-Q)^{-1}r,
$$
and thus,
\be
\label{compv0v}
{\mathbf v}_0 = \widetilde{\mathbf v} +\frac1{2i\pi}\int_\gamma (z-\widetilde E)^{-1}(z-Q)^{-1}r dz.
\ee
Now, along $\gamma$, we have $(z-\widetilde E)^{-1}={\mathcal O}(h^{-1})$, $(z-Q_1)^{-1} ={\mathcal O}(k^{-1})$, and $(z-Q_2)^{-1} ={\mathcal O}(h^{-1})$ (this last estimate comes from  Appendix \ref{appC} and the fact that $Q_2={\mathcal U}(-h^2\Delta +V_2(x',x_n-ik)){\mathcal U}^{-1}$). Therefore, 
$$
\Vert hR_{2,1}(Q_1-z)^{-1}hR_{1,2}(Q_2-z)^{-1}\Vert ={\mathcal O}(hk^{-1}) ={\mathcal O}\left(\frac1{|\ln h|}\right).
$$
In particular, for $h$ small enough, the operator $I+hR_{2,1}(Q_1-z)^{-1}hR_{1,2}(Q_2-z)^{-1}$ is invertible with uniformly bounded inverse, and one easily deduces that $\Vert (z-Q)^{-1}\Vert ={\mathcal O}(h^{-1})$. Since one also has $(z-\widetilde E)^{-1}={\mathcal O}(h^{-1})$, we learn from (\ref{estr})-(\ref{compv0v}),
$$
\Vert {\mathbf v}_0 - \widetilde{\mathbf v}\Vert_{L^2}={\mathcal O}(e^{-(S(\mu)-\varepsilon )/h}).
$$
The same estimate for the $H^s$-norms is proved in the same way, by applying any arbitrary power of $Q$ to (\ref{compv0v}), and the result follows by Sobolev estimates.
\end{proof}
Now, to complete the proof of the proposition, we just observe that, by construction, ${\mathbf v}$ and ${\mathbf v}_0$ are co-linear, and $\Vert {\mathbf v}_0\Vert =1+ {\mathcal O}(e^{-(S(\mu) -\varepsilon)/h})$. therefore, we can take ${\mathbf v}=\lambda {\mathbf v}_0$ with $\lambda  =1+ {\mathcal O}(e^{-(S(\mu) -\varepsilon)/h})$, and the result follows from Lemma \ref{estv0vtilde} and Proposition \ref{estile}.
\end{proof}

\subsection{Near  $\{ y_n =\mu\}$}

\subsubsection{A priori estimates}

We
first prove the following a priori  estimates:
\begin{proposition} \sl
\label{apriorivtilde}
For any $\alpha\in \N^n$,
there exists $N_\alpha\geq 0$ such that, for any $N\geq 1$,
the eigenfunction $\widetilde{\mathbf v}$ satisfies,
\be
\label{vtilde-W}
\partial^{\alpha}(\widetilde{\mathbf v} - {\mathbf w}) ={\mathcal O}( h^{-N_\alpha}e^{-S(\mu )/h})
\ee
locally uniformly in $ \{y_n\geq \mu  - (Nk)^{2/3}\}$ (where $k:=h\ln h^{-1}$).
\end{proposition}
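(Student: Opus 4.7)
The plan is to iterate the Agmon-estimate argument of Section~5 (as in the proof of Proposition~\ref{estile}) with a weight that is truncated just below the level $S(\mu)$. Keeping the weight bounded by $S(\mu)-C_0k$ (with $k=h\ln h^{-1}$) in the extended region generates an $e^{C_0k/h}=h^{-C_0}$ loss, which---together with the $h^{-N_0}$ loss coming from the Airy-type WKB error near the caustic stated in Proposition~\ref{BKWtotal}---accounts for the polynomial factor $h^{-N_\alpha}$ in the conclusion.

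Concretely, starting from the regularized phase $\widetilde\phi$ used in the proof of Proposition~\ref{estile}, I would construct a smooth weight $\Phi_N$ with $\Phi_N=\widetilde\phi$ on $\{\widetilde\phi\leq S(\mu)-3C_0k\}$, $\Phi_N\equiv S(\mu)-C_0k$ on a neighborhood of $\{y_n\geq \mu-(Nk)^{2/3}\}$, and such that the regularity hypotheses on the weight in Propositions~\ref{poidsvar} and \ref{opp2} hold locally uniformly with $\lambda_h=1$. The constant $C_0$ is chosen large enough that the transition strip $\{S(\mu)-3C_0k\leq\widetilde\phi\leq S(\mu)-C_0k\}$ lies inside the region where $\widetilde p_2(y,i\nabla\widetilde\phi)$ is bounded below by a positive constant, while on the flat piece where $\nabla\Phi_N=0$ the Agmon positivity is trivial since $\widetilde p_2(y,0)=\tau V_2(y)\geq 0$ and $\widetilde V_1(y_n)-|\nabla\Phi_N|^2=\widetilde V_1(y_n)\geq \delta k^{2/3}\la y_n\ra$. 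Setting $v:=\widetilde{\mathbf v}-\mathbf w$, and using that $\widetilde Q\widetilde{\mathbf v}=\widetilde\rho\widetilde{\mathbf v}$ with $\widetilde\rho-\rho=\mathcal{O}(h^\infty)$ together with the error estimates (\ref{eqW1})--(\ref{eqW3}), I would derive $(\widetilde Q-\rho)v=r$, where $r$ is the sum of $(\widetilde\rho-\rho)\widetilde{\mathbf v}$, of the WKB error term (bounded as in (\ref{eqW1})--(\ref{eqW3}) by $\mathcal{O}(h^{-N_0}e^{-\varphi/h})$ on the support of $\mathbf w$), and of $(\widetilde P-\widetilde Q)\mathbf w$ (supported in $\{y_n\geq \mu-2k^{2/3}\}$, where $\chi_h\neq 1$), with $\Phi_N\leq\varphi$ on $\mathrm{supp}\, r$. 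This makes $\|e^{\Phi_N/h}r\|_{L^2}=\mathcal{O}(h^{-M})$ for some $M$, and applying Proposition~\ref{agmon} together with the interior matching from Proposition~\ref{estile} (to absorb the lower-order terms where $\Phi_N=\widetilde\phi$) yields the $L^2$-estimate $\|e^{\Phi_N/h}v\|_{L^2}=\mathcal{O}(h^{-M'})$.

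A standard elliptic bootstrap, using the classical ellipticity of $Q_2$ and the coercivity of $\widetilde Q_1$ granted by $\widetilde V_1\geq \delta k^{2/3}\la y_n\ra$, then upgrades this to $\|e^{\Phi_N/h}v\|_{H^s}=\mathcal{O}(h^{-M_s})$ for every $s\geq 0$; Sobolev embedding converts this into the claimed pointwise estimate $\partial^\alpha v=\mathcal{O}(h^{-N_\alpha}e^{-S(\mu)/h})$ on $\{y_n\geq\mu-(Nk)^{2/3}\}$, where $\Phi_N\leq S(\mu)-C_0k$. The main obstacle is the construction of $\Phi_N$: it must smoothly interpolate between $\widetilde\phi$ and a constant while staying in the symbol class imposed by the hypotheses of Propositions~\ref{poidsvar} and \ref{opp2}, and the cap level $S(\mu)-C_0k$ must be low enough that the transition strip remains in the classically forbidden zone for $\widetilde p_2$, yet high enough that the flat region covers $\{y_n\geq\mu-(Nk)^{2/3}\}$. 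The quadratic geometry of the caustic ($\mu-g(y')\sim|y'-y'_\mu|^2$) just barely allows this, making $C_0$ (and hence $N_\alpha$) depend only on $\alpha$, with the dependence on $N$ absorbed into the locally uniform implicit constants.
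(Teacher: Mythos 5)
Your proposal captures the broad strategy (an $N$-dependent Agmon weight, combined with the WKB error estimates (\ref{eqW1})--(\ref{eqW3})), but it hinges on a construction of $\Phi_N$ that is geometrically impossible and, even if repaired, defeats the purpose of the proposition.

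You want $\Phi_N=\widetilde\phi$ on $\{\widetilde\phi\leq S(\mu)-3C_0k\}$ and $\Phi_N\equiv S(\mu)-C_0k$ on a neighborhood of $\{y_n\geq\mu-(Nk)^{2/3}\}$, with $C_0$ independent of $N$. These two requirements clash as soon as $N$ is large: on the slice $y'=y'_\mu$ (where $\widetilde\phi=\widetilde\phi_0$), the set $\{\widetilde\phi\leq S(\mu)-3C_0k\}$ contains all $y_n\leq\mu-(\tfrac92C_0k)^{2/3}$, while the required flat region contains all $y_n\geq\mu-(Nk)^{2/3}$; for $N\gg C_0$ these overlap, and on the overlap $\Phi_N$ would have to equal both $\widetilde\phi\,(\leq S(\mu)-3C_0k)$ and $S(\mu)-C_0k$. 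The only way out is to let $C_0\gtrsim N$, and then the $h^{-C_0}$ conversion loss when you undo the weight on the flat piece becomes $h^{-cN}$: the resulting $N_\alpha$ depends on $N$, which is exactly what the proposition (and the remark after it) says must \emph{not} happen. Your sentence claiming the ``dependence on $N$ [is] absorbed into the locally uniform implicit constants'' is where the argument breaks; ``locally uniform in $y$'' allows the constant to depend on the compact set in $y$, not on $N$.

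The paper's weight resolves this tension by \emph{overshooting}: $\Phi_N$ (built from the $y_n$-only lower bound $\widetilde\phi_0(y_n)=S(\mu)-\tfrac23(\mu-y_n)^{3/2}$, not from the $y'$-dependent $\widetilde\phi$) rises to the value $S(\mu)+Nk$ on $\{y_n\geq\mu-(\tfrac32Nk)^{2/3}\}$. The overshoot above $S(\mu)$ makes the supremum $c_2=\sup(\Phi_N-S(\mu))$ in the error term on $\{y_n\geq g(y')\}$ equal to $Nk$, producing a controlled loss $e^{2c_2/h}=h^{-2N}$ in the energy estimate, i.e.\ $\|e^{\Phi_N/h}v\|_{H^s}={\mathcal O}(h^{-(N+N_s)})$ with $N_s$ independent of $N$. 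When this is converted to a pointwise bound on the flat piece, the factor $e^{-\Phi_N/h}=h^{N}e^{-S(\mu)/h}$ cancels the $h^{-N}$ exactly, leaving $h^{-N_s}e^{-S(\mu)/h}$. This cancellation is the content of the proof; it cannot be obtained from a weight that is capped strictly below $S(\mu)$ at an $N$-independent level. A smaller technical point: the regularized $\widetilde\phi_0$ is only $C^{1}$-but-not-$C^{2}$ uniformly near $y_n=\mu$, so the correct hypothesis in Proposition~\ref{poidsvar}/\ref{opp2} is satisfied with $\lambda_h=k^{-1/3}$, not $\lambda_h=1$ as you assume.
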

\begin{remark}\sl  An important feature of the last estimate is that the number 
$N_\alpha$ does not depend on the choice of $N$.
\end{remark}

\begin{proof} Again, the proof is based on Agmon estimates,
but we have to be more precise on  the construction of the weight function 
near
$\{y_n= \mu\}$.  

Fix $\varepsilon >0$ small enough, and let $\chi \in C^\infty (\R; [0,1])$ such that $\chi =1$ on $[\mu -\varepsilon, +\infty)$, and $\chi =0$ on $(-\infty, \mu -2\varepsilon]$. We set,
\be
\label{defphi0tilde}
\widetilde\phi_0 (y_n)=\chi (y_n)\left(S(\mu) - \frac23(\mu -y_n)^{3/2}\right)
\ee
and, for $N\geq 1$ large enough and $y_n\in[0,\mu)$,  we define,
\be
\label{defPhiN}
\Phi_N(y_n) =  \widetilde\chi \left( \widetilde\phi_0(y_n)  +3k(S(\mu )-\widetilde\phi_0(y_n) )^{1/3} +3Nk\right)
\ee
where $\widetilde\chi\in C^\infty (\R_+)$ 
depends on $Nk$ and is such that $\widetilde\chi (s)= s$ for $s\leq S(\mu )$,
$\widetilde\chi (s) = S(\mu ) +Nk$ for $s\geq S(\mu ) +2Nk$, $0\leq \widetilde
\chi'\leq 1$ 
everywhere, and, for all $\ell \geq 0$,  
$\widetilde\chi^{(\ell )}={\mathcal O}((Nk)^{-(\ell
-1)_+})$ uniformly.

Then, $\Phi_N$ is well defined and smooth on $(-\infty,\mu)$, and we have $\Phi_N = S(\mu )+Nk$ when $\widetilde\phi_0  +3k(S(\mu )-\widetilde\phi_0 )^{1/3}\geq S(\mu )-Nk$. The latter condition is implied by
$S(\mu ) -\widetilde\phi_0 \leq Nk$, and thus, if we
extend $\widetilde \Phi_N$ by the constant value $S(\mu )+Nk$ on 
$\{\widetilde\phi_0  \geq S(\mu )-Nk\}$, we obtain a smooth function on $\R$, that satisfies,
\begin{eqnarray}
\Phi_N' &=&\widetilde\chi'\left( \widetilde\phi_0  +3k(S(\mu )-\widetilde\phi_0)^{1/3} +3Nk\right)\left( 1-\frac{k}{(S(\mu)-
\widetilde\phi_0 )^{2/3}}
\right)\widetilde\phi_0'\nonumber\\
\label{nablaPhiN}
  &=&{\mathcal O}(1).
\end{eqnarray}
Moreover, one can check that,
for $\ell\geq 1$, one has,
\be
\label{estderiv2Phi}
 \widetilde\phi_0^{(\ell)}={\mathcal O}\left(
 (S(\mu )-\widetilde\phi_0  )^{1-2\ell/3}\right) \mbox{ in } (\mu-2\varepsilon, \mu).
\ee
In particular, $\widetilde\phi_0' ={\mathcal O}((S(\mu) - \widetilde\phi_0)^{1/3})$, and,
since $S(\mu )-\widetilde\phi_0  \in [Nk , 3Nk]$ on 
the support of $\widetilde\chi''\left( \widetilde\phi_0  +3k(S(\mu )-\widetilde\phi_0 )^{1/3} +3Nk\right)$, and $\widetilde\chi^{(\ell )}={\mathcal O}((Nk)^{-(\ell
-1)_+})$,
we deduce that, for all $\ell\geq 2$, 
$$
\Phi_N^{(\ell)}={\mathcal O}(
(Nk)^{1-2\ell /3})
$$
uniformly in $(\mu-2\varepsilon, \mu)$. Moreover, by construction, $\Phi_N$ is constant on both $(-\infty, \mu-2\varepsilon]$ and $[\mu, +\infty)$, and since $S(\mu) -\widetilde \phi_0$ vanishes at $y_n=\mu$ only, we also have $\partial^\alpha\Phi_N ={\mathcal O}(1)$ uniformly on $[\mu -2\varepsilon, \mu -\varepsilon ]$. Therefore, $\Phi_N$ satisfies (\ref{hypoPhi2}) with $\lambda(h)=k^{-1/3}$ and $\Omega =\widetilde\Omega:=\{ y\in \R^n\,;\, \mu -\varepsilon \leq y_n\leq \mu\}$, and we can apply Proposition \ref{agmon} with $\Phi =\Phi_N$, $\lambda_h =k^{-1/3}$,  and $v  =\widetilde{\mathbf v}-{\mathbf w}$. Then, taking into account  (\ref{eqW1})-(\ref{eqW3}) and the fact that $\widetilde \rho -\rho ={\mathcal O}(h^\infty)$, for any $\delta >0$, we obtain,
\begin{eqnarray*}
&& \frac1{C_2}
\Vert h\nabla (e^{\Phi_N/h}v)\Vert^2
+\frac1{C_2}\langle \widetilde p_2(y,i\nabla\Phi_N(y))e^{\Phi_N/h}v_2,
e^{\Phi_N/h}v_2\rangle \\
&& \hskip 4cm+\langle \left(\widetilde V_1(y_n)-(\nabla\Phi_N)^2)\right)e^{\Phi_N/h}v_1,e^{\Phi_N/h}v_1\rangle\\
&&  \leq C_2h\Vert  e^{\Phi_N/h}v_1\Vert^2+C_2h k^{-1/3}\Vert  e^{\Phi_N/h}v_2\Vert_{\widetilde\Omega_\delta}^2 +C_2h \Vert  e^{\Phi_N/h}v_2\Vert^2
\\
&& \hskip 2cm+ \varepsilon (h)e^{ 2c_1/h}+C_2h^{-2N_0}e^{ 2c_2/h}+C_2h^{-2N_0}e^{ 2c_3/h},
\end{eqnarray*}
where $C_2>0$ is a constant, the quantities  $N_0, \varphi$ are those appearing in (\ref{eqW1})-(\ref{eqW3}), $ \varepsilon (h)={\mathcal O}(h^\infty)$, and where we have set,
\begin{eqnarray*}
&& c_1:= \sup_{\widetilde{\mathcal V}_2} (\Phi_N -\varphi);\\
&& c_2:= \sup_{\widetilde{\mathcal V}_2\cap \{y_n\geq g(y')\}} (\Phi_N -S(\mu));\\
&& c_3:= \sup_{\widetilde{\mathcal V}_2\backslash\widetilde{\mathcal V}_1} (\Phi_N -\varphi).
\end{eqnarray*}
Now, using Lemma \ref{lemmephi0},  we see,
$$
\Phi_N -\varphi \leq 3k(S(\mu )-\widetilde\phi_0 )^{1/3} + 3Nk\leq 3(N+S(\mu)^{1/3})k,
$$
and thus, for $h$ small enough, 
$$
e^{2c_1/h}\leq h^{-6(N+S(\mu)^{1/3})}.
$$
Moreover,  since $\Phi_N \leq S(\mu)+Nk$ everywhere, we have,
$$
e^{2c_2/h}=h^{-2N}.
$$
Finally, since $\varphi > S(\mu)$ on $\widetilde{\mathcal V}_2\backslash\widetilde{\mathcal V}_1$, we have $c_3<0$ and thus $e^{2c_3/h}={\mathcal O}(h^\infty)$. Summing up, for $N$ large enough, we have proved,
\begin{eqnarray}
&& \frac1{C_2}
\Vert h\nabla (e^{\Phi_N/h}v)\Vert^2
+\frac1{C_2}\langle \widetilde p_2(y,i\nabla\Phi_N(y))e^{\Phi_N/h}v_2,
e^{\Phi_N/h}v_2\rangle \nonumber\\
&& \hskip 5cm+\langle \left(\widetilde V_1(y_n)-(\nabla\Phi_N)^2)\right)e^{\Phi_N/h}v_1,e^{\Phi_N/h}v_1\rangle \nonumber\\
\label{presyn=mu}
&&  \leq C_2h\Vert  e^{\Phi_N/h}v_1\Vert^2+C_2h k^{-1/3}\Vert  e^{\Phi_N/h}v_2\Vert_{\widetilde\Omega_{\delta}}^2\\
&&\hskip 5cm +C_2h \Vert  e^{\Phi_N/h}v_2\Vert^2
+C_2h^{-2(N+N_0)}+C_N, \nonumber
\end{eqnarray}
where $C_N>0$ is an $h$-dependent constant.
Now, for $N$ sufficiently large, on Supp\hskip 1pt$\nabla\Phi_N$, we have $\widetilde V_1(y_n)=\mu -y_n$. Therefore, by (\ref{nablaPhiN}),  on $\widetilde\Omega\cap {\rm Supp}(\nabla\Phi_N)$, we have,
\begin{eqnarray*}
\widetilde V_1(y_n)-(\nabla\Phi_N)^2 &\geq& \mu -y_n-(1-k/(S-\widetilde\phi_0)^{2/3})^2(\nabla\widetilde\phi_0)^2\\
&\geq& (\mu -y_n)\left( \frac{2k}{(S-\widetilde\phi_0)^{2/3}}-\frac{k^2}{(S-\widetilde\phi_0)^{4/3}}\right).
\end{eqnarray*}
Since, on this set, we also have $k(S-\widetilde\phi_0)^{-2/3}\leq k^{1/3}$ and $S-\widetilde\phi_0 ={\mathcal O}((\mu -y_n)^{3/2})$, we conclude that, for $h$ small enough,
$$
\widetilde V_1(y_n)-(\nabla\Phi_N)^2\geq \frac{k}{C_3} \quad \mbox{on } \widetilde\Omega\cap {\rm Supp}(\nabla\Phi_N),
$$
where $C_3>0$ is a large enough constant. On the other hand, away from ${\rm Supp}(\nabla\Phi_N)$, we have $\widetilde V_1(y_n)-(\nabla\Phi_N)^2 =\widetilde V_1(y_n)\geq  k^{2/3}\la y_n\ra /C_3$, and thus, in any case,
\be
\label{ellipp1}
\widetilde V_1(y_n)-(\nabla\Phi_N)^2\geq \frac{k\la y_n\ra}{C_3} \quad \mbox{on }  \{ y_n\geq \mu-\varepsilon\}.
\ee
Moreover, since $\nabla\widetilde\phi_0 =0$ on $\{y_n=\mu\}$, while $V_2> 0$ there, one easily checks that, if $\varepsilon$ and $\delta$ are chosen sufficiently small,  we have,
\be
\label{ellipp2}
\widetilde p_2(y,i\nabla\Phi_N(y)) \geq c_\mu \quad {\rm on} \quad \{y_n\geq\mu-\varepsilon -\delta\},
\ee
where $c_\mu >0$ is a constant.
\vskip 0.2cm
On the other hand, on $\{ y_n\leq \mu -\varepsilon\}$, using Lemma \ref{lemmephi0}, we see that,
$$
\Phi_N\leq {\rm Min}\{ S(\mu)-\delta_\mu , \widetilde\phi +4Nk\},
$$
 for some $\delta_\mu >0$ independent of $h$. Therefore, by using Proposition \ref{estile}, we obtain,
\be
\label{ellipp3}
\Vert h\nabla (e^{\Phi_N/h}v)\Vert_{\{ y_n\leq \mu -\varepsilon\}} + \Vert e^{\Phi_N/h}v\Vert_{\{ y_n\leq \mu -\varepsilon\}} ={\mathcal O}(1).
\ee
Going back to (\ref{presyn=mu}), and using (\ref{ellipp1})-(\ref{ellipp3}) (plus the fact that $\widetilde V_1 = |y_n| +\mu$ for $y_n\leq 0$), we finally obtain,
\begin{eqnarray*}
&& 
\Vert h\nabla (e^{\Phi_N/h}v)\Vert^2
+\Vert e^{\Phi_N/h}v_2\Vert^2 +k\Vert \sqrt{\la y_n}\ra e^{\Phi_N/h}v_1\Vert^2 \\
&&  \leq Ch\Vert  e^{\Phi_N/h}v_1\Vert^2+Ch k^{-1/3}\Vert  e^{\Phi_N/h}v_2\Vert_{\widetilde\Omega_{\delta}}^2\\
&&\hskip 4cm +Ch \Vert  e^{\Phi_N/h}v_2\Vert^2
+Ch^{-2(N+N_0)}+C, 
\end{eqnarray*}
where $C>0$ is a constant (that may depend on $N$ and $\mu$). Since $k>> h$ and $h k^{-1/3}\rightarrow 0$ as $h\rightarrow 0_+$, we conclude,
\be
\Vert h\nabla (e^{\Phi_N/h}v)\Vert^2
+\Vert e^{\Phi_N/h}v_2\Vert^2 +k\Vert \sqrt{\la y_n}\ra e^{\Phi_N/h}v_1\Vert^2 ={\mathcal O}(h^{-2(N+N_0)}),
\ee
uniformly for $h>0$ small enough. 
As before, using iteratively the equation,  this permits us to obtain,
\be
\label{finalestvtilde}
\Vert e^{\Phi_N/h}v_2\Vert_{H^s} +\Vert \la y_n\ra^s e^{\Phi_N/h}v_1\Vert_{H^s} ={\mathcal O}(h^{-(N+N_s)}),
\ee
 for any $s\in\R$, where $N_s\geq 0$ does not depend on N.

Now, by Lemma \ref{lemmephi0},  
for any $M\geq 1$,  on $[ \mu -(Mk)^{2/3}, \mu)$ we have $
\widetilde\phi_0 \geq S(\mu) -\frac23Mk$ .
Therefore, taking $M=N':= 3N/2$, we obtain $\widetilde\phi_0 \geq S(\mu) -Nk$, and thus (still by construction),
$$
\Phi_N = S(\mu) +Nk \,\,\mbox{ for }\,\, y_n\geq \mu -(N'k)^{2/3}.
$$
In particular, $e^{\Phi_N/h}\geq h^{-N}e^{S(\mu) /h}$ there, and thus, we deduce from (\ref{finalestvtilde}),
\begin{eqnarray}
\Vert v_2\Vert_{H^s(y_n\geq \mu -(N'k)^{2/3})} +\Vert \la y_n\ra^s v_1\Vert_{H^s(y_n\geq \mu -(N'k)^{2/3})}\nonumber\\ 
\label{estynv1}
={\mathcal O}(h^{-N_s}e^{-S(\mu )/h}),
\end{eqnarray}
Since $N'\rightarrow +\infty$ as $N\rightarrow +\infty$, the result of Proposition \ref{apriorivtilde} follows by standard Sobolev estimates.
\end{proof}

\begin{proposition} \sl
\label{apriori}
For any multi-index 
$\alpha$, there exists $N_\alpha\geq 0$, such that, for all $N\geq 1$, the first eigenfunction ${\mathbf v}$ of $Q$ satisfies,
\be
\label{u-W2}
\partial^\alpha ({\mathbf v} - {\mathbf w}) ={\mathcal O}( h^{-N_\alpha}e^{-S(\mu )/h})
\ee
locally uniformly in $ \{y_n\geq \mu  - (Nk)^{2/3}\}$.
\end{proposition}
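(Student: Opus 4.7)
The plan is to mirror the strategy already used in the proof of Proposition \ref{estv-w1}, replacing the role of Proposition \ref{estile} by the sharper Proposition \ref{apriorivtilde}. As before, I would write ${\mathbf v} = \lambda {\mathbf v}_0$ where ${\mathbf v}_0 := \Pi\widetilde{\mathbf v}$ and $\Pi = \frac{1}{2i\pi}\oint_\gamma (z-Q)^{-1}dz$ is the spectral projector of $Q$ attached to $\rho$, with $\gamma$ a circle of radius $\delta h$ around $\rho$. From the identity $(Q-z)\widetilde{\mathbf v} = (\widetilde\rho - z)\widetilde{\mathbf v} + r$ (noting $\widetilde\rho - \rho = {\mathcal O}(h^\infty)$), we obtain exactly as in (\ref{compv0v}) that
\[
{\mathbf v}_0 - \widetilde{\mathbf v} = \frac{1}{2i\pi}\oint_\gamma (z-\widetilde\rho)^{-1}(z-Q)^{-1} r\, dz,
\]
with $r = (Q-\widetilde Q)\widetilde{\mathbf v}$ given explicitly, supported in the region $\{y_n \geq \mu - 2k^{2/3}\}$.

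The key new ingredient is that $r$ is concentrated in precisely the region where Proposition \ref{apriorivtilde} applies. Applying that result with a fixed large $N$ (so that $2k^{2/3} \leq (Nk)^{2/3}$), together with the weighted estimate (\ref{estynv1}) on $\widetilde{\mathbf v}_1$ (which absorbs the polynomial factor $\la y_n\ra$ appearing in $Q_1 - \widetilde Q_1$), yields that for each $s$,
\[
\Vert r \Vert_{H^s} = {\mathcal O}(h^{-M_s} e^{-S(\mu)/h}),
\]
with $M_s$ independent of $N$. Combining this with the resolvent bounds $(z-\widetilde\rho)^{-1} = {\mathcal O}(h^{-1})$ and $\Vert (z-Q)^{-1}\Vert = {\mathcal O}(h^{-1})$ on $\gamma$ (the latter being the Schur-complement argument already carried out at the end of the proof of Lemma \ref{estv0vtilde}, which relied on $h/k = 1/|\ln h| \to 0$), I obtain $\Vert {\mathbf v}_0 - \widetilde{\mathbf v}\Vert_{H^s} = {\mathcal O}(h^{-M'_s} e^{-S(\mu)/h})$ for each $s$.

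Since $\Vert \widetilde{\mathbf v}\Vert = 1 + {\mathcal O}(h^\infty)$ and $\Vert {\mathbf v}_0 - \widetilde{\mathbf v}\Vert = {\mathcal O}(h^{-M'_0}e^{-S(\mu)/h})$, I can take $\lambda = 1 + {\mathcal O}(h^{-M'_0}e^{-S(\mu)/h})$, so that ${\mathbf v} - \widetilde{\mathbf v}$ satisfies the same $H^s$ bound as ${\mathbf v}_0 - \widetilde{\mathbf v}$. Writing
\[
{\mathbf v} - {\mathbf w} = ({\mathbf v} - \widetilde{\mathbf v}) + (\widetilde{\mathbf v} - {\mathbf w}),
\]
the second term is controlled on $\{y_n \geq \mu - (Nk)^{2/3}\}$ directly by Proposition \ref{apriorivtilde}, while the first is controlled globally by the above. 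Converting from $H^s$-estimates to pointwise derivative estimates via standard Sobolev embedding concludes the proof. The main subtlety, already addressed in the formulation of Proposition \ref{apriorivtilde}, is that the loss of powers of $h$ in $r$ must not depend on $N$, so that the final exponent $N_\alpha$ indeed stays bounded as $N$ is enlarged to cover the desired region; this is guaranteed by the uniform-in-$N$ character of the constants $N_s$ in (\ref{estynv1}).
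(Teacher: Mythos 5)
Your proposal follows exactly the same route as the paper's proof: the spectral-projector decomposition from Proposition \ref{estv-w1}, with Proposition \ref{apriorivtilde} and the weighted estimate (\ref{estynv1}) supplying the sharper control on $r=(Q-\widetilde Q)\widetilde{\mathbf v}$ in place of Proposition \ref{estile}, then normalization and Sobolev embedding. The one subtlety you flag --- that the loss $N_s$ in (\ref{estynv1}) and hence $N_\alpha$ must be uniform in $N$ (which holds automatically because $r$ is supported in the $N$-independent set $\{y_n\geq\mu-2k^{2/3}\}$) --- is exactly the point the paper implicitly relies on, and you make it explicit; the argument is correct.
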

\begin{proof} The proof is similar to that of Proposition \ref{estv-w1}. 

With the same notations, we see that $(Q-z)\widetilde{\mathbf v}=(\widetilde E -z)\widetilde{\mathbf v}+r$, with $r$ supported in $\{ y_n\geq \mu -k^{2/3}$\}. 
Then, using (\ref{estynv1}), we find that, for any $s\geq0$, $\Vert r\Vert_{H^s} ={\mathcal O}(h^{-N_s}e^{-S(\mu)/h})$ for some $N_s\geq 0$, and we deduce that $\Vert {\mathbf v}_0 -\widetilde{\mathbf v}\Vert ={\mathcal O}(h^{-N'_0}e^{-S(\mu)/h})$ for some $N'_0\geq 0$. In particular, $\Vert {\mathbf v}_0 \Vert =1+{\mathcal O}(h^{-N'_0}e^{-S(\mu)/h})$,  thus ${\mathbf v} =\lambda {\mathbf v}_0$ with $\lambda =1+{\mathcal O}(h^{-N'_0}e^{-S(\mu)/h})$, and the result follows by applying Proposition \ref{apriorivtilde}.
\end{proof}

\subsubsection{Propagation}
Now, we are able  to prove,
\begin{proposition}\sl
\label{asymptdev}
For any multi-index 
$\alpha$, there exists $\delta_\alpha > 0$, such that, for all $N\geq 1$, the first eigenfunction ${\mathbf v}$ of $Q$ satisfies,
\be
\label{u-W3}
\partial^\alpha ({\mathbf v} - {\mathbf w}) ={\mathcal O}( h^{\delta_\alpha N}e^{-S(\mu )/h})
\ee
uniformly with respect to $h>0$ small enough, and locally uniformly with respect to $y$ in $\{ \mu  - (Nk)^{2/3}\leq y_n\leq  \mu  + (Nk)^{2/3}\}$.
\end{proposition}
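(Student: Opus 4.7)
The plan is to refine the weight-function construction of Proposition~\ref{apriorivtilde} so as to accommodate a thin ``overshoot'' layer on $\{\mu\le y_n\le \mu+(Nk)^{2/3}\}$, and then to rerun the Agmon estimate of Proposition~\ref{agmon} with this new weight, using Proposition~\ref{apriori} as a boundary datum at $y_n=\mu-(Nk)^{2/3}$.

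First, I would define a modified weight $\Psi_N(y_n)$ coinciding with the $\Phi_N$ of \eqref{defPhiN} on $(-\infty,\mu-(Nk)^{2/3}]$ but, instead of being clipped at $S(\mu)+Nk$, allowed to keep increasing on the transition interval of width $(Nk)^{2/3}$ around $y_n=\mu$ up to a value $S(\mu)+(1+\delta)Nk$ for some $\delta>0$, before being frozen to a constant further to the right. Using the same type of smoothing as the $\widetilde\chi$ in \eqref{defPhiN} (and the bound \eqref{estderiv2Phi}) one can ensure $|\Psi_N'|=O(1)$ and $\Psi_N^{(\ell)}=O((Nk)^{1-2\ell/3})$, so that Proposition~\ref{agmon} is still applicable with $\lambda_h=k^{-1/3}$ and $\Omega=\{y:\mu-(Nk)^{2/3}\le y_n\le \mu+(Nk)^{2/3}\}$.

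Next I would verify the two ellipticity conditions needed to absorb the right-hand side in Proposition~\ref{agmon}. For the $\widetilde p_2$-term this is easy: in a neighbourhood of $\{y_n=\mu\}$ (with $\mu>0$ fixed) the point $y$ stays away from the well $y=0$, so $\widetilde p_2(y,i\nabla\Psi_N(y))\ge c_\mu>0$ as in \eqref{ellipp2}. The delicate point is the $\widetilde V_1(y_n)-(\nabla\Psi_N)^2$ term: on $\{y_n>\mu\}$, by construction $\widetilde V_1$ has been replaced by $k^{2/3}\langle y_n\rangle$, so this coefficient becomes negative on a thin slab of width $O((Nk)^{2/3})$. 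The idea is to absorb the resulting negative contribution into the kinetic term $\|h\nabla(e^{\Psi_N/h}v_1)\|^2$ via a one-dimensional Poincar\'e/Hardy inequality in the $y_n$-variable adapted to this slab, at the cost of a boundary term on $\{y_n=\mu-(Nk)^{2/3}\}$. This boundary term is exactly the one controlled by Proposition~\ref{apriori}, which gives data of size $O(h^{-N_\alpha}e^{-S(\mu)/h})$ with $N_\alpha$ independent of $N$.

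After running the Agmon estimate and iterating with the equation to recover Sobolev norms (as in the passage from \eqref{finalestvtilde} to \eqref{estynv1}), one obtains, on $\{\mu-(Nk)^{2/3}\le y_n\le \mu+(Nk)^{2/3}\}$, a bound of the form $\|e^{\Psi_N/h}(\widetilde{\mathbf v}-{\mathbf w})\|_{H^s}=O(h^{-N'})$ with $N'=N'(s)$ independent of $N$. Since on the target slab $e^{\Psi_N/h}\ge h^{-(1+\delta)N}e^{S(\mu)/h}$, one recovers, after Sobolev inequalities, the announced estimate with $\delta_\alpha=\delta/2$ (say), provided $N$ is large. Finally, to pass from $\widetilde{\mathbf v}$ to ${\mathbf v}$, I would reuse verbatim the resolvent-perturbation argument of Proposition~\ref{apriori} (Lemma~\ref{estv0vtilde} together with the $O(h^{-1})$ bound on $(z-Q)^{-1}$ along $\gamma$), whose cost is only polynomial in $h^{-1}$ and hence absorbed by the $h^{\delta N}$ gain after adjusting $\delta_\alpha$.

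The main obstacle, as indicated above, is the handling of the Agmon estimate across $\{y_n=\mu\}$: beyond the classical turning point for $\widetilde p_1$, the effective weight $\widetilde V_1-(\nabla\Psi_N)^2$ ceases to be positive, and the standard absorption step of Agmon estimates fails. Overcoming this via a localised Poincar\'e-type argument together with the boundary bound from Proposition~\ref{apriori}, while keeping uniform control of $N$ in all constants, is the technically delicate part of the proof.
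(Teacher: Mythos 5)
Your strategy (a modified Agmon weight that overshoots $S(\mu)$ by $\delta Nk$ on the slab around $y_n=\mu$, with the loss of positivity absorbed by a one-dimensional Poincar\'e/Hardy inequality plus the boundary input from Proposition~\ref{apriori}) is genuinely different from the paper's, which does not use a weighted $L^2$ estimate at all here. Instead the paper rescales $(y_n-\mu)\mapsto (Nk)^{2/3}\widetilde y_n$, $y'\mapsto y'_0+(Nk)^{1/2}\widetilde y'$ with a new semiclassical parameter $\widetilde h_N = h/(Nk)$, and runs a microlocal propagation-of-frequency-set argument (FBI/Bargmann transform $T_1$, complex deformation to kill $T_1{\mathbf v}$ on the outgoing part of $\exp tH_{q_1}$ for $t<0$, then propagation through the turning point following \cite{FuLaMa}, Section~6), using Proposition~\ref{apriori} only as an a priori $h^{-N_\alpha}$-polynomial bound that feeds the propagation estimate.

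The difficulty you flag at the end is not merely technically delicate, it is fatal to the proposed absorption. To gain the extra $\delta Nk$ in the weight over a slab of width $(Nk)^{2/3}$, the overshoot forces $|\Psi_N'|\sim \delta(Nk)^{1/3}$ there, so the defect $(\nabla\Psi_N)^2-\widetilde V_1$ is of order $(Nk)^{2/3}$ on the slab (since $\widetilde V_1\lesssim k^{2/3}$ on $\{y_n>\mu\}$). A one-sided Poincar\'e inequality on the slab of width $L\sim(Nk)^{2/3}$ gives
\[
(Nk)^{2/3}\,\|u\|^2_{L^2(\text{slab})}\;\lesssim\;(Nk)^{4/3}\,\bigl|u\bigr|^2_{\text{bdry}} \;+\; (Nk)^{2}\,\|\partial_{y_n}u\|^2_{L^2(\text{slab})},
\]
and to bury the second term in $\tfrac{1}{C_2}\|h\nabla u\|^2$ you would need $(Nk)^2\lesssim h^2$, i.e.\ $N^2(\ln h^{-1})^2\lesssim 1$, which fails as $h\to 0$ for any fixed $N$. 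Trying a Hardy-type singular weight, or shrinking the overshoot to $\epsilon N k$, leads to the same logarithmic obstruction (or forces $\epsilon\lesssim 1/(N\ln h^{-1})$, which erases the $h^{\delta_\alpha N}$ gain one is after). This reflects a structural fact: beyond the simple turning point the mechanism is oscillatory/Airy rather than exponentially damped, and Agmon-type energy estimates cannot produce a gain there; a microlocal propagation argument (or an explicit Airy parametrix) is required, which is exactly what the paper and \cite{FuLaMa} do. The final resolvent step (passing from $\widetilde{\mathbf v}$ to ${\mathbf v}$) in your proposal is fine and is also what the paper does, but it cannot rescue the preceding gap.
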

\begin{proof} We proceed in a way very similar to that of \cite{FuLaMa} (but in a simpler context, here). At first, for $\nu >0$, we introduce the Bargmann transform $T_\nu$, defined by the formula,
$$
T_\nu u(y,\eta;h) := \int_{\R^n} e^{i(y-z)\eta /h - (y'-z')^2/2h - \nu (y_n-z_n)^2/2h}u(z) dz.
$$
We also set $q_1(y,\eta ):= \eta^2 + \mu -y_n$ and, for $y'\in \R^{n-1}$ and $t\in\R$,  
$$
\gamma (t, y') 
:=\exp tH_{q_1}(y',\mu ; 0)=(y',\mu +t^2 ; 0,-t).
$$
By the same arguments as at the beginning of Section \ref{intboule}, we see that the estimate (\ref{prespuits}) remains  valid in any complex neighborhood of 0 of the form: $\{ y\in \C^n\, ;\, |\Re y| \leq 1/C; |\Im y|\leq C\sqrt h \}$, with a large enough constant $C>0$. Moreover,  because of the quadratic behaviour of $\widetilde\varphi_2(y)$ near 0, we also have $\Vert {\mathbf w}(y+i\delta)\Vert_{L^2_{loc}} ={\mathcal O}(1)$ for any $\delta ={\mathcal O}(\sqrt h )$. As a consequence, the same is true for $\widetilde{\mathbf v}$, and we see that all the previous arguments can be extended to $\Im y_n = -\delta (h)$, if $\delta (h)$ satisfies,
$$
0\leq \delta (h)\leq C\sqrt h.
$$
In particular,  we obtain, 
\be
\label{aprioripourv}
\partial^\alpha {\mathbf v} (y) ={\mathcal O}( h^{-N_\alpha}e^{-S(\mu )/h}),
\ee
locally uniformly in $ \{\Re y_n\geq \mu \, ,\, -C\sqrt h\leq \Im y_n \leq 0\}$. (Observe that, since $\widetilde\varphi_1$ and $\widetilde\varphi_2$ are real on the real, one has $\Re\widetilde\varphi_j(y', y_n-i\delta(h))= \widetilde\varphi_j(y) +{\mathcal O}(\delta(h)^2)$.) Therefore, taking $\nu =1$ and making a complex change of contour in the expression of $T_1 \mathbf v$, it is easy to see that, if $(y,\eta)\in \R^{2n}$ is fixed (independently of $\mu$), with $\eta_n >0$ and $y_n >\mu$, then,
$$
T_1 \mathbf v ={\mathcal O}(e^{ -1/\sqrt h}e^{-S(\mu) /h})={\mathcal O}(h^\infty)e^{-S(\mu) /h},
$$
uniformly near $(y,\eta)$. In particular, for any $t\leq-1$ and $y'\in\R^{n-1}$, we have,
$$
\gamma (t, y') \notin FS (e^{S(\mu)/h}\mathbf v),
$$
where $FS(e^{S(\mu)/h}\mathbf v)$ stands for the frequency set of $e^{S(\mu)/h}\mathbf v$ (see, e.g., \cite{GuSt, Ma6}). By the standard result of propagation of the frequency set, we deduce that $\gamma (t, y')$ stays away from $FS (e^{S(\mu)/h}\mathbf v)$ as long as the   {\it a priori} estimate (\ref{aprioripourv}) remains valid in some fix neighborhood of $\pi_y\gamma (t, y'):= (y', \mu +t^2)$, that is, as long as $t$ remains (strictly) negative. In other words, for any fixed $t<0$ (arbitrarily close to 0), one has $\gamma (t, y') \notin FS (e^{S(\mu)/h}\mathbf v)$. 
\vskip 0.2cm
On the other hand, if one has chosen the correct critical point in the definition of ${\mathbf w}$ on $\{y_n\geq\mu\}$, one also has $\gamma (t, y') \notin FS (e^{S(\mu)/h}\mathbf w)$ for any $t<0$ close enough to 0. Therefore, $\gamma (t, y') \notin FS (e^{S(\mu)/h}({\mathbf v}-{\mathbf w})$, and thanks to Proposition  \ref{apriori}, we can apply the same argument of propagation used in \cite{FuLaMa} Section 6 (see also \cite{GrMa}), and conclude to the existence of a constant $\delta >0$, such that, for all $N\geq 1$,
\be
\label{estmicrolN}
T_1\widetilde v_N (\widetilde y, \widetilde\eta ;\widetilde h_N) ={\mathcal O}(e^{-\delta /\widetilde h_N})\,\, \mbox{ uniformly in } {\mathcal V}_N(\delta),
\ee
where we have set,
\begin{eqnarray*}
&& \widetilde v_N (\widetilde y):= e^{S(\mu)/h}({\mathbf v}-{\mathbf w})(y'_0+(Nk)^{\frac12}\widetilde y' , \mu +(Nk)^{\frac23}\widetilde y_n)\\
&& \widetilde h_N := (Nk)^{-1} h;\\
&&{\mathcal V}_N(\delta):=\{ (\widetilde y, \widetilde \eta)\in\R^{2n}\,;\, |\widetilde y|\leq \delta\, ,\, (Nk)^{\frac16}|\widetilde\eta'|+|\widetilde\eta_n|\leq \delta\}.
\end{eqnarray*}
Here, $y'_0\in\R^{n-1}$ is fixed arbitrarily, and the estimate is uniform with respect to $h>0$ small enough and locally uniform with respect to $y'_0$ (see \cite{FuLaMa}, Proposition 6.8).

Note that, respect to \cite{FuLaMa}, here the situation is slightly simpler, because the asymptotic solution $\mathbf w$ exists in a whole $h$-independent neighborhood of $(y',\mu)$. This is the reason why the estimate (\ref{estmicrolN}) is valid for all $N$ large enough, while that of  \cite{FuLaMa}, Proposition 6.8 was valid for special values of $N$ only.

As in \cite{FuLaMa}, the main point  of the proof consists in observing that the function $ \widetilde v_N$ is solution of an equation of the form,
\be
\label{eqvN}
\widetilde{\mathbf Q}_N (\widetilde y, \widetilde h_ND_{\widetilde y}; \widetilde h_N) \widetilde v_N=0,
\ee
where the symbol $ \widetilde{\mathbf q}_N$ of $\widetilde{\mathbf Q}_N$ satisfies,
\begin{eqnarray*}
\widetilde{\mathbf q}_N (\widetilde y, \widetilde \eta ; \widetilde h_N) = \left((Nk)^{1/3}|\widetilde \eta '|^2 + \widetilde \eta_n^2\right)\left(
\begin{array}{cc}
-1 & 0\\
0 & 1
\end{array}
\right)
 + \left(
\begin{array}{cc}
\widetilde y_n & 0\\
0 & \check V_N(\widetilde y, \widetilde \eta)
\end{array}
\right)\\
+{\mathcal O}(\widetilde h_N (Nk)^{1/3}),
\end{eqnarray*}
with $\Re \check V_N>0$ near ${\mathcal V}_N(\delta)$.

Using this equation, and in particular the fact that $\det \widetilde{\mathbf q}_N (\widetilde y, \widetilde \eta ; \widetilde h_N) \not=0$ in $\{ |\widetilde y|\leq \delta'\, ;\, (Nk)^{1/3}|\widetilde \eta '|^2 + \widetilde \eta_n^2\geq\delta\}$ if $\delta' >0$ is small enough ($\delta >0$ fixed), we deduce as in \cite{FuLaMa} that, for all $m\geq 0$, one has,
\be
\label{finalestv}
\Vert \widetilde v_N\Vert_{ H^m (|\widetilde y|\leq \frac12 \delta')} ={\mathcal O}(e^{-\delta_1/\widetilde h_N})={\mathcal O}(h^{\delta_1N}),
\ee
for some  constant $\delta_1 >0$. Then, the result follows from Sobolev estimates and from the fact  that the previous estimate is locally uniform with respect to $y'_0\in\R^{n-1}$.
\end{proof}
\vskip 0.3cm

Now, recall from Section \ref{sectionAgmon} that the resonant state $\widetilde{\mathbf u}$, solution of $\widetilde P \widetilde{\mathbf u} =\rho \widetilde{\mathbf u}$, is related to ${\mathbf v}$ by,
$$
\widetilde{\mathbf u}(y) = {\mathbf v}(y', y_n +ik).
$$
We have,
\begin{proposition}\sl  
\label{globalestu}
For any $\varepsilon >0$ small enough, set,
$$
\phi_\varepsilon := {\rm Max}\left( \widetilde\phi_0, \min(\phi, S(\mu) -\varepsilon)\right),
$$
where $\widetilde \phi_0(y):= \chi_0(y_n)\left(S(\mu) -\frac23 (\mu -y_n)_+^{3/2}\right)$, with $\chi_0\in C^\infty (\R;[0,1])$, $\chi_0 =1$ on $[\mu -\varepsilon_0 +\infty)$, $\chi_0=0$ on $(-\infty, \mu -2\varepsilon_0]$, $\varepsilon_0>0$ fixed sufficiently small.
Then, for any $\alpha\in\N^n$, there exists $N_\alpha\geq 0$, such that, for any $\varepsilon >0$,
$$
\partial^\alpha \widetilde{\mathbf u}(y) ={\mathcal O}(h^{-N_\alpha} e^{-\phi_\varepsilon (y)/h}),
$$
locally uniformly on $\R^n$.\end{proposition}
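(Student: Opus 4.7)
The strategy is to combine the three estimates on $\mathbf{v}$ established in Propositions \ref{estv-w1}, \ref{apriori}, and \ref{asymptdev}, via the identity $\widetilde{\mathbf u}(y) = \mathbf{v}(y', y_n + ik)$, and to glue them across $\R^n$ by partitioning according to the piecewise definition of $\phi_\varepsilon$.

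The first step is to transfer these estimates from the real axis to the line $\{\Im y_n = k\}$. Since the full symbol of $Q$ is holomorphic with respect to $y_n$ in a complex neighborhood (thanks to the analyticity assumption (\ref{ass2})), and since $\mathbf{v}$ solves the system $(Q-\rho)\mathbf{v} = 0$, the arguments already used in the proof of Proposition \ref{asymptdev} (where the analysis was carried out on $\{|\Im y_n|\leq C\sqrt h\}$) extend the estimates on $\mathbf{v}$ to this complex strip, with at most a polynomial loss $h^{-N}$. Since $k = h\ln h^{-1} \ll \sqrt h$, the required bounds at $y_n + ik$ follow, provided we verify that, in each region of $\R^n$, the weight $\phi_\varepsilon(y)$ matches, up to polynomial factors, the weight controlling $|\mathbf{v}(y', y_n+ik)|$.

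The second step is this regional verification. On $\{y_n \leq \mu - 2\varepsilon_0\}$ we have $\widetilde\phi_0 \equiv 0$ and $\phi_\varepsilon = \min(\phi, S(\mu) - \varepsilon)$, so Proposition \ref{estv-w1} directly gives the claimed bound on the real (and, since $|\nabla\phi|=\mathcal{O}(1)$, the shift $y_n\mapsto y_n+ik$ costs only $e^{\mathcal{O}(k)/h}=h^{\mathcal{O}(1)}$). On $\{\mu - 2\varepsilon_0 \leq y_n \leq \mu - (Nk)^{2/3}\}$, the weight $\phi_\varepsilon$ is bounded above by $\phi=\widetilde\varphi_1$ (up to the truncation by $S(\mu)-\varepsilon$), and Proposition \ref{estv-w1} still applies. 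On $\{y_n \geq \mu - (Nk)^{2/3}\}$, Propositions \ref{apriori} and \ref{asymptdev} give $\partial^\alpha\mathbf{v}=\mathcal{O}(h^{\delta_\alpha N -N_\alpha}e^{-S(\mu)/h})$, which for $N$ large enough dominates $h^{-N_\alpha'}e^{-\widetilde\phi_0/h}= h^{-N_\alpha'}e^{-\phi_\varepsilon/h}$, since $\widetilde\phi_0 \leq S(\mu)$ everywhere.

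The main obstacle lies near the caustic $\{y_n=\mu\}$. Here two issues coexist: (i) the weight $\phi_\varepsilon$ must be matched with the natural decay rate of the Airy-like asymptotic solution $\mathbf{w}$ whose phase $\widetilde\varphi$ satisfies $\Re\widetilde\varphi=S(\mu)$ on $\gamma_\mu$ with $|\partial_{y_n}\Im\widetilde\varphi|=\sqrt{y_n-\mu}$ (which is precisely the $3/2$-power profile of $\widetilde\phi_0$ after integration); (ii) the interplay between the complex shift scale $k$ and the Airy transition scale $(Nk)^{2/3}$ must be controlled so that the polynomial losses $h^{-N_\alpha}$ in Proposition \ref{apriori} are absorbed by the improved exponent $h^{\delta_\alpha N}$ of Proposition \ref{asymptdev}, for $N$ chosen depending on $\alpha$. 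Once the $L^2$-type bounds are established at $y_n+ik$, pointwise estimates on derivatives follow by the equation $\widetilde P\widetilde{\mathbf u}=\rho\widetilde{\mathbf u}$ together with standard semiclassical elliptic regularity and Sobolev embeddings, losing at most a fixed negative power of $h$ in $N_\alpha$.
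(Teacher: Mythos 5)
Your overall strategy — reduce to estimates on $\mathbf v$ on the real axis via Propositions \ref{estv-w1}, \ref{apriori}, \ref{asymptdev}, then transfer to $\widetilde{\mathbf u}$ via $\widetilde{\mathbf u}(y)={\mathbf v}(y',y_n+ik)$ — is the correct one, and the first step (showing the required bound for $\mathbf v$ on $\R^n$, region by region) is essentially how the paper begins. The gap is in the transfer step, which you treat as automatic.

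You assert that because the coefficients of $Q$ are analytic and $\mathbf v$ solves $(Q-\rho)\mathbf v=0$, the real-line estimates extend to the complex strip $\{\Im y_n=k\}$ ``with at most a polynomial loss'', and that the shift costs ``only $e^{\mathcal O(k)/h}=h^{\mathcal O(1)}$''. Neither claim is correct as stated. Knowing $|\mathbf v(y)|\leq h^{-N}e^{-\phi(y)/h}$ for $y\in\R^n$ does not by itself control $\mathbf v(y',y_n+ik)$: analytic continuation off the real axis can amplify by factors like $e^{c\sqrt{h}\cdot\sup|\nabla^2\cdots|/h}$, and the weight $\phi$ is nowhere a sub- or super-solution relevant to the complex translation. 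The extension quoted from Proposition \ref{asymptdev} was proved only in a fixed neighborhood of $0$ (using the quadratic growth of $\widetilde\varphi_2$ and the bound on $\mathbf w$), and it yields a complex-strip estimate on $\{\Re y_n\geq\mu\}$ with the single exponent $S(\mu)$, not a pointwise $\phi_\varepsilon$-weighted bound on all of $\R^n$.

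What is actually needed — and what the paper provides — is a localization formula expressing $\widetilde{\mathbf u}(y)$ as an integral of $\mathbf v(z)$ against an exponentially concentrated kernel $\mathcal O(h^{-n-C}e^{-\delta|y-z|/h})$ with $\delta>0$ independent of $\mu$ (formula (\ref{lienuv})). This is what makes the transfer of the $\phi_\varepsilon$-weight legitimate, via $\|\nabla\phi_\varepsilon\|_{L^\infty}=\mathcal O(\sqrt\mu)\ll\delta$. Deriving this representation is nontrivial: it requires the Bargmann-transform identity (\ref{linkTuTv}), the high-frequency decay $T_1\widetilde{\mathbf u}=\mathcal O(e^{-1/Ch})$ on $\{|\eta|\geq C\}$ (Lemma \ref{estuetalarge}, which you do not establish or invoke, and whose proof occupies an entire appendix), and two singular changes of contour in $\eta$, the second being an angular deformation on $S^{n-1}$ that explicitly uses $n\geq 2$. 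Without these ingredients, the phrase ``at most a polynomial loss'' is an assertion, not an argument, and the proof does not close.

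Finally, you correctly sense that the Airy-like transition near $\{y_n=\mu\}$ is delicate, but you misplace where the difficulty sits: in the paper's scheme, the caustic region is already handled on the real side (Propositions \ref{apriori}--\ref{asymptdev} and the definition of $\widetilde\phi_0$); the genuine new difficulty for Proposition \ref{globalestu} is uniform localization of the complex translation over the whole of $\R^n$, not the caustic per se.
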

\begin{proof} We first observe that, by (\ref{estv1}) and (\ref{finalestvtilde}), the result is true for the function $ \widetilde{\mathbf v}- {\mathbf w}$ (instead of $\widetilde{\mathbf u}$). Moreover, by construction and the proof of Lemma \ref{lemmephi0}, we see that it is also true for ${\mathbf w}$.  
Therefore, it is true for $ \widetilde{\mathbf v}$, and the proofs of Propositions \ref{estv-w1} and \ref{apriori} permit us to deduce that it is also true for ${\mathbf v}$. 

Then, we use the elementary fact that,
\be
\label{linkTuTv}
(T_1\widetilde{\mathbf u})(y,\eta)= (T_1{\mathbf v}) (y', y_n+ik; \eta) = e^{k^2/2h +k\eta_n/h}(T_1 {\mathbf v})(y; \eta', \eta_n+k),
\ee
and, since $\Vert {\mathbf v}\Vert_{L^2}=1$, we also know that $\Vert T_1{\mathbf v}\Vert_{L^2}$ is uniformly bounded  (actually, it is ${\mathcal O}(h^{\frac{3n}4})$). Therefore, $e^{-k\eta_n/h}T_1 \widetilde{\mathbf u}\in L^2(\R^{2n})$, and,
\be
\label{estaprioriTu}
\Vert e^{-k\eta_n/h}T_1 \widetilde{\mathbf u}\Vert_{L^2}={\mathcal O}(1),
\ee
uniformly as $h\rightarrow 0_+$. In particular, for any $\varepsilon >0$, $e^{-\varepsilon k^{2/3}\la\eta_n\ra/h}T_1 \widetilde{\mathbf u}\in L^2(\R^{2n})$.

Then, using the classical ellipticity (that is, as $|\eta|\rightarrow +\infty$) of the symbol of $\widetilde P$, together with the analyticity of $V_2$, one can show,
\begin{lemma}
\label{estuetalarge}
For any $\chi\in C_0^\infty (\R^n)$ and for any $M\geq 1$, there exists a constant $C>0$ independent of $\mu$ such that,
$$
\sup_{|\eta|\geq C} \left| \la \eta\ra^M \chi (y)T_1\widetilde{\mathbf u}\right| ={\mathcal O}(e^{-1/Ch}),
$$
uniformly for $h>0$ small enough.
\end{lemma}
\begin{proof}
See Appendix \ref{appF}.
\end{proof}
Now, for $y$ in some fixed arbitrary compact set, we write,
\begin{eqnarray*}
\widetilde{\mathbf u}(y) &=& \frac1{(2\pi h)^n}\int_{\R^{2n}}e^{i(y-z)\eta /h - (y-z)^2/2h}\widetilde{\mathbf u}(z) dzd\eta\\
&=& \frac1{(2\pi h)^n}\int_{\R^n}T_1\widetilde{\mathbf u}(y,\eta)d\eta,
\end{eqnarray*}
and thus, by Lemma \ref{estuetalarge} and (\ref{linkTuTv}),
\begin{eqnarray}
&& \widetilde{\mathbf u}(y)\nonumber\\
&& =\frac1{(2\pi h)^n}\int_{|\eta|\leq C}e^{k^2/2h +k\eta_n/h}T_1 {\mathbf v}(y; \eta', \eta_n+k)d\eta +{\mathcal O}(e^{-1/Ch})\nonumber\\
&& =\frac1{(2\pi h)^n}\int_{|\eta|\leq C}e^{k^2/2h +k\eta_n/h+i(y-z)\eta/h +ik(y_n-z_n)/h-(y-z)^2/2h} {\mathbf v}(z)dzd\eta\nonumber\\
\label{repruparv}
 &&\hskip 7cm+{\mathcal O}(e^{-1/Ch})
\end{eqnarray}

In this integral, we perform the (singular) complex change of contour of integration given by,
\be
\label{chcontoureta1}
\R^n\ni\eta\mapsto \eta +i\delta \chi (|\eta|)\frac{y-z}{|y-z|} \in \C^n,
\ee
with $\delta >0$ small enough independent of $\mu$, and where $\chi\in C_0^\infty ([0,C); [0.1])$, $\chi=1$ on $[0,C/2]$. We obtain,
$$
\widetilde {\mathbf u}(y)= \int_{|\eta|\leq C} {\mathcal O}(h^{-n}e^{k\eta_n/h-\delta \chi (|\eta|) (|y-z|/h - (y-z)^2/2h }){\mathbf v}(z) dzd\eta +{\mathcal O}(e^{-1/Ch}).
$$
Then, taking advantage that we are in dimension $n\geq 2$, we use the polar coordinates in $\eta$: $\eta =r\omega$ ($r>0$, $\omega\in S^{n-1}$), and since the integrated function is analytic with respect to $\omega$, we can perform the (singular) complex change of contour of integration given by,
\be
\label{chcontoureta2}
S^{n-1}\ni\omega\mapsto \omega +i\delta \frac{y-z}{|y-z|} \in \C^n.
\ee
 This gives,
$$
\widetilde {\mathbf u}(y)= \int_{|\eta|\leq C} {\mathcal O}(h^{-n}e^{k\eta_n/h-\delta(\chi_1 (|\eta|))|y-z|/h - (y-z)^2/2h }){\mathbf v}(z) dzd\eta +{\mathcal O}(e^{-1/Ch}),
$$
with $\chi_1(|\eta|):=|\eta| +\chi (|\eta|)\geq 1$,
and thus,
\be
\label{lienuv}
\widetilde {\mathbf u}(y)= \int  {\mathcal O}(h^{-n-C}e^{-\delta|y-z|/h  }){\mathbf v}(z) dz +{\mathcal O}(e^{-1/Ch}).
\ee
Finally, we observe that, if $\varepsilon $ is sufficiently enough, the Lipshitz function $\phi_\varepsilon$ is such that 
$\Vert\nabla \phi_\varepsilon\Vert_{L^\infty} ={\mathcal O}(\sqrt \mu)$. Therefore, there exists a $\mu$-independent constant $C>0$ such that,
$$
-\delta'|y-z| +\phi_\varepsilon (y) -\phi_\varepsilon (z) \leq -(\delta' -C\sqrt \mu)|y-z|,
$$
and thus, since $\delta'>0$ can be taken independently of $\mu$, we deduce from (\ref{lienuv}) and from the estimate on ${\mathbf v}$ that, for $\mu$ sufficiently small, we have,
$$
\widetilde {\mathbf u}={\mathcal O}(h^{-N_0}e^{-\phi_\varepsilon /h})
$$
uniformly for $h>0$ small enough, and with some constant $N_0\geq 0$.

 The estimates on the derivatives of $\widetilde {\mathbf u}$ can be done in the same way.
\end{proof}

\vskip 0.2cm
 Now, we denote by $\widetilde{\mathbf w}$ the asymptotic solution constructed in Section \ref{sectionWKB} on the real domain. Then, we claim,
\begin{proposition}\sl
\label{estaprioriu-w} For any $\alpha\in \N^n$, there exists $\delta_\alpha > 0$, such that, for all $N\geq 1$, the resonant state $\widetilde{\mathbf u}$ of $\widetilde P$ satisfies,
\be
\label{u-W1}
\partial^\alpha (\widetilde{\mathbf u}-\widetilde{\mathbf w}) ={\mathcal O}( h^{\delta_\alpha N}e^{-S(\mu )/h})
\ee
uniformly with respect to $h>0$ small enough, and locally uniformly with respect to $y$ in $\{ | y_n-\mu|\leq  (Nk)^{2/3}\}$.
Moreover, for any $\alpha\in \N^n$, there exists $N_\alpha\geq 0$ such that, for all $N\geq 1$, one has,
$$
\partial^\alpha\widetilde{\mathbf u} ={\mathcal O}(h^{-N_\alpha}e^{-S(\mu )/h}),
$$
locally uniformly in $\{ y_n \geq \mu \}$.
\end{proposition}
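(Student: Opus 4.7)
The second estimate is an immediate consequence of Proposition \ref{globalestu}. On $\{y_n \geq \mu\}$ one has $(\mu-y_n)_+ = 0$, hence $\widetilde\phi_0(y_n) \equiv S(\mu)$ there, while $\min(\phi, S(\mu)-\varepsilon) \leq S(\mu)-\varepsilon < S(\mu)$; so $\phi_\varepsilon = \max(\widetilde\phi_0,\min(\phi,S(\mu)-\varepsilon)) = S(\mu)$ on $\{y_n \geq \mu\}$, and Proposition \ref{globalestu} yields directly $\partial^\alpha \widetilde{\mathbf u} = {\mathcal O}(h^{-N_\alpha}e^{-S(\mu)/h})$.

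For the first estimate, the strategy is to transfer Proposition \ref{asymptdev} (which bounds $\mathbf v - \mathbf w$ in the real strip $\{\mu - (Nk)^{2/3} \leq y_n \leq \mu + (Nk)^{2/3}\}$) to $\widetilde{\mathbf u} - \widetilde{\mathbf w}$ via the identity $\widetilde{\mathbf u}(y) = \mathbf v(y', y_n + ik)$, combined with the analogous relation $\widetilde{\mathbf w}(y) = \mathbf w(y', y_n + ik)$ obtained by choosing the complex-shift parameter $\delta(h) = k$ in the notations $\psi_\delta, \phi_\delta$ of Section \ref{sectionWKB}. Since $k = h\ln h^{-1} = o(\sqrt h)$, the complex point $(y', y_n+ik)$ lies inside the strip $\{|\Im y_n| \leq C\sqrt h\}$ in which, as emphasised at the beginning of the proof of Proposition \ref{asymptdev}, all the underlying Agmon and microlocal estimates remain valid. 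Therefore the estimate of Proposition \ref{asymptdev} extends holomorphically to $(y', y_n+ik)$, and since the shift is constant, the $y$-derivatives commute with it: $\partial^\alpha(\widetilde{\mathbf u}-\widetilde{\mathbf w})(y) = (\partial^\alpha(\mathbf v - \mathbf w))(y', y_n+ik)$, which inherits the bound ${\mathcal O}(h^{\delta_\alpha N}e^{-S(\mu)/h})$ up to a possible modification of $\delta_\alpha$.

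The main obstacle is to justify rigorously the identification of $\widetilde{\mathbf w}$, built on the real domain, with the value at $(y', y_n+ik)$ of the complex-shifted asymptotic solution $\mathbf w$. This follows from the holomorphy of the phases $\widetilde\varphi_1, \widetilde\varphi_2, \psi, \phi$ in a fixed complex neighborhood of the real $y$-domain and from the classical nature of the WKB symbols: varying the shift parameter $\delta(h) = {\mathcal O}(\sqrt h)$ only perturbs the WKB ansatz by ${\mathcal O}(h^\infty e^{-\Re\phi_\delta/h})$ error terms. Alternatively, one may bypass the identification entirely by working with the Bargmann-transform relation $T_1\widetilde{\mathbf u}(y,\eta) = e^{k^2/2h + k\eta_n/h}\,T_1\mathbf v(y;\eta',\eta_n+k)$ (and the corresponding relation for $\widetilde{\mathbf w}$ and $\mathbf w$), using Lemma \ref{estuetalarge} to restrict the $\eta$-integration to a compact set so that the prefactor $e^{k^2/2h + k\eta_n/h} = h^{{\mathcal O}(1)}$ is absorbed by the gain $h^{\delta_\alpha N}$ provided $N$ is chosen large enough.
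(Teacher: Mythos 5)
Your argument for the second estimate is correct and in fact a little cleaner than the paper's: since $\widetilde\phi_0 \equiv S(\mu)$ on $\{y_n\geq\mu\}$, the bound follows directly from Proposition~\ref{globalestu}, whereas the paper re-derives it from the integral representation (\ref{repruparv}) and Proposition~\ref{apriori}.

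For the first estimate, however, your primary route has a genuine gap. Writing $\partial^\alpha(\widetilde{\mathbf u}-\widetilde{\mathbf w})(y)=(\partial^\alpha({\mathbf v}-{\mathbf w}))(y',y_n+ik)$ requires ${\mathbf v}$ (not just ${\mathbf w}$, which is the only identification you flag as problematic) to extend holomorphically, with the estimate of Proposition~\ref{asymptdev} persisting, into the half-strip $\{0\leq\Im y_n\leq k\}$. Nothing in the paper establishes this. The only complex extension of the a priori bounds on ${\mathbf v}$ is carried out in the proof of Proposition~\ref{asymptdev}, and it is in the \emph{opposite} half-plane, $\{-C\sqrt h\leq\Im y_n\leq 0\}$; the resonant state ${\mathbf v}$ is just an $L^2$ eigenfunction of the complex-translated operator $Q$, and analyticity in $\{\Im y_n>0\}$ together with control of the Agmon/microlocal bounds there is exactly what would need to be proved, not assumed.

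Your "alternative" Bargmann-transform route is the right idea (and is what the paper actually does), but as written it only handles the easy part, the prefactor $e^{k^2/2h+k\eta_n/h}$ and the $\eta$-cutoff via Lemma~\ref{estuetalarge}. The substantive step is missing: after deforming the $\eta$-contours as in (\ref{chcontoureta1})--(\ref{chcontoureta2}), one is left with an integral over all $z$ of a kernel $\sim e^{-\delta|y-z|/h}{\mathbf v}(z)$, and Proposition~\ref{asymptdev} only lets you replace ${\mathbf v}$ by ${\mathbf w}$ when $z_n$ lies in the window $|z_n-\mu|\lesssim(Nk)^{2/3}$. One must split the $z$-integral into $\{z_n\leq\mu-\varepsilon\}$, $\{\mu-\varepsilon\leq z_n\leq\mu-2(Nk)^{2/3}\}$, and $\{|z_n-\mu|\leq 2(Nk)^{2/3}\}$, and show the first two regions contribute ${\mathcal O}(h^\infty e^{-S(\mu)/h})$ by playing the exponential $e^{-\delta|y-z|/h}$ against the Agmon decay of ${\mathbf v}$ from Propositions~\ref{estile}, \ref{apriori}, \ref{globalestu}; and finally one needs a further contour deformation $z_n\mapsto z_n+ik\chi(|z_n-\mu|(Nk)^{-2/3})$ to turn the resulting ${\mathbf w}$-integral into $\widetilde{\mathbf w}(y)$. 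None of this appears in the sketch, and it is precisely where the nontrivial content of the proposition lies.
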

\begin{proof}
We use Proposition \ref{apriori} and the representation of $\widetilde{\mathbf u}$ given in (\ref{repruparv}). After the changes of contour (\ref{chcontoureta1})-(\ref{chcontoureta2}), we obtain,
\begin{eqnarray*}
\widetilde{\mathbf u}(y)=\frac{e^{k^2/2h}}{(2\pi h)^n}\int_{|\eta|\leq C} e^{ i\theta/h +k\eta_n/h-\delta \chi_1(|\eta|)|y-z|- (y-z)^2/2h }{\mathbf v}(z) J(y,z,\eta)dz d\eta\\
+{\mathcal O}(e^{-1/h}),
\end{eqnarray*}
with $\theta := \delta k \chi_1(|\eta|)\frac{y_n-z_n}{|y-z|}+(y-z)\eta +k(y_n-z_n)$, and $J(y,z,\eta)={\mathcal O}(1)$.

On $\{ z_n\leq\mu -\varepsilon\}$ (with $\varepsilon >0$ fixed small enough), the estimates on ${\mathbf v}$ show that, for any $\varepsilon'>0$, one has,
$$
e^{-\delta |y-z|/h}{\mathbf v}(z)={\mathcal O}(h^{-N_0}e^{-\delta |y-z|/h -\min(\phi, S-\varepsilon')/h},
$$
and thus, if in addition $y_n\geq \mu-(Nk)^{2/3}$ (and using the fact that $\Vert\nabla\phi \Vert_{L^\infty}={\mathcal O}(\mu)$), for $\varepsilon '$ small enough, on this set we obtain,
$$
e^{-\delta |y-z|/h}{\mathbf v}(z)={\mathcal O}(h^{-N_0}e^{-(S/h +\delta \varepsilon/2h -\varepsilon'/h)})={\mathcal O}(e^{-(S+\delta \varepsilon/4)/h)}).
$$
Moreover, on $\{ \mu -\varepsilon\leq z_n\leq \mu -2(Nk)^{2/3} \}$, we have,
$$
e^{-\delta |y-z|/h}{\mathbf v}(z)={\mathcal O}(h^{-N_0}e^{-\delta |y-z|/h -S/h+2(\mu -z_n)^{3/2}/h}),
$$
and thus, for $y_n\geq \mu-(Nk)^{2/3}$,
$$
e^{-\delta |y-z|/h}{\mathbf v}(z)={\mathcal O}(h^{-N_0}e^{-\delta (Nk)^{2/3}/2h -S/h})={\mathcal O}(h^\infty e^{-S/h}).
$$
As a consequence, for $\widetilde{\mathbf u}(y)$ with $y_n\geq \mu-(Nk)^{2/3}$, we obtain,
\vskip 0.2cm

$\widetilde{\mathbf u}(y)$
\begin{eqnarray*}
=\frac{e^{k^2/2h}}{(2\pi h)^n}\int_{{|\eta|\leq C}\atop{z_n\geq \mu -2(Nk)^{2/3}}} e^{ i\theta/h +k\eta_n/h-\delta \chi_1(|\eta|)|y-z|- (y-z)^2/2h }{\mathbf v}(z) J(y,z,\eta) dz d\eta\\
+{\mathcal O}(h^{\infty}e^{-S/h}).
\end{eqnarray*}
In particular, for $y_n\geq\mu$, by Proposition \ref{apriori} we deduce,
$$
\widetilde{\mathbf u}(y)={\mathcal O}(h^{-N_0}e^{-S(\mu)/h}.
$$
Moreover, if $|y_n-\mu|\leq (Nk)^{2/3}$, the previous  arguments lead to,
\vskip 0.2cm

$\widetilde{\mathbf u}(y)$
\begin{eqnarray*}
=\frac{e^{k^2/2h}}{(2\pi h)^n}\int_{{|\eta|\leq C}\atop{|z_n-\mu|\leq 2(Nk)^{2/3}}} e^{ i\theta/h +k\eta_n/h-\delta \chi_1(|\eta|)|y-z|- (y-z)^2/2h }{\mathbf v}(z) J(y,z,\eta) dz d\eta\\
+{\mathcal O}(h^{\infty}e^{-S/h}).
\end{eqnarray*}

Then, using Proposition \ref{asymptdev}, we conclude,
\vskip 0.2cm
$\widetilde{\mathbf u}(y)$
\begin{eqnarray*}
=\frac{e^{k^2/2h}}{(2\pi h)^n}\int_{{|\eta|\leq C}\atop{|z_n-\mu|\leq 2(Nk)^{2/3}}} e^{ i\theta/h +k\eta_n/h-\delta \chi_1(|\eta|)|y-z|- (y-z)^2/2h }{\mathbf w}(z) J(y,z,\eta) dz d\eta\\
+{\mathcal O}(h^{\delta N}e^{-S/h}),
\end{eqnarray*}
for some $\delta >0$ independent of $N$. Finally, taking $y'$ close to some $y_0'\in\R^{n-1}$, either ${\mathbf w}$ is analytic around $(y_0', y_n)$, or it is exponentially smaller than $e^{-S(\mu)/h}$. In the second case, we obtain $\widetilde{\mathbf u}(y)={\mathcal O}(h^{\delta N}e^{-S/h})$, while, in the first case, a change of contour of the type $z_n\mapsto z_n+ik\chi (|z_n -\mu|(Nk)^{-2/3})$ gives, by the same arguments as before, $\widetilde{\mathbf u}(y)=\widetilde{\mathbf w}(y)+{\mathcal O}(h^{\delta N}e^{-S/h})$. The same estimates holds for the derivatives, and this proves the proposition.
\end{proof}

\subsection{All around $\{\phi =S(\mu)\}$}
We fix $\nu_0>0$ 
 sufficiently small, but independent of $\mu$, and, for $\nu \in (0, \nu_0/4]$, we consider the following spherical domain of $\R^n$,
$$
{\mathcal A}_\nu:= \{ \nu_0-\nu <|y| <\nu_0+\nu\}.
$$
In this section, we plan to estimate $\widetilde{\mathbf u}$ on ${\mathcal A}_\nu$. By construction, $\widetilde{\mathbf w}=0$ in ${\mathcal A}_{2\nu}$ if $\mu<< \nu_0$, and then we also have  ${\mathcal A}_{2\nu}\subset \R^n\backslash \{\phi < S(\mu)\}$. Therefore, Propositions \ref{globalestu} and \ref{estaprioriu-w} already tell us that,
\begin{itemize}
\item $\partial^{\alpha}\widetilde u ={\mathcal O}(h^{-N_\alpha} e^{-S(\mu)/h})$ on ${\mathcal A}_{2\nu} \cap \{ y_n\geq \mu \}$;
\item $\partial^{\alpha}\widetilde u ={\mathcal O}(h^{\delta N} e^{-S(\mu)/h})$ on ${\mathcal A}_{2\nu} \cap \{ |y_n-\mu|\leq 4(Nk)^{2/3}\}$;
\item For any $\varepsilon >0$, $\partial^{\alpha}\widetilde u ={\mathcal O}(h^{-N_\alpha} e^{-\max (\widetilde\phi_0, S(\mu) -\varepsilon)/h})$ on ${\mathcal A}_{2\nu} \cap \{ y_n\leq \mu - (Nk)^{2/3}\}$,
\end{itemize}
where $N\geq 1$ is arbitrary, and the positive constants $\delta, N_\alpha$ do not depend on $N$ (note that the same estimates also hold for ${\mathbf v}$). Then, recording that $S(\mu)={\mathcal O}(\mu^2)$, we consider a weight function $\psi \in C^\infty (\R^n; \R_+)$ such that,
\begin{itemize}
\item $\psi$ is supported in ${\mathcal A}_{2\nu}$;
\item $\psi \leq S(\mu)$ on $\{ y_n\geq \mu +4(Nk)^{2/3}\}$;
\item $\psi \leq S(\mu) + \delta N k$ on $\{ y_n\leq \mu +4(Nk)^{2/3}\}$;
\item $\psi = S(\mu)$ on $\{ y_n\geq \mu +4(Nk)^{2/3}\}\cap {\mathcal A}_\nu$;
\item $\psi = S(\mu) + \delta N k$ on $\{ y_n\leq \mu +2(Nk)^{2/3}\}\cap {\mathcal A}_\nu$;
\item $|\nabla\psi| \leq \mu$ on $\{y_n\leq \mu+2(Nk)^{2/3}\}\cup \{y_n\geq \mu+4(Nk)^{2/3}\}$;
\item $\partial^\alpha \psi ={\mathcal O}(1)$ on $\{y_n\leq \mu+2(Nk)^{2/3}\}\cup \{y_n\geq \mu+4(Nk)^{2/3}\}$, $ (|\alpha|\geq 2$);
\item $\partial^\alpha \psi ={\mathcal O}((Nk)^{1-2|\alpha|/3})$ on $\{ \mu+2(Nk)^{2/3}\leq y_n\leq  \mu+4(Nk)^{2/3}\}$, $ (|\alpha|\geq 1$).
\end{itemize}
Then, as before, the Agmon estimates applied with this weight function $\psi$ lead to $
\partial^\alpha\widetilde u ={\mathcal O}(h^{-N_\alpha}e^{-\psi /h})$,
and thus, in particular, for any $N\geq 1$, one has,
\be
\partial^\alpha\widetilde u ={\mathcal O}(h^{\delta N}e^{-S(\mu) /h})
\ee
uniformly on $\{ y_n\leq \mu +2(Nk)^{2/3}\}\cap {\mathcal A}_\nu$.

\section{Completion of the Proof}
We fix $\nu \in (0, \nu_0/4]$ small enough, independent of $\mu$, and $\chi_\nu\in C^\infty (\R; [0,1])$, such that $\chi_\nu =1$ on $(-\infty,\nu_0-\nu]$, $\chi_\nu =0$ on $[\nu_0+\nu, +\infty)$. We also fix $\chi_0\in C^\infty ( \R:[0,1])$ such that $\chi_0=1$ on $(-\infty, 1]$, $\chi_0 =0$ on $[2,+\infty)$. Then, for $y=(y',y_n)\in\R^n$ and $N\geq 1$, we set,
$$
\chi_N(y):= \chi_\nu (|y|)\chi_0 \left( \frac{y_n-\mu}{(Nk)^{2/3}}\right).
$$
In particular, the support of $\nabla\chi_N$ is included in the set.
\begin{eqnarray*}
{\mathcal B}_N&:=&\left({\mathcal A}_\nu\cap \{ y_n-\mu \leq 2(Nk)^{2/3}\}\right)\\
&& \cup \left(\{(Nk)^{2/3}\leq y_n-\mu\leq 2(Nk)^{2/3}\}\cap \{ |y|\leq \nu_0+\nu\}\right).
\end{eqnarray*}
Then, we write,
$$
(\Im\rho )\Vert\chi_N\widetilde{\mathbf u}\Vert^2 = \Im \la \chi_N \widetilde P \widetilde{\mathbf u}, \chi_N \widetilde{\mathbf u}\ra =-\Im\la \chi_N[\widetilde P, \chi_N] \widetilde{\mathbf u}, \widetilde{\mathbf u}\ra,
$$
and, at this point, we can readily follow the arguments of \cite{GrMa}, Section 8 (with, in our case, $n_0=1$ and $n_\Gamma =0$), and Theorem \ref{mainth} follows.

\bigskip

\appendix
%%%%%%%%%%%%%%%%%%%%%%%%%%%  APPENDIX %%%%%%%%%%%%%%%%%%%%%%%%%%%
\vskip 1cm
\centerline{\bf APPENDIX}

\bigskip
%%%%%%%%%%%%%%%%%%%%%%%%%%%%%%%%%%%%%%%%%%%%%%%%%%%%%%%%%%%%%%%%%

\section{Proof of Lemma \ref{symbOh}}\label{appB}
Setting $F(\theta, y ):= \Psi (y+\frac{\theta}2, y-\frac{\theta}2)$, we write,
\be
\label{decompa}
a(y, \eta'+iF(\theta, y )+if_1)=a(y, \eta'+i\nabla\Phi (y)+if_1) + \widetilde a(y,\eta',\theta),
\ee
where,
\be
\label{atilde}
\widetilde a(y,\eta',\theta)=i\theta  \int_0^1(\nabla_\theta F)(t\theta, y )\cdot (\nabla_{\eta}a)(y, \eta'+iF(t\theta, y )+if_1)\, dt.
\ee
Now, by assumption (\ref{estreg}),  $F$ satisfies,
\begin{eqnarray*}
&& |F| \leq \nu_0;\\
&&  \partial^\alpha F ={\mathcal O}\left((1+\lambda_h{\bf 1}_{{y\in \Omega_\delta}\atop{|\theta| <\delta}}) h^{1-|\alpha|}+ \lambda_hh^{2-|\alpha|}\right)\quad (|\alpha|\geq 1),
\end{eqnarray*}
where $\delta >0$ is  arbitrarily small.

In the following, for any set ${\mathcal E}$, we set,
$$
\Lambda_h({\mathcal E}):= 1+\lambda_h{\bf 1}_{\mathcal E},
$$
where, as before, ${\bf 1}_{\mathcal E}$ stands for the characteristic function of ${\mathcal E}$.

Then, we see on (\ref{atilde}) that, for any  $\alpha, \beta, \gamma\in\N^{n}$, we have,
\begin{eqnarray}
\partial_{\theta}^\alpha \partial_y^\beta\partial_{\eta'}^\gamma \widetilde a(y,\eta',\theta) &=&{\mathcal O}\left( \la \eta'\ra^{2m}((|\theta|+h) \Lambda_h(|\theta| <\delta\, ;\, y\in\Omega_\delta) h^{-|\alpha|-|\beta|} \right)\nonumber\\
\label{estatilde} && + {\mathcal O}\left( \la \eta'\ra^{2m}((|\theta|+h)\lambda_h h^{1-|\alpha|-|\beta|} \right).
\end{eqnarray}
(Note that in the factor $(|\theta|+h)$, the term in $h$ appears if $\alpha\not= 0$ only, and that the validity of the estimate extends to $(y,\eta')\in \R^n\times \C^n$, $|\Im \eta'|$ small enough.)

Now, using (\ref{decompa}), we can split $b(y,\eta)$ into,
\be
\label{decomb}
b(y,\eta)=b_0(y,\eta) + b_1(y,\eta) + b_2(y,\eta),
\ee
with,
\vskip 0.2cm
$
b_0(y,\eta) :=  \frac1{(2\pi h)^n}\int e^{i(\eta'-\eta )\theta /h-g_1/h}a(y, \eta'+i\nabla\Phi (y)+if_1)d\eta'd\theta ;$
\vskip 0.2cm
$b_1(y,\eta) :=  \frac1{(2\pi h)^n}\int e^{i(\eta'-\eta )\theta /h-g_1/h}(\chi_0(\theta )-1)a(y, \eta'+i\nabla\Phi (y)+if_1)d\eta'd\theta ;$
\vskip 0.2cm
$
b_2(y,\eta) :=  \frac1{(2\pi h)^n}\int e^{i(\eta'-\eta )\theta /h-g_1/h}\chi_0(\theta )\widetilde a(y,\eta',\theta)d\eta'd\theta.
$
\vskip 0.4cm
Making the change of contour of integration $\eta'\mapsto \eta'-if_1$ in the expression of $b_0$, and using the fact that $\frac1{(2\pi h)^n}\int e^{i(\eta'-\eta )\theta /h}d\theta = \delta_{\eta'=\eta}$, we immediately obtain,
\be
\label{expb0}
b_0(y,\eta) = a(y, \eta+i\nabla\Phi (y)).
\ee
On the other hand, if we set,
$$
L:= (1+|\eta' -\eta|^2 +|\theta|^2)^{-1}(1+(\eta'-\eta)\cdot hD_\theta + \theta\cdot hD_{\eta'}),
$$
then, by integrations by parts, for any $N\geq 1$, we have,
\begin{eqnarray}
&& b_1(y,\eta) =  \frac1{(2\pi h)^n}\int e^{i(\eta'-\eta )\theta /h}c_N(y, \eta ,\eta' ,\theta)d\eta'd\theta ;\\
\label{b2}
&& b_2(y,\eta) =  \frac1{(2\pi h)^n}\int e^{i(\eta'-\eta )\theta /h}\widetilde c_N(y, \eta, \eta' ,\theta)d\eta'd\theta,
\end{eqnarray}
with,
\begin{eqnarray*}
&& c_N(y, \eta, \eta' ,\theta):=({}^tL)^N\left(e^{-g_1/h}(\chi_0(\theta )-1)a(y, \eta'+i\nabla\Phi (y)+if_1)\right);\\
&& \widetilde c_N(y, \eta, \eta' ,\theta):=({}^tL)^N\left(e^{-g_1/h}\chi_0(\theta )\widetilde a(y,\eta',\theta)\right).
\end{eqnarray*}
Therefore, using (\ref{estreg}), (\ref{estatilde}), and the fact that  $g\geq 2\varepsilon_0$ on the support of $1-\chi_0 (y-y')$, we see that, for any $\alpha, \beta, \gamma \in\N^n$, we have,
\begin{eqnarray*}
\partial_y^\alpha \partial_\eta^\beta\partial_{\eta'}^\gamma c_N(y, \eta' ,\theta)={\mathcal O}(e^{-\varepsilon_0/h}(1+|\eta' -\eta| +|\theta|)^{-N})\la\eta'\ra^{2m};\\
\partial_y^\alpha \partial_\eta^\beta\partial_{\eta'}^\gamma \widetilde c_N(y, \eta' ,\theta)={\mathcal O}\left( (\Lambda_h(|\theta| <\delta; y\in\Omega_\delta)+h\lambda_h)(|\theta|+h)  h^{-|\alpha|}\right) \\
\times  (1+|\eta' -\eta| +|\theta|)^{-N}\la\eta'\ra^{2m}),
\end{eqnarray*}
uniformly for $h>0$ small enough, $(\theta,  \eta ) \in\R^{2n}$, and $(y,\eta')\in\R^{n}\times \C^n$, $|\Im \eta'|$ small enough.

In particular, by choosing $N$ sufficiently large, we derive,
\be
\label{estb1}
 \partial_y^\alpha \partial_\eta^\beta b_1(y,\eta )={\mathcal O}(e^{-\varepsilon_0 /h}\la\eta\ra^{2m}).
\ee
 Concerning $b_2$, in (\ref{b2}) we make the (singular) change of contour of integration,
 $$
 \R^n\ni\eta' \mapsto \eta' +i\delta\frac{\theta}{|\theta|},
 $$
 where, as before, $\delta >0$ is taken sufficiently small. We obtain,
 \begin{eqnarray*}
  \partial_y^\alpha \partial_\eta^\beta b_2(y,\eta )=\frac1{(2\pi h)^n}\int {\mathcal O}\left( e^{-\delta  |\theta|/h} (\Lambda_h(|\theta| <\delta; y\in \Omega_\delta)+h\lambda_h) (|\theta| +h)\right)\\
  \times h^{-|\alpha|}\la\theta\ra^{-N}\la\eta\ra^{2m} d\theta
\end{eqnarray*}
 and thus,
 \be
 \label{estb2}
  \partial_y^\alpha \partial_\eta^\beta b_2(y,\eta )= {\mathcal O}(\Lambda_h( y\in \Omega_\delta) h^{1-|\alpha|}\la\eta\ra^{2m}).
  \ee
Then, Lemma \ref{symbOh} follows from (\ref{decomb}), (\ref{expb0}), (\ref{estb1}), and (\ref{estb2}).

\section{An exotic pseudofifferential calculus}\label{appA}
In this appendix we have gathered some useful results concerning the pseudodifferential operators $A={\rm Op}_h^W (a)$ with $a =a(x,\xi;h)$ smooth, such that $a$ extends to a holomorphic function of $\xi$ in a strip  ${\mathcal A}_\nu := \{ (x,\xi)\in \R^n \times \C^n\, ;\, |\Im\xi| < \nu\}$ with $\nu >0$, and, for all $\alpha,\beta\in\N^n$, it satisfies,
\begin{eqnarray}
\label{exotic}
&& \partial_\xi^\beta a(x,\xi) ={\mathcal O}(\la \Re\xi\ra^{m});\\
&& \partial_x^\alpha\partial_\xi^\beta a(x,\xi) ={\mathcal O}((1+\lambda_h{\bf 1}_\Omega(x)) h^{1-|\alpha|}\la \Re\xi\ra^{m})\quad (|\alpha|\geq 1),
\end{eqnarray}
uniformly for $(x,\xi)\in {\mathcal A}_\nu$ and $h>0$ small enough. Here, $m\in\R$ is some fix real number, $\Omega$ is some fix subset of $\R^n$, and $\lambda_h \geq 1$ is some given function of $h$ such that $h\lambda_h={\mathcal O}(1)$ uniformly.
\vskip 0.2cm
We denote by ${\mathcal S}_m(\lambda_h, \Omega)$ the space symbols $a$ such that the previous properties hold for some $\nu >0$. 
For $t\in[0,1]$, we also denote by ${\rm Op}_h^t (a)$ the $t$-quantization of $a$ (see, e.g., \cite{Ma6}).
\vskip 0.3cm
In order to describe the remainders terms that will appear in the symbolic calculus, we need to introduce another class of symbols. We say that a smooth symbol $a =a(x,\xi;h)$ is in $\widetilde {\mathcal S}_m(\lambda_h, \Omega)$ if there exists some $\nu >0$ such that $a$ extends to a holomorphic function of $\xi$ in   ${\mathcal A}_\nu$, and, for all $\alpha,\beta\in\N^n$, it satisfies,
\be
\label{exotictilde}
 \partial_x^\alpha\partial_\xi^\beta a(x,\xi) ={\mathcal O}((1+\lambda_h{\bf 1}_\Omega(x)) h^{-|\alpha|}\la \Re\xi\ra^{2m}),
\ee
uniformly for $(x,\xi)\in {\mathcal A}_\nu$ and $h>0$ small enough.
\vskip 0.3cm
In particular, one has,
$$
{\mathcal S}_m(\lambda_h, \Omega)\subset \widetilde {\mathcal S}_m(\lambda_h, \Omega)\subset h^{-1} {\mathcal S}_m(\lambda_h, \Omega).
$$
\begin{proposition}\sl 
\label{propA1}
With the previous notations, one has,
\begin{itemize}
\item[(i)] If $a\in {\mathcal S}_0(\lambda_h, \R^n)$, then ${\rm Op}_h^W (a)$ is uniformly bounded on $L^2(\R^n)$.
\item[(ii)] If $a\in {\mathcal S}_m(\lambda_h, \Omega)$, then, for any $t\in[0,1]$ any $\delta>0$, and any $u\in L^2$,  one has,
$$
\Vert\la hD_x\ra^{-\frac{m}2}({\rm Op}_h^W (a)-{\rm Op}_h^t (a))\la hD_x\ra^{-\frac{m}2}u\Vert ={\mathcal O}(h\Vert u\Vert + h\lambda_h\Vert u\Vert_{L^2(\Omega_\delta)}),
$$
where, as before, $\Omega_\delta:= \{ x\in\R^n\,;\, {\rm dist} (x,\Omega )\leq\delta\}$, and where the estimate is uniform with respect to $h$ and $u$.
\item[(iii)] Let $m,m'\in\R$,   $a\in {\mathcal S}_m(\lambda_h, \Omega)$, and $b\in {\mathcal S}_{m'}(\lambda_h, \Omega)$. Then, for any $\delta >0$,  $a\#^W b \in {\mathcal S}_{m+m'}(\lambda_h, \Omega_\delta)$, and one has,
$$
a\#^W b = ab +h c,
$$
with $c\in \widetilde {\mathcal S}_{m+m'}(\lambda_h, \Omega_\delta)$.
\item[(iv)] If $a\in \widetilde{\mathcal S}_{0}(\lambda_h, \Omega)$, then, for any $\delta >0$, ${\rm Op}_h^W (a)$ satisfies,
$$
\Vert {\rm Op}_h^W (a)u\Vert ={\mathcal O}\left( \Vert u\Vert + \lambda_h\Vert u\Vert_{L^2(\Omega_\delta)}\right),
$$
uniformly with respect to $u\in L^2(\R^n)$ and $h>0$ small enough.
\item[(v)] Let $m,m'\in\R$,   $a\in \widetilde {\mathcal S}_m(\lambda_h, \Omega)$, and $b\in {\mathcal S}_{m'}(\lambda_h, \Omega)$. Then, for any $\delta >0$,  $a\#^W b \in \widetilde{\mathcal S}_{m+m'}(\lambda_h, \Omega_\delta)$.
\item[(vi)] If $a\in {\mathcal S}_0(\lambda_h, \R^n)$ takes it values in $\R_+$, then, there exists a constant $C>0$ such that,
$$
{\rm Op}_h^W (a) \geq -C\sqrt{h\lambda_h}.
$$
\end{itemize}
\end{proposition}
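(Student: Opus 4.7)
The strategy is to reduce each of the six claims to standard facts of semiclassical Weyl calculus (see, e.g., \cite{Ma6}), exploiting systematically the relation $h\lambda_h=\mathcal O(1)$: the dilation $U_hu(x):=h^{n/2}u(hx)$ is unitary on $L^2(\R^n)$ and conjugates $\mathrm{Op}_h^W(a)$ to $\mathrm{Op}_1^W(\tilde a)$ with $\tilde a(x,\xi):=a(hx,\xi)$; the control $\partial_x^\alpha a=\mathcal O(\lambda_h h^{1-|\alpha|})$ then yields $\partial_x^\alpha\partial_\xi^\beta\tilde a=\mathcal O(h\lambda_h)=\mathcal O(1)$ for $|\alpha|\geq 1$, so (i) follows from the classical Calder\'on-Vaillancourt theorem applied to $\tilde a$. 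For (ii), I would use the standard relation $\mathrm{Op}_h^W(a)-\mathrm{Op}_h^t(a)=\mathrm{Op}_h^t\bigl((e^{-i(1/2-t)hD_x\cdot D_\xi}-1)a\bigr)$: one Taylor step of the exponential yields a symbol of the form $hc$ with $c$ controlled at leading order by the cross derivative $\partial_x\partial_\xi a$, which by (\ref{exotic}) places $\la\Re\xi\ra^{-m}c$ in $\widetilde{\mathcal S}_0(\lambda_h,\Omega_{\delta/2})$, and (iv) applied after conjugation by $\la hD_x\ra^{-m/2}$ (whose commutators are of strictly lower order) gives the announced bound.

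Statements (iii) and (v) both rest on the standard Weyl composition formula
\[
a\#^W b(x,\xi)=\exp\!\bigl(\tfrac{ih}{2}\sigma(D_x,D_\xi;D_y,D_\eta)\bigr)a(x,\xi)b(y,\eta)\big|_{y=x,\eta=\xi},
\]
a single Taylor step of which gives $a\#^W b=ab+hc$, where $c$ is an explicit oscillatory integral. Inserting the derivative bounds on $a$ and $b$ into $c$, one checks directly that $c\in\widetilde{\mathcal S}_{m+m'}(\lambda_h,\Omega_\delta)$; the small loss $\Omega\to\Omega_\delta$ is the unavoidable price of the integrations by parts in the oscillatory integral, whose phase concentrates on the diagonal on scale $\sqrt h$.

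For (iv), I would decompose $A:=\mathrm{Op}_h^W(a)$ as $A=\mathrm{Op}_h^W((1-\chi)a)+\mathrm{Op}_h^W(\chi a)$, where $\chi\in C_0^\infty(\R^n;[0,1])$ equals $1$ on $\Omega_{\delta/2}$ and is supported in $\Omega_\delta$, with bounded derivatives independent of $h$. The symbol $(1-\chi)a$ satisfies $\partial_x^\alpha\partial_\xi^\beta((1-\chi)a)=\mathcal O(h^{-|\alpha|})$ uniformly on $\R^{2n}$, so it belongs to $\mathcal S_0(1,\R^n)$ and (i) bounds its quantization uniformly on $L^2$. For the piece $\mathrm{Op}_h^W(\chi a)$, I would further split $u=\chi_1 u+(1-\chi_1)u$ with $\chi_1\in C_0^\infty(\R^n)$ equal to $1$ on $\mathrm{supp}\,\chi$ and supported in $\Omega_\delta$: integrating by parts in $\xi$ (using the holomorphic extension of $a$ in $\xi$) yields the kernel bound $|K(x,y)|\leq C(1+\lambda_h{\bf 1}_{\Omega_\delta}((x+y)/2))h^{-n}e^{-\nu|x-y|/h}$, so Schur's test gives $\|\mathrm{Op}_h^W(\chi a)\chi_1u\|\leq C\lambda_h\|u\|_{L^2(\Omega_\delta)}$, while the off-diagonal piece $\mathrm{Op}_h^W(\chi a)(1-\chi_1)u$ is $\mathcal O(h^\infty)\|u\|$ because $|x-y|$ is bounded below on the support of the relevant kernel.

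The main obstacle is (vi), as the naive sharp G\aa{}rding bound $-Ch$ fails here (one has only $\partial_x^2 a=\mathcal O(\lambda_h/h)$). I would apply Friedrichs' anti-Wick quantization with Gaussian coherent states at the optimized, non-standard scales $\varepsilon:=h^{3/4}\lambda_h^{-1/4}$ in position and $\varepsilon':=(h\lambda_h)^{1/4}$ in momentum, which satisfy the Heisenberg admissibility $\varepsilon\varepsilon'=h$. The associated anti-Wick operator $\mathrm{Op}^{AW}(a)$ is nonnegative by construction, and its Weyl symbol is the Gaussian convolution $a\ast G_{\varepsilon,\varepsilon'}$; the error $\mathrm{Op}_h^W(a)-\mathrm{Op}^{AW}(a)$ then has Weyl symbol of size $\varepsilon^2\partial_x^2 a+\varepsilon'^2\partial_\xi^2 a=\mathcal O(\sqrt{h\lambda_h})$, belonging after factoring out $\sqrt{h\lambda_h}$ to $\mathcal S_0(\lambda_h,\R^n)$, so (i) yields $\mathrm{Op}_h^W(a)\geq-C\sqrt{h\lambda_h}$.
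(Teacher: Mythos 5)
Your proposal matches the paper's proof in all essentials and, for the harder items, hits exactly the same key ideas. In (i) you rescale and invoke Calder\'on--Vaillancourt, as the paper does. In (iv) your decomposition $a=(1-\chi)a+\chi a$ with $\chi$ a fixed cut-off equal to $1$ on $\Omega_{\delta/2}$ and supported in $\Omega_\delta$, followed by a further split $u=\chi_1u+(1-\chi_1)u$ and a non-stationary-phase/Schur argument on the off-diagonal piece, is precisely the paper's argument. Most notably, in (vi) you chose exactly the paper's anisotropic anti-Wick regularization: Gaussian widths $h^{3/4}\lambda_h^{-1/4}$ in $x$ and $(h\lambda_h)^{1/4}$ in $\xi$, which is the unique admissible scaling (product $=h$) that balances the two second-order Taylor errors $\varepsilon^2\partial_x^2 a$ and $\varepsilon'^2\partial_\xi^2 a$ both at $\sqrt{h\lambda_h}$; the paper then finishes exactly as you do via positivity of the Husimi transform and (i).

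The one place where your proposal leaves a real technical ingredient implicit is (ii), (iii), (v). You package these via the Moyal exponential and say ``one Taylor step of the exponential yields a symbol of the form $hc$''; but for the exotic class $\mathcal S_m(\lambda_h,\Omega)$ the formal series $\sum_k (ihsD_xD_\xi)^ka/k!$ does \emph{not} gain powers of $h$, since $\partial_x^\alpha a=\mathcal O(\lambda_h h^{1-|\alpha|})$ can be as large as $h^{-|\alpha|}$ and each cross derivative pairs an $h$ with an $h^{-1}$ loss. Showing that the integral remainder $\int_0^1 e^{ir(t-1/2)hD_xD_\xi}(D_xD_\xi a)\,dr$ actually lands in $\widetilde{\mathcal S}(\lambda_h,\Omega_\delta)$ is the crux. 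The paper achieves this from the oscillatory-integral representations of $a_t-a$ and $a\#^W b$: after $N$ integrations by parts with the usual operator $L$, it makes the (singular) complex change of contour $\xi'\mapsto\xi'+i\delta\,\theta/|\theta|$, which is legitimate precisely because $a$ extends holomorphically to the strip $\mathcal A_{\nu_0}$ in $\xi$, and produces the factor $e^{-\delta|\theta|/h}$. That exponential localization to $|\theta|\lesssim h$ is what tames the $h^{-|\alpha|}$ blow-up when $x$-derivatives fall on the symbol, and it is the raison d'\^etre of the holomorphy-in-$\xi$ hypothesis built into the class $\mathcal S_{\nu_0}$. Your approach is consistent with this but should make the use of $\xi$-holomorphy (via contour deformation) explicit; without it the bound $c\in\widetilde{\mathcal S}_{m+m'}(\lambda_h,\Omega_\delta)$ in (iii), (v) and the $\mathcal O(h(1+\lambda_h{\bf 1}_{\Omega_\delta}))$ estimate in (ii) do not follow from the derivative bounds alone.
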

\begin{remark}\sl No complete symbolic calculus is available for this exotic class of symbols, but just up to ${\mathcal O}(h\lambda_h )$.
\end{remark}
\begin{remark}\sl It results from (iii) and (v) that, if $a_j\in  {\mathcal S}_{m_j}(\lambda_h, \Omega)$ ($j=1,2,3$), then, for any $\delta >0$, the symbol $a_1\#^W a_2\#^W a_3$ can be written as,
\be
a_1\#^W a_2\#^W a_3=a_1a_2a_3+hb,
\ee
with $c\in \widetilde {\mathcal S}_{m_1+m_2+m_3}(\lambda_h, \Omega_\delta)$. Indeed, writing $a_1\#^W a_2=a_1a_2+hb_1$ with $b_1\in  {\mathcal S}_{m_1+m_2}(\lambda_h, \Omega_{\delta/2})$, it comes out,
$$
a_1\#^W a_2\#^W a_3=(a_1a_2)\#^W a_3+hb_1\#^W a_3,
$$
and since $a_1a_2\in {\mathcal S}_{m_1+m_2}(\lambda_h, \Omega)$ (see (\ref{derivprod1})-(\ref{derivprod2})), the result follows from a new application of (iii), and from (v). In particular, by (i) and (iv), this implies,
$$
{\rm Op}_h^W (a_1){\rm Op}_h^W (a_2){\rm Op}_h^W (a_3)={\rm Op}_h^W (a_1a_2a_3) +hB,
$$
with,
$$
\Vert \la hD_x\ra^{-\frac{m_1+m_2+m_3}{2}}B\la hD_x\ra^{-\frac{m_1+m_2+m_3}{2}}u\Vert = {\mathcal O}\left( \Vert u\Vert + \lambda_h\Vert u\Vert_{L^2(\Omega_\delta)}\right).
$$
\end{remark}
\begin{proof}

(i) The assumptions imply that $\partial^\alpha_x\partial^\beta_\xi a ={\mathcal O}(h^{-|\alpha|})$. Then, this is just an easy consequence of the Calder\'on-Vaillancourt Theorem: see, e.g., \cite{Ma6}, exercise 2.10.15.
\vskip 0.2cm
(ii) It is well known that ${\rm Op}_h^W (a) ={\rm Op}_h^t (a_t)$, with (see, e.g., \cite{Ma6}, Theorem 2.7.1),
$$
a_t(x,\xi ):= \frac1{(2\pi h)^n}\int e^{i(\xi'-\xi)\theta /h} a(x +(t-\frac12)\theta, \xi')d\xi'd\theta,
$$
and thus,
$$
a_t(x,\xi )-a(x,\xi):= \frac1{(2\pi h)^n}\int e^{i(\xi'-\xi)\theta /h} (a(x +(t-\frac12)\theta, \xi')-a(x,\xi'))d\xi'd\theta.
$$
Then, first making $N$ integrations by parts by means of the operator $L:= \frac1{1+(\xi'-\xi)^2+\theta^2}(1+(\xi'-\xi)\cdot hD_\theta + \theta\cdot hD_{\xi'})$, and next performing the change of contour of integration given by,
$$
\R^n\ni \xi' \mapsto \xi' +i\delta\frac{\theta}{|\theta|},
$$
(with $\delta >0$ small enough), we obtain as in Appendix \ref{appB},
\begin{eqnarray*}
\partial^\alpha_x\partial_\xi^\beta (a_t(x,\xi )-a(x,\xi)) =\int {\mathcal O}\left(h^{-n}e^{-\delta|\theta|/h}\frac{(\Lambda_h(|\theta|\leq\delta \,; \, x\in\Omega_\delta) h^{-|\alpha|}}{(1+(\xi'-\xi)^2+\theta^2)^N}\right)\\
\times (|\theta|+h)\la\xi'\ra^{m}d\xi' d\theta,
\end{eqnarray*}
with the notation $\Lambda_h({\mathcal E}):= 1+\lambda_h{\bf 1}_{\mathcal E}$.
Thus, if $N$ has been taken sufficiently large, this implies,
$$
\partial^\alpha_x\partial_\xi^\beta (a_t(x,\xi )-a(x,\xi)) ={\mathcal O}(\Lambda_h( x\in\Omega_\delta) h^{1-|\alpha|}\la\xi\ra^{m}).
$$
Then, as in Appendix \ref{appB}, the result follows by taking a partition of unity adapted to the pair $(\Omega_\delta, \R^n\backslash \Omega_{\delta /2})$.
\vskip 0.2cm
(iii) For any $X=(x,\xi)\in\R^{2n}$, one has,
\begin{eqnarray*}
a\#^W b(X)=\frac1{(2\pi h)^{4n}}\int_{\R^{8n}} e^{i[(X-Y)Y^*+(X-Z)Z^* +\sigma(Y^*,Z^*)]/h}a(Y)b(Z)\\
\times d(Y,Z,Y^*,Z^*),
\end{eqnarray*}
where $\sigma$ stands for the usual canonical simplectic form on $\R^{2n}$. Therefore, integrating first with respect to $(y^*,z^*)$ (where we have set $Y^*=(y^*,\eta^*)$ and $Z^*=(z^*,\zeta^*)$), we obtain,
\begin{eqnarray*}
a\#^W b(X)=\frac1{(2\pi h)^{2n}}\int_{\R^{4n}} e^{i[(\xi-\eta)\eta^*+(\xi -\zeta)\zeta^* ]/h}a(x-\zeta^*,\eta)b(x+\eta^*,\zeta)\\
\times d(\eta,\eta^*,\zeta,\zeta^*).
\end{eqnarray*}
Here, we observe that, by Leibniz formula, for $x\in\Omega$, we have,
\be
\label{derivprod1}
\nabla_x\left( ab\right) = a\nabla_xb+b\nabla_xa={\mathcal O}(\lambda_h \la\xi\ra^{m+m'}),
\ee
and, for $|\alpha|\geq 2$,
\begin{eqnarray}
\partial_x^\alpha\left( ab\right)&=& a\partial_x^\alpha b + b\partial_x^\alpha a+\sum_{{\beta\leq\alpha}\atop {1<|\beta| <|\alpha|}}{\mathcal O}\left(|\partial^{\beta}a|\cdot|\partial^{\alpha-\beta}b|\right)\nonumber\\
&=&{\mathcal O}(\lambda_hh^{1-|\alpha|} \la\Re\xi\ra^{m+m'} +  \lambda_h^2h^{2-|\alpha|} \la\Re\xi\ra^{m+m'})\nonumber\\
\label{derivprod2}
&=& {\mathcal O}(\lambda_hh^{1-|\alpha|}\la\Re\xi\ra^{m+m'}),
\end{eqnarray}
where the last estimate comes from the fact that $h\lambda_h={\mathcal O}(1)$.
\vskip 0.2cm  
In particular $ab\in S_{m+m'}(\lambda_h, \Omega)$ and, making integrations by parts as before, and performing the change of contour,
$$
\R^{2n}\ni (\eta,\zeta) \mapsto (\eta, \zeta) -i\delta' (\frac{\eta^*}{|\eta^*|},\frac{\eta^*}{|\eta^*|}),
$$
with $\delta'>0$ small enough, we  see that, for any $\delta >0$, $a\#^W b$ belongs to $S_{m+m'}(\lambda_h, \Omega_\delta)$. Moreover, by an argument similar to that of the proof of (ii), this also gives,
\begin{eqnarray*}
a\#^W b(X)=\frac1{(2\pi h)^{2n}}\int_{\R^{4n}} e^{i[(\xi-\eta)\eta^*+(\xi -\zeta)\zeta^* ]/h}a(x,\eta)b(x,\zeta)
d(\eta,\eta^*,\zeta,\zeta^*) \\
+hc(X),
\end{eqnarray*}
with,
$$
\partial_x^\alpha\partial_\xi^\beta c(x,\xi) ={\mathcal O}(\Lambda_h(x\in \Omega_\delta)h^{-|\alpha|}\la \Re\xi\ra^{m+m'}).
$$
Thus, $c\in \widetilde{\mathcal S}_{m+m'}(\lambda_h, \Omega_\delta )$, and one finds,
$$
a\#^W b(X) =a(X)b(X) +h c(X).
$$
\vskip 0.2cm
(iv) For any fix $\delta>0$, by a partition of unity we can write $a=a_1+a_2$ with $a_1\in \widetilde{\mathcal S}_{0}(1, \R^n )$, and $a_2\in \widetilde{\mathcal S}_{0}(\lambda_h, \Omega )$ supported in $\{ x\in \Omega_{\delta/4}\}$. Then, the Calder\'on-Vaillancourt theorem tells us that ${\rm Op}_h^W (a_1)$ is uniformly bounded on $L^2(\R^n)$, and it just remains to study ${\rm Op}_h^W (a_2)$. Let $\chi\in C^\infty (\R^n; [0,1])$ be such that Supp$\chi\subset \Omega_\delta$ and $\chi =1$ near $\Omega_{\delta/2}$. Then, we write,
$$
{\rm Op}_h^W (a_2)u ={\rm Op}_h^W (a_2)(\chi u) +{\rm Op}_h^W (a_2)((1-\chi)u),
$$
and, still by the Calder\'on-Vaillancourt theorem, we have,
$$
\Vert {\rm Op}_h^W (a_2)(\chi u)\Vert ={\mathcal O}(\lambda_h \Vert \chi u\Vert ) ={\mathcal O}(\lambda_h \Vert u\Vert_{L^2(\Omega_\delta)} ).
$$
On the other hand,  since the function $a_2((x+y)/2, \xi)(1-\chi (y))$ vanishes  identically in $\{ |x-y|\leq \delta/4\}$, in the expression of ${\rm Op}_h^W (a_2)((1-\chi)u)(x)$ we can make integrations by means of the operator,
$$
L:= |x-y|^{-2} (x-y)\cdot hD_\xi.
$$
Then, for any $N\geq 1$, we obtain,
$$
{\rm Op}_h^W (a_2)((1-\chi)u)={\rm Op}_h^W (b_N)u
$$
where $b_N:= (-1)^NL^N\left( a_2((x+y)/2, \xi)(1-\chi (y))\right)$ satisfies,
$$
\partial_x^\alpha \partial_y^\beta\partial_\xi^\gamma b_N ={\mathcal O}(h^N\lambda_h h^{-|\alpha|-|\beta|}).
$$
Therefore, $\Vert {\rm Op}_h^W (b_N)\Vert ={\mathcal O}(\lambda_h h^N)={\mathcal O}(h^{N-1})$, and the result follows by taking $N\geq 2$.

\vskip 0.2cm
(v) For $a\in \widetilde {\mathcal S}_m(\lambda_h, \Omega)$ and $b\in {\mathcal S}_{m'}(\lambda_h, \Omega)$, on $\Omega$ we have,
$$
\nabla_x\left( ab\right) = a\nabla_xb+b\nabla_xa={\mathcal O}((\lambda_h^2 +\lambda_h h^{-1} )\la\Re\xi\ra^{m+m'})={\mathcal O}(\lambda_h h^{-1} \la\Re\xi\ra^{m+m'}),
$$
and, for $|\alpha|\geq 2$,
\begin{eqnarray*}
\partial_x^\alpha\left( ab\right)&=& a\partial_x^\alpha b + b\partial_x^\alpha a+\sum_{{\beta\leq\alpha}\atop {1<|\beta| <|\alpha|}}{\mathcal O}\left(|\partial^{\beta}a|\cdot|\partial^{\alpha-\beta}b|\right)\\
&=&{\mathcal O}(\lambda_h^2h^{1-|\alpha|}+\lambda_h h^{-|\alpha|} +  \lambda_h^2h^{1-|\alpha|}) \la\Re\xi\ra^{m+m'}\\
&=& {\mathcal O}(\lambda_hh^{-|\alpha|}\la\Re\xi\ra^{m+m'}).
\end{eqnarray*}
Therefore, $ab$ is in $\widetilde {\mathcal S}_{m+m'}(\lambda_h, \Omega)$, and the rest of the proof proceeds exactly as in  (iii).
\vskip 0.2cm
(vi) We introduce the modified anti-Wick quantization, given by,
$$
\widetilde{\rm Op}^{AW}_h (a) :={\rm Op}^W_h(\widetilde a),
$$
where we have set,
$$
\widetilde a (x,\xi ):= \frac1{(\pi h)^{n}}\int a(y,\eta)e^{-\lambda_h^{1/2}(x-y)^2/h^{3/2} - \lambda_h^{-1/2}(\xi -\eta)^2/h^{1/2}}dyd\eta.
$$
Then, by a second-order Taylor expansion of $a(y,\eta)$ at $(x,\xi)$, and using the fact that $\nabla_x^2a={\mathcal O}(\lambda_h h^{-1})$ and $\nabla_{\xi}^2a={\mathcal O}(1)$, we see that,
$$
\widetilde a - a={\mathcal O}(\sqrt{h\lambda_h}),
$$
uniformly. In the same way, using that $\partial_x^\alpha\partial_\xi^\beta \widetilde a =\widetilde {\partial_x^\alpha\partial_\xi^\beta a}$, we obtain,
$$
\partial_x^\alpha\partial_\xi^\beta (\widetilde a - a)={\mathcal O}(\sqrt{h\lambda_h}\, h^{-|\alpha|}).
$$
As a consequence, by the Calder\'on-Vaillancourt theorem,
\be
\label{compAW}
\Vert \widetilde{\rm Op}^{AW}_h (a) -{\rm Op}^W_h(a)\Vert ={\mathcal O}(\sqrt {h\lambda_h}).
\ee
On the other hand, for any $u\in L^2 (\R^n)$, a straightforward computation gives,
$$
\la \widetilde{\rm Op}^{AW}_h (a)u,u\ra =\pi^{-n/2}h^{-7n/4}\lambda_h^{n/2}\int a(y,\eta )|H_u(y,\eta )|^2 e^{-\lambda_h^{1/2}y^2/h^{3/2}/h}dyd\eta,
$$
where $H_u$ is the (modified) Husimi function, defined by,
$$
H_u(y,\eta) := \int e^{i[x\eta/h -\lambda_h^{1/2}x^2/(2h^{3/2}) -\lambda_h^{1/2}xy/h^{3/2}]}u(x)dx.
$$
In particular, $\la \widetilde{\rm Op}^{AW}_h (a)u,u\ra \geq 0$, and the result follows by using (\ref{compAW}).
\end{proof}

\section{An estimate on $P_2$}\label{appC}
Here we prove,
\begin{proposition}\sl
\label{propC1}
Let $E_1<E_2$ be the first two eigenvalue of $P_2=-h^2\Delta +V_2$, with $V_2$ satisfying (\ref{assV2}), and let $k=h\ln(h^{-1})$. Then, for any $z\in\C$ such that $|z-E_1|=\delta h$ with $0<\delta <\inf_{0<h<<1}(E_2-E_1)/h$, one has,
$$
\Vert (-h^2\Delta + V_2(x', x_n-ik) -z)^{-1}\Vert ={\mathcal O}(h^{-1}).
$$
\end{proposition}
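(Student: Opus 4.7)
The plan is to prove the equivalent estimate $\|(Q_2 - z)u\|_{L^2} \geq c h \|u\|_{L^2}$ for $u \in H^2(\R^n)$, where $Q_2 := -h^2\Delta + V_2(x', x_n-ik)$. Write $Q_2 = P_2 + W$ with $W(x) := V_2(x', x_n-ik) - V_2(x)$. Taylor expansion in $x_n$ (valid by holomorphy of $V_2$ in a strip of width larger than $k$) gives $W = -ik\,\partial_{x_n}V_2 + {\mathcal O}(k^2)$ uniformly. The crucial observation underlying the whole argument is that, although $\|W\|_\infty = {\mathcal O}(k) = {\mathcal O}(h\ln h^{-1})$ exceeds the spectral gap ${\mathcal O}(h)$ --- so that naive Neumann inversion of $I + (P_2-z)^{-1}W$ diverges --- the real part satisfies $\|\Re W\|_\infty = {\mathcal O}(k^2)\ll h$, since the leading Taylor term is purely imaginary.

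I will perform a Schur-complement reduction with respect to the orthogonal decomposition $L^2 = \C\varphi_1 \oplus \varphi_1^\perp$, where $\varphi_1$ is the normalized ground state of $P_2$. Writing $u = \alpha\varphi_1 + u^\perp$ and $(Q_2-z)u = c_0\varphi_1 + y_0$ with $u^\perp, y_0 \in \varphi_1^\perp$, the equation decouples into
\begin{equation*}
c_0 = a\alpha + \langle W u^\perp, \varphi_1\rangle, \qquad y_0 = \alpha\,\Pi^\perp W\varphi_1 + F u^\perp,
\end{equation*}
with $a := E_1 - z + \langle W\varphi_1,\varphi_1\rangle \in \C$ and $F := (P_2 - z + \Pi^\perp W\Pi^\perp)|_{\varphi_1^\perp}$. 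Standard Agmon concentration of $\varphi_1$ at scale $\sqrt h$, combined with $\nabla V_2(0)=0$ (so that $\partial_{x_n}V_2 = {\mathcal O}(|x|)$ near the well), yields $\|W\varphi_1\|_{L^2} = {\mathcal O}(k\sqrt h) = \|\overline W\varphi_1\|_{L^2}$. Hence $|a| \geq \delta h - {\mathcal O}(k\sqrt h) \geq \delta h/2$ for $h$ small, and both off-diagonal blocks have operator norm ${\mathcal O}(k\sqrt h)$.

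The central step is the coercivity of $F$. For $u^\perp \in \varphi_1^\perp$, the spectral theorem for self-adjoint $P_2$ gives $\langle (P_2 - \Re z)u^\perp, u^\perp\rangle \geq (E_2 - \Re z)\|u^\perp\|^2 \geq c_1 h\|u^\perp\|^2$, where $c_1 := \inf_{0<h\ll 1}(E_2-E_1)/h - \delta > 0$ by hypothesis. Combined with $\bigl|\int (\Re W)|u^\perp|^2\bigr| \leq \|\Re W\|_\infty\|u^\perp\|^2 = {\mathcal O}(k^2)\|u^\perp\|^2 = o(h)\|u^\perp\|^2$, this yields $\Re\langle F u^\perp, u^\perp\rangle \geq (c_1/2) h\|u^\perp\|^2$, whence $\|Fu^\perp\| \geq (c_1/2)h\|u^\perp\|$. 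The same argument applied to $F^*$ (noting $\Re W^* = \Re W$) rules out the cokernel, so $F$ is invertible with $\|F^{-1}\| = {\mathcal O}(h^{-1})$.

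Solving the Schur system by substitution, the off-diagonal feedback factor is of size ${\mathcal O}(k\sqrt h)/h = {\mathcal O}(\sqrt h\ln h^{-1}) = o(1)$ and can be absorbed, yielding $|\alpha| + \|u^\perp\| \leq Ch^{-1}(|c_0| + \|y_0\|)$; by the orthogonality $\|u\|^2 = |\alpha|^2 + \|u^\perp\|^2$ and $\|(Q_2-z)u\|^2 = |c_0|^2 + \|y_0\|^2$, this gives $\|u\| \leq Ch^{-1}\|(Q_2-z)u\|$, completing the proof. The main obstacle is the apparent failure of perturbative inversion due to $\|W\|_\infty \gg h$; the resolution is the analytic-geometric origin of $W$ (translation along a holomorphic direction), which forces $\Re W = {\mathcal O}(k^2)$ and allows quadratic-form coercivity of the self-adjoint $P_2$ to dominate, at the modest cost of needing $\delta$ strictly less than the scaled spectral gap.
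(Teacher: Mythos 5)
Your proof is correct but proceeds by a genuinely different method than the paper's. The paper writes $Q_2-z = \bigl(I + (-ik\,\partial_{x_n}V_2 + k^2W_0)(P_2-z)^{-1}\bigr)(P_2-z)$ and inverts by a Neumann series; the key insight making this converge (despite $\|W\|_\infty\|(P_2-z)^{-1}\| = {\mathcal O}(\ln h^{-1})$ being large) is the pointwise bound $(\partial_{x_n}V_2)^2 = {\mathcal O}(V_2)$, which by the quadratic-form inequality $V_2 \leq P_2$ yields $\|\partial_{x_n}V_2\,(P_2-z)^{-1}\| = {\mathcal O}(h^{-1/2})$ and hence $\|W(P_2-z)^{-1}\| = {\mathcal O}(kh^{-1/2}+k^2h^{-1}) = {\mathcal O}(\sqrt h\ln h^{-1}) = o(1)$. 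You instead exploit the observation that the leading Taylor term of $W$ is purely imaginary, so $\|\Re W\|_\infty = {\mathcal O}(k^2) \ll h$, and you isolate the near-resonant scalar mode $\C\varphi_1$ via a Schur complement: on $\varphi_1^\perp$ the self-adjoint gap $E_2-\Re z\gtrsim h$ dominates $\Re W$, while the scalar block survives because $\|W\varphi_1\| = {\mathcal O}(k\sqrt h) = o(h)$ by Agmon concentration and $\nabla V_2(0)=0$. Both arguments are sound and both ultimately hinge on the same analytic facts (the non-degenerate well at $0$ making $\partial_{x_n}V_2$ small there, plus the spectral gap giving $\|(P_2-z)^{-1}\|={\mathcal O}(h^{-1})$), but the paper's estimate $\|\partial_{x_n}V_2\,(P_2-z)^{-1}\| = {\mathcal O}(h^{-1/2})$ is a short global trick that avoids any spectral decomposition, whereas your Schur reduction makes the near-singular scalar direction explicit. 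Your diagnosis that ``naive Neumann inversion diverges'' is accurate as a statement about the sup-norm bound on $W$, but the paper shows the Neumann series can still be made to work with the sharper operator-norm estimate; your approach sidesteps this entirely at the cost of needing the simplicity of $E_1$ and the condition $\delta<\inf(E_2-E_1)/h$ more explicitly (which the paper also uses, via $\|(P_2-z)^{-1}\|={\mathcal O}(h^{-1})$).
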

\begin{proof} By a Taylor expansion, we have,
$$
V(x', x_n-ik) -z= V_2(x)  -ik\partial_{x_n}V_2(x) +k^2W,
$$
with $W={\mathcal O}(1)$, and thus,
\be
\label{V2k}
-h^2\Delta + V_2(x', x_n-ik) -z=\left(I+(-ik\partial_{x_n}V_2(x) +k^2W)(P_2-z)^{-1}\right)(P_2-z).
\ee
Moreover, the assumptions on $V_2$ imply that $(\partial_{x_n}V_2)^2 ={\mathcal O}(V_2)$. Therefore, for any $u\in L^2$, we have,
\begin{eqnarray*}
\Vert (\partial_{x_n}V_2)(P_2-z)^{-1}u\Vert^2 &=& \la (\partial_{x_n}V_2)^2(P_2-z)^{-1}u, (P_2-z)^{-1}u\ra\\
&\leq& C\la V_2(P_2-z)^{-1}u, (P_2-z)^{-1}u\ra\\
&\leq& C\la P_2(P_2-z)^{-1}u, (P_2-z)^{-1}u\ra\\
&\leq& C\la (1+z(P_2-z)^{-1})u, (P_2-z)^{-1}u\ra\\
&\leq& C\Vert u\Vert\cdot \Vert (P_2-z)^{-1}u\Vert + C|z|\cdot  \Vert (P_2-z)^{-1}u\Vert^2.
\end{eqnarray*}
Since $|z| ={\mathcal O}(h)$ and we also know that $\Vert (P_2-z)^{-1}\Vert ={\mathcal O}(h^{-1})$, we deduce,
$$
\Vert (\partial_{x_n}V_2)(P_2-z)^{-1}u\Vert^2={\mathcal O}(h^{-1})\Vert u\Vert^2,
$$
and thus,
$$
\Vert (\partial_{x_n}V_2)(P_2-z)^{-1}\Vert ={\mathcal O}(h^{-1/2}).
$$
As a consequence,
$$
\Vert (-ik\partial_{x_n}V_2(x) +k^2W)(P_2-z)^{-1}\Vert ={\mathcal O}(k h^{-1/2}+ k^2h^{-1})={\mathcal O}(\sqrt{h} \ln \frac1{h}),
$$
and the result follows by taking the inverse in (\ref{V2k}).
\end{proof}

\section{A geometric estimate}\label{appD}

Here we prove the following result, that we use in Section 5:
\begin{lemma}
The function $\widetilde\varphi_1$ defined in Section 4 is such that, for
any $s\in [0,1]$ and for any $y\in\Gamma_+$,
$$
\widetilde p_2(y,is\nabla\widetilde\varphi_1(y)) >0.
$$
\end{lemma}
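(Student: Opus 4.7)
The plan is to substitute $\eta=is\nabla\widetilde\varphi_1(y)$ into the explicit formula $\widetilde p_2(y,\eta)=\eta^2+\tau V_2(y'-4\eta_n\eta',\,y_n-2\eta^2)$ inherited from the Fourier-integral conjugation (\ref{TC})--(\ref{symb}), and to exploit the eikonal equation $\widetilde p_1(y,i\nabla\widetilde\varphi_1)=0$, i.e.\ $(\nabla\widetilde\varphi_1)^2=\mu-y_n$ on $\Sigma_1$, to rewrite the quantity to be estimated in the form
\[
\widetilde p_2(y,is\nabla\widetilde\varphi_1(y))=-s^2(\mu-y_n)+\tau V_2\bigl(y'+4s^2(\partial_{y_n}\widetilde\varphi_1)\nabla_{y'}\widetilde\varphi_1,\;(1-2s^2)y_n+2s^2\mu\bigr).
\]
The evenness of $\widetilde p_2$ in $\eta$, together with the reality of $\widetilde\varphi_1$ on the real, makes this expression real-valued on $\Gamma_+$. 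Since $\Gamma_+\subset \R^n$ sits away from the well at the origin (for $\mu$ small), the shifted argument of $V_2$ stays in a compact region where $V_2$ is strictly positive unless the shift accidentally hits $0$, a situation that will be excluded at every stage.

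Next I would split the analysis according to the sign of $\mu-y_n$. When $y_n\geq\mu$, both summands are nonnegative; strict positivity follows from $\tau V_2(y)>0$ at $s=0$, together with the observation that the only way the first summand can vanish for $s>0$ is $y_n=\mu$, at which point the eikonal forces $\nabla\widetilde\varphi_1=0$ and the expression reduces to $\tau V_2(y)>0$. When $y_n<\mu$, I treat first the boundary $\Pi_y\Gamma$: there $\nabla\widetilde\varphi_1=\nabla\widetilde\varphi_2$ and $\widetilde p_2(y,i\nabla\widetilde\varphi_2)=0$ by the eikonal on $\Sigma_2^+$. The auxiliary function $F_y(s):=\widetilde p_2(y,is\nabla\widetilde\varphi_2(y))$ then satisfies $F_y(0)=\tau V_2(y)>0$ and $F_y(1)=0$, and a short computation of $F_y'(s)$ using $\nabla V_2=O(|y|)=O(\mu)$ together with $|\nabla\widetilde\varphi_2|=O(\sqrt\tau\mu)$ shows $F_y'(s)<0$ for $\tau$ small enough; hence $F_y(s)>0$ on $[0,1)$, which establishes the lemma on $\Pi_y\Gamma$ for $s<1$.

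To propagate into the interior of $\Gamma_+$ at $s=1$, I would differentiate the displayed expression in $y_n$: the first summand contributes $+1$, while the $V_2$-term contributes $\tau\nabla V_2\cdot\partial_{y_n}(\text{shifted }y)=O(\tau\mu)$, so the $y_n$-derivative is strictly positive for $\tau\mu$ small. Combined with the boundary identity $\widetilde p_2(y,i\nabla\widetilde\varphi_1(y))=0$ on $\Pi_y\Gamma$, this yields strict positivity at $s=1$ throughout $\Gamma_+\cap\{y_n<\mu\}$. For intermediate $s\in(0,1)$ I would either invoke continuity off $\Pi_y\Gamma$ from the strict positivity already shown there for $s<1$, or use monotonicity of $s\mapsto \widetilde p_2(y,is\nabla\widetilde\varphi_1(y))$, whose sign at leading order is controlled by $-2s(\mu-y_n)(1+O(\tau))$, to reduce positivity on $[0,1]$ to the two endpoints, both of which have now been handled.

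The main obstacle I expect is making the perturbative estimates uniform over the whole of $\Gamma_+$: the implicit dependence of $\nabla\widetilde\varphi_1$ on $y_n$ through the eikonal and the behaviour near the caustic $\{y_n=g(y')\}$ (where $|\nabla\widetilde\varphi_1|^2=\mu-y_n$ tends to $0$) both need to be controlled via the analyticity of $\widetilde\varphi_1$ on $\Pi_y{\mathcal W}_\mu$. Fortunately, near the caustic the expression degenerates to the manifestly positive $\tau V_2(y)$, providing a convenient sanity check that closes the argument at the far end of $\Gamma_+$.
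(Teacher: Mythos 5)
Your algebraic reduction using the eikonal $(\nabla\widetilde\varphi_1)^2=\mu-y_n$ is correct, and your observation of monotonicity in $s$ coincides with the paper's first step: since $\partial_s\widetilde p_2(y,is\nabla\widetilde\varphi_1)\leq -\tfrac1C s(\nabla\widetilde\varphi_1)^2\leq 0$, the whole interval $s\in[0,1]$ reduces to the single endpoint $s=1$ (so there is really only one endpoint to check, not two). Where you diverge is in the treatment of $s=1$. You propose to show $G(y):=\widetilde p_2(y,i\nabla\widetilde\varphi_1(y))>0$ on $\Gamma_+$ by propagating the vanishing of $G$ on $\Pi_y\Gamma$ via a lower bound $\partial_{y_n}G\geq 1-{\mathcal O}(\tau\mu)>0$. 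The paper instead exploits the fact, inherited from the transversal construction of $\Sigma_1$, that $G\neq 0$ on $\Gamma_+\setminus\Pi_y\Gamma$; by connectedness the sign is constant, so it suffices to determine the sign locally near $\Gamma$. This is done by Taylor-expanding along the complex $H_{\widetilde p_1}$-flow from a point $(\widetilde y,i\nabla\widetilde\varphi_2(\widetilde y))\in\Gamma$, keeping track of reality and orientation to get a strictly positive leading term. The advantage of the paper's route is that the only quantitative estimate is made in a region where $\widetilde\varphi_1$ is uniformly smooth.

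The gap in your route is precisely the uniformity of $\partial_{y_n}G$. Through the chain rule applied to the shifted $y'$-argument $y'+4(\partial_{y_n}\widetilde\varphi_1)\nabla_{y'}\widetilde\varphi_1$, the quantity $\partial_{y_n}G$ involves \emph{second} derivatives of $\widetilde\varphi_1$, and near the caustic $\{y_n=g(y')\}$ the eikonal forces an Airy-type singularity $\partial_{y_n}^2\widetilde\varphi_1\sim (g(y')-y_n)^{-1/2}\to\infty$ while the first-derivative prefactors $\nabla_{y'}\widetilde\varphi_1$ and $\partial_{y_n}\widetilde\varphi_1$ do \emph{not} vanish there (except at the isolated fold point $(y'_\mu,\mu)$). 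Hence $\partial_{y_n}G$ is not uniformly $1+{\mathcal O}(\tau\mu)$ over $\Gamma_+$: an extra $\tau\mu\,(g(y')-y_n)^{-1/2}$ term with uncontrolled sign appears. You acknowledge this but offer only the "sanity check" $G(y'_\mu,\mu)=\tau V_2(y'_\mu,\mu)>0$ at the single fold point, which neither restores a uniform derivative bound nor shows the integrated contribution from the singular term is dominated by the linear gain $y_n-f(y')$; to close the argument one would have to show that the unbounded second derivatives always appear contracted against compensating factors, which is a nontrivial degeneracy estimate. Finally, the case $y_n\geq\mu$ should be deleted: there $(\nabla\widetilde\varphi_1)^2=\mu-y_n<0$ has no real solution, so $\widetilde\varphi_1$ is not defined (and indeed $\Gamma_+\subset\{y_n<g(y')\leq\mu\}$), so this clause is vacuous.
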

\begin{proof} We first observe that
$$
\frac{\partial}{\partial s} \widetilde p_2(y,is\nabla\widetilde\varphi_1(y))
= \partial_\eta \widetilde p_2(y,is\nabla\widetilde\varphi_1(y))\cdot
i\nabla\widetilde\varphi_1(y)\leq -\frac1{C}s(\nabla\widetilde\varphi_1(y))^2\leq 0
$$
and therefore, it is enough to prove the result for $s=1$. Moreover,
since we already
know that 
$\widetilde p_2(y,i\nabla\widetilde\varphi_1(y))\not = 0$ outside $\Pi_y\Gamma$, it
is enough to prove it for $y\in\Gamma_+$ close enough to $\Pi_y\Gamma$, that
is,  for $y\in\Gamma_+$ such that
$$
(y,i\nabla\widetilde\varphi_1(y))= 
\exp tH_{\widetilde p_1}(\widetilde y,i\nabla\widetilde\varphi_2(\widetilde y))
$$
for some $t\in\C\backslash \{0\}$ small enough and $\widetilde y\in 
\Pi_y\Gamma\cap\R^n$.
In particular, we have
$$
(y,i\nabla\widetilde\varphi_1(y)) = (\widetilde y,i\nabla\widetilde\varphi_2(\widetilde y))
-t(2i\nabla\widetilde\varphi_2(\widetilde y), (0,\dots ,1))+{\mathcal O}(\vert t\vert^2)
$$
and thus, for $y$ to be real it is necessary that $t=it'$ for some real $t'$.
Moreover, in order that $y$ is in $\Gamma_+$, we need that 
$y_n>\widetilde y_n$ (at
least up to ${\mathcal O}(\vert t\vert^2)$), that is,
$$
2t'\partial_n\widetilde\varphi_2(\widetilde y)>0
$$
and thus: $t'>0$ (since $\partial_n\widetilde\varphi_2(\widetilde y) >0$).

Now, since $\widetilde p_2(\widetilde y,i\nabla\widetilde\varphi_2(\widetilde y))=0$, 
we obtain by a Taylor expansion,
\begin{eqnarray*}
\widetilde p_2(y,i\nabla\widetilde\varphi_1(y))
&=& 2t'(\partial_y\widetilde p_2)(\widetilde y,i\nabla\widetilde\varphi_2(\widetilde y))\cdot
\nabla\widetilde\varphi_2(\widetilde y) -it'
(\partial_{\eta_n}\widetilde p_2)(\widetilde y,i\nabla\widetilde\varphi_2(\widetilde y))
+{\mathcal O}(\vert t\vert^2)\\
&\geq & \frac{t'}{C}\left( \widetilde y\cdot \nabla\widetilde\varphi_2(\widetilde y)
+\partial_n\widetilde\varphi_2(\widetilde y)\right) >0
\end{eqnarray*}
This finishes the proof. 
\end{proof}

\section{An estimate on $\widetilde\phi_0$}\label{appE}

We have,
\begin{lemma}\sl
\label{lemmephi0}
Let $\widetilde\phi_0 $ be the function defined in (\ref{defphi0tilde}). Then, there exists $\varepsilon'>0$ such that, for all $y\in\R^n$ with $ y_n\leq \mu-\varepsilon$, one has,
$$
\widetilde\phi_0  (y_n)\leq  {\rm Min} \{\widetilde \phi (y), S(\mu) -\varepsilon'\}.
$$
\end{lemma}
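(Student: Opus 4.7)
My plan is to split the lemma into its two component inequalities $\widetilde\phi_0(y_n) \leq S(\mu) - \varepsilon'$ and $\widetilde\phi_0(y_n) \leq \widetilde\phi(y)$, and to reduce them both to a single pointwise estimate, namely $\widetilde\phi_0(y_n) \leq 0$ throughout $\{y_n \leq \mu - \varepsilon\}$.

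I would first establish this key estimate. Writing $\widetilde\phi_0(y_n) = \chi(y_n)\bigl(S(\mu) - \tfrac{2}{3}(\mu - y_n)^{3/2}\bigr)$, one sees that for $y_n \leq \mu - \varepsilon$ the bracketed factor is at most $S(\mu) - \tfrac{2}{3}\varepsilon^{3/2}$; since $S(\mu) = {\mathcal O}(\mu^2)$ (as noted in Section 4) while $\varepsilon > 0$ is fixed, this is strictly negative for $\mu$ small enough. Multiplying by $\chi(y_n) \in [0,1]$ then yields $\widetilde\phi_0(y_n) \leq 0$, and setting $\varepsilon' := S(\mu)/2 > 0$ immediately gives the right-hand inequality.

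For the left-hand inequality, since $\widetilde\phi_0 \leq 0$, it suffices to show $\widetilde\phi(y) \geq 0$. Recalling that $\widetilde\phi = (1-\chi_0(z/h))\widetilde\varphi_2 + \chi_0(z/h)\widetilde\varphi_1$ is a convex combination and that $\widetilde\varphi_1 \leq \widetilde\varphi_2$ on $\Pi_y{\mathcal W}_\mu \cap \R^n$, one has $\widetilde\phi \geq \widetilde\varphi_1$. I would split according to whether $y \in \Gamma_-$ or $y \in \Gamma_+$. If $y \in \Gamma_-$ (in particular when $|y'|$ is small, since $f(y'_\mu) = \mu - \delta\mu^2 > \mu - \varepsilon$ for $\mu$ small), then on the bulk of $\Gamma_-$ one has $\widetilde\phi = \widetilde\varphi_2 \geq 0$ directly from the Agmon-distance interpretation, with the transition near $\Gamma$ handled by continuity since $\widetilde\varphi_1 = \widetilde\varphi_2$ on $\Pi_y\Gamma$. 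If instead $y \in \Gamma_+$ with $y_n \leq \mu - \varepsilon$, the bound $f(y') \leq \mu - \delta\mu^2 - \delta'|y'|^2 + {\mathcal O}(|y'|^3 + \mu^3)$ combined with $y_n > f(y')$ forces $|y'|^2 \gtrsim \varepsilon$. On $\Pi_y\Gamma$ one then has $\widetilde\varphi_1 = \widetilde\varphi_2 \geq |y|^2/C \gtrsim \varepsilon$, while the eikonal identity $|\nabla\widetilde\varphi_1|^2 = \mu - y_n$ bounds the variation of $\widetilde\varphi_1$ between $(y', f(y'))$ and $(y', y_n)$ by $\tfrac{2}{3}(\mu - f(y'))^{3/2} = {\mathcal O}(|y'|^3)$, which for $|y'|<\delta_0$ with $\delta_0$ small is dominated by the quadratic base value. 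Hence $\widetilde\varphi_1 > 0$ there as well.

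The main obstacle will be this $\Gamma_+$ case, because $\widetilde\varphi_1$ is a priori allowed to be negative: along the central ray $y' = y'_\mu$ one has $\widetilde\varphi_1(y'_\mu, y_n) = S(\mu) - \tfrac{2}{3}(\mu - y_n)^{3/2}$ to leading order, precisely the expression sitting inside $\widetilde\phi_0$, and this is negative for $y_n \leq \mu - \varepsilon$ and $\mu$ small. The geometric resolution is that this central ray lies inside $\Gamma_-$ (not $\Gamma_+$) throughout the regime of the lemma, so that only $y$ with $|y'|$ bounded away from $y'_\mu$ enter $\Gamma_+$; and for such $y$ the quadratic growth of the Agmon boundary data on $\Pi_y\Gamma$ dominates the cubic eikonal correction, saving positivity of $\widetilde\varphi_1$ exactly where it is needed.
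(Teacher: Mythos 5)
Your proof takes a genuinely different route from the paper's, and it contains a gap. The paper's own argument is a comparison of $y_n$-derivatives: where $\chi\equiv 1$ one has $\partial_{y_n}\widetilde\phi_0=(\mu-y_n)^{1/2}\geq|\nabla\widetilde\phi|\geq|\partial_{y_n}\widetilde\phi|$, while at the caustic $\widetilde\phi(y',g(y'))\geq S(\mu)\geq\widetilde\phi_0(g(y'))$; integrating inward from the caustic then yields $\widetilde\phi_0\leq\widetilde\phi$ on $\{\widetilde\phi<S(\mu)\}\cap\{y_n\geq\mu-2\varepsilon\}$, and the remaining cases ($\widetilde\phi\geq S(\mu)$, or $y_n\leq\mu-2\varepsilon$ where $\widetilde\phi_0=0$) are immediate, with $\varepsilon':=S(\mu)-\sup_{y_n\leq\mu-\varepsilon}\widetilde\phi_0>0$ giving the second half. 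You instead try to short-circuit both inequalities through the single claim $\widetilde\phi_0\leq 0\leq\widetilde\phi$ on $\{y_n\leq\mu-\varepsilon\}$.

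The problem is that $\widetilde\phi_0(\mu-\varepsilon)=S(\mu)-\tfrac23\varepsilon^{3/2}\leq 0$ is an extra hypothesis relating $\varepsilon$ to $S(\mu)$, and it is neither stated nor true in the regime the paper works in. Your justification (``$S(\mu)={\mathcal O}(\mu^2)$ while $\varepsilon>0$ is fixed, so this is negative for $\mu$ small'') reverses the quantifier order: $\mu$ and $\tau$ are fixed first, and $\varepsilon$ is only chosen afterwards, in the proof of the a priori estimate near $\{y_n=\mu\}$, to be ``small enough''; in particular the ellipticity requirement $\widetilde p_2(y,i\nabla\Phi_N)\geq c_\mu>0$ on $\{y_n\geq\mu-\varepsilon-\delta\}$ forces $\mu-y_n\approx|\nabla\Phi_N|^2$ to stay below $\tau V_2(y)\sim\tau\mu^2$ there, so that $\varepsilon\lesssim\tau\mu^2$, which is far below $S(\mu)^{2/3}={\mathcal O}(\mu^{4/3})$. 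In that regime $\widetilde\phi_0(\mu-\varepsilon)>0$, your key step fails, and once $\widetilde\phi_0$ can be positive the conclusion $\widetilde\phi_0\leq\widetilde\phi$ no longer follows from $\widetilde\phi\geq0$; a derivative comparison from the caustic, as in the paper, is genuinely needed. (As a side remark, the positivity of $\widetilde\varphi_1$ on $\Gamma_+$ is also simpler than you make it: $\partial_{y_n}\widetilde\varphi_1\geq 0$ together with $\widetilde\varphi_1=\widetilde\varphi_2\geq 0$ on $\Pi_y\Gamma$ already gives it --- but that is minor compared with the sign issue for $\widetilde\phi_0$.)
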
 
\begin{proof} 
By definition, for $\mu-2\varepsilon \leq y_n < g(y')$ (where $g(y')$ is the function defining the caustic set of $\widetilde \phi$), we have $(\partial_{y_n}\widetilde \phi_0(y_n))^2 =|\nabla \widetilde \phi (y)|^2=\mu -y_n$, and $\partial_{y_n}\widetilde \phi_0\geq 0$. Therefore, $|\partial_{y_n}\widetilde \phi (y)|\leq \partial_{y_n}\widetilde \phi_0 (y_n)$, and we also have $\widetilde \phi (y', g(y') \geq S(\mu)\geq \widetilde \phi_0( g(y'))$. Therefore, integrating from $y_n=g(y')-t$ to $y_n =g(y')$ (with $0\leq t\leq g(y') -\mu +2\varepsilon$), we deduce,
$$
\widetilde \phi_0( y_n)\leq \widetilde \phi (y) \,\,{\rm if}\,\, \mu-2\varepsilon \leq y_n < g(y').
$$
In particular, $\widetilde \phi_0( y_n)\leq \widetilde \phi (y)$ on $\{ \widetilde \phi (y)<S(\mu)\} \cap \{ y_n\geq \mu -2\varepsilon\}$. Since we also have $\sup_{y_n\leq\mu -\varepsilon} \widetilde \phi_0(y_n) < S(\mu)$, and $\widetilde \phi_0(y_n)=0$ for $y_n\leq \mu-2\varepsilon$, the result follows.
\end{proof}

\section{Proof of Lemma \ref{estuetalarge}}\label{appF}

If $c=c(x)$ satisfies (\ref{ass2}), we have,
$$
{\mathcal U}c{\mathcal U}^{-1}={\rm Op}^L_h (\widetilde c ),
$$
with,
\begin{eqnarray*}
\widetilde c(y,\eta )&=&\frac1{(2\pi h)^n}\int e^{i(y-x)(\xi-\eta)/h}\times \\
&&\times c(x' -2\xi_n(\xi'+\eta'),x_n-\frac23(\eta_n^2+\eta_n\xi_n+\xi_n^2) -2|\eta'|^2)dxd\xi.
\end{eqnarray*}
Here, the oscillatory integral is well defined thanks to (\ref{ass2}), and  it defines a smooth function on $\R^{2n}$, holomorphic with respect to $y$ in a complex strip around $\R^n$. Moreover, observing that,
$$
|\eta|\cdot |\nabla \la y_n\ra^{1/2}|\leq |\eta|\la y_n\ra^{-1/2} ={\mathcal O}(\la y_n-2|\eta|^2\ra^{1/2}),
$$
we see that, by (\ref{ass2}),  we can make a deformation of contour of integration in order to obtain,
$$
e^{-4k\la y_n\ra^{1/2}/h}{\mathcal U}c{\mathcal U}^{-1}e^{4k\la y_n\ra^{1/2}/h}={\rm Op}^L_h (\check c ),
$$
where $\check c$ has properties similar to those of $\widetilde c$.
Then, by standard properties of $T_1$ (see, e.g., \cite{Ma6}), we have,
$$
T_1e^{-4k\la y_n\ra^{1/2}/h}{\mathcal U}c{\mathcal U}^{-1}e^{4k\la y_n\ra^{1/2}/h} ={\rm Op}^L_h (\check c (y-\eta^*, y^*))T_1,
$$
where $(y^*,\eta^*)$ stands for the dual variable of $(y,\eta)$. 
Next, using the fact that $\check c (y-\eta^*, y^*)$ is holomorphic with respect to $y$, for any $\varepsilon >0$, we find,
\begin{eqnarray*}
e^{-\varepsilon k^{2/3}\la \eta\ra/h}T_1e^{-4k\la y_n\ra^{1/2}/h}{\mathcal U}c{\mathcal U}^{-1}e^{4k\la y_n\ra^{1/2}/h}\\
={\rm Op}^L_h ( c_\varepsilon (\eta, y-\eta^*, y^*))e^{-\varepsilon k^{2/3}\la \eta\ra/h}T_1,
\end{eqnarray*}
where $c_\varepsilon (y, \eta,\eta^*, y^*)$ can be computed explicitly, and has essentially the same properties as $\check c (y-\eta^*, y^*)$, with estimates uniform with respect to  $\varepsilon >0$ sufficiently small. 

Then, taking a cut-off function $\chi_1 =\chi_1 (|\eta^*|)$ that is equal to 1 near 0, and using the fact that $T_1^*T_1=1$, we can write,
\begin{eqnarray}
\label{cutetastar}
&&{\rm Op}^L_h ( c_\varepsilon (\eta, y-\eta^*, y^*))e^{-\varepsilon k^{2/3}\la \eta\ra/h}T_1\\
&&=
{\rm Op}^L_h ( \chi_1 (|\eta^*|) (c_\varepsilon (\eta, y-\eta^*, y^*))e^{-\varepsilon k^{2/3}\la \eta\ra/h}T_1\nonumber\\
&&\hskip 2cm+ {\rm Op}^L_h ((1- \chi_1 (|\eta^*|) )(c_\varepsilon (\eta, y-\eta^*, y^*))K_\varepsilon e^{-\varepsilon k^{2/3}\la \eta\ra/h}T_1,\nonumber
\end{eqnarray}
with $K_\varepsilon:= e^{-\varepsilon k^{2/3}\la \eta\ra/h}T_1T_1^*e^{\varepsilon k^{2/3}\la \eta\ra/h}$.

A simple computation (see, e.g., \cite{Ma6}, Section 3) shows that the distributional kernel of $K_\varepsilon$ is given by,
$$
K_\varepsilon (y,\eta; z,\zeta):= (\pi h)^{\frac{n}2}e^{i(y-z)(\eta +\zeta)/2h -(y-z)^2/4h -(\eta -\zeta)^2/4h+ \varepsilon k^{2/3}(\la \zeta\ra -\la\eta\ra)/h}.
$$
Therefore, the distributional kernel of $ {\rm Op}^L_h ((1- \chi_1 (|\eta^*|) )(c_\varepsilon (\eta, y-\eta^*, y^*))K_\varepsilon$ is,
$$
I_1(y,\eta; z,\zeta) := \frac1{2^n(\pi h)^{\frac{3n}2}}\int e^{ \Phi_1/h}(1- \chi_1 (|\eta^*|) )c_\varepsilon (\eta, y-\eta^*, y^*)dy'd\eta'dy^*d\eta^*,
$$
with,
\begin{eqnarray*}
\Phi_1:= i(y-y')y^* + i(\eta -\eta')\eta^*+\frac{i}2(y'-z)(\eta' +\zeta) -\frac14(y'-z)^2\\
 -\frac14(\eta' -\zeta)^2
+ \varepsilon k^{2/3}(\la \zeta\ra -\la\eta\ra),
\end{eqnarray*}
while the distributional kernel of ${\rm Op}^L_h ( \chi_1 (|\eta^*|) (c_\varepsilon (\eta, y-\eta^*, y^*))$ is,
$$
I_2(y,\eta; z,\zeta) := \frac1{(2\pi h)^{n}}\int e^{i(y-z)y^*/h + i(\eta -\zeta)\eta^*/h} \chi_1 (|\eta^*|) c_\varepsilon (\eta, y-\eta^*, y^*)dy^*d\eta^*.
$$
In $I_1$, we can perform the change of contour of integration,
$$
\eta' \mapsto \eta' -i\delta \eta^*/\la\eta^*\ra,
$$
with $\delta >0$ small enough and independent of $\mu$. Since $k^{2/3}(\la \zeta\ra -\la\eta\ra)\leq k^{4/3} + (\eta -\zeta)^2$ and $k^{4/3}/h << 1$, we see that the resulting phase factor can be estimated by,
$$
e^{-\delta |\eta^*|^2/2h\la\eta^*\ra -(y-z)^2/8h -(\eta -\zeta)^2/8h}.
$$
As a consequence, since $\eta^*$ stays away from 0 on the support of $1- \chi_1 (|\eta^*|)$, and since, for any $N\geq 1$ one has $|y^*|^{2N}\la \eta_n^*+ |y^*|^2\ra^{-N} e^{-\delta \la\eta^*\ra/4h} ={\mathcal O}(1)$, by the Calder\'on-Vaillancourt theorem we conclude that, for any $\chi\in C_0^\infty (\R^n)$, we have,
\begin{eqnarray*}
&& \Vert \chi (y){\rm Op}^L_h ((1- \chi_1 (|\eta^*|) )(c_\varepsilon (\eta, y-\eta^*, y^*))K_\varepsilon e^{-\varepsilon k^{2/3}\la \eta\ra/h}T_1\check{\mathbf u}\Vert \\
&& \hskip 4cm ={\mathcal O}(e^{-\delta_1/h})\Vert e^{-\varepsilon k^{2/3}\la \eta\ra/h}T_1\check{\mathbf u}\Vert,
\end{eqnarray*}
where we have set,
$$
\check{\mathbf u}(y) := e^{-4k\la y_n\ra^{1/2}/h}\widetilde{\mathbf u}(y),
$$
and where  $\delta_1>0$ is independent of $\mu$. In particular,  if $\psi \in C_b^\infty (\R^{2n};\R_+)$ is such that  $\sup|\psi | \leq\delta_1/2$, we obtain,
\begin{eqnarray}
&& \Vert e^{\psi/h}\chi (y){\rm Op}^L_h ((1- \chi_1 (|\eta^*|) )(c_\varepsilon (\eta, y-\eta^*, y^*))K_\varepsilon e^{-\varepsilon k^{2/3}\la \eta\ra/h}T_1\check{\mathbf u}\Vert \nonumber \\
&& \hskip 4cm ={\mathcal O}(e^{-\delta_1/2h})\Vert e^{\psi/h-\varepsilon k^{2/3}\la \eta\ra/h}T_1\check{\mathbf u}\Vert,
\end{eqnarray}

On the other hand,  for $y$ in some arbitrarily large fixed compact set $K$,  in $I_2$ we can successively perform the two changes of contour,
\begin{eqnarray}
&& \R^n \ni\eta^*\mapsto \eta^* +i\delta \chi_2(|\eta^*|)(\eta-\zeta)/|\eta -\zeta|\nonumber\\
\label{chcontourangul}
&&S^{n-1}\ni  \omega^*:= \eta^*/|\eta^*| \mapsto \omega^*+i\delta (\eta-\zeta)/|\eta -\zeta|,
\end{eqnarray}
where $\chi_2\in C_0^\infty$ is supported inside $\{ \chi_1 =1\}$, $\chi_2 =1$ near 0, and $\delta >0$ depends on $K$ only. This makes appear the exponential factor,
$$
e^{-\delta (|\eta^*|+\chi_2(|\eta^*|))|\eta -\zeta|/h }.
$$
Moreover, since $|\eta^*|$ remains bounded on the support of $\chi_1(|\eta^*|)$, if $\psi \in C_b^\infty (\R^{2n};\R_+)$ is such that  $\sup|\nabla\psi| <<\delta$, writing: $\psi(y,\eta)=\psi(z,\zeta) +(y-z)\phi_1 + (\eta-\zeta)\phi_2$, then we see that we can also perform the change of contour,
$$
\R^n\ni y^*\mapsto y^* +i\phi_1(y,z,\eta,\zeta),
$$
and we easily conclude that,
\begin{eqnarray*}
\Vert e^{\psi /h} \chi (y)\chi_1(|\eta^*|){\rm Op}^L_h ( c_\varepsilon (\eta, y-\eta^*, y^*))e^{-\varepsilon k^{2/3}\la \eta\ra/h}T_1\check{\mathbf u}\Vert \\
={\mathcal O}\left(\Vert e^{\psi /h-\varepsilon k^{2/3}\la \eta\ra/h}T_1\check{\mathbf u}\Vert\right).
\end{eqnarray*}
 Here, $\chi \in C_0^\infty (\R^n)$ is an arbitrary cut-off function supported in $K$. 
 
 Summing up, we have proved that, for any  $\chi \in C_0^\infty (\R^n)$ and  for any  $\psi\in C_b^\infty (\R^{2n}; \R_+)$ with $\sup|\psi| + \sup |\nabla\psi| $ small enough (independently of $\mu$), there exists a constant $C>0$ (independent of $\mu$) such that, for any $\varepsilon >0$, we have,
 \be
 \label{fondestopc}
 \Vert \chi (y)e^{\psi/h-\varepsilon k^{2/3}\la \eta\ra/h}T_1{\mathcal U}c{\mathcal U}^{-1}\check{\mathbf u}\Vert \leq C\Vert e^{\psi/h-\varepsilon k^{2/3}\la \eta\ra/h}T_1\check{\mathbf u}\Vert.
 \ee
 It is important to observe that, in all these computations, the estimates are uniform with respect to $\varepsilon >0$. Actually, the presence of $\varepsilon$ is required only to insure that the norms are finite.

\begin{remark}\sl
One can extract from the previous proof the following  (maybe already well-know) result: If $a(x,\xi)$ is holomorphic with respect to $\xi$ is a complex strip $\Gamma$ around $\R^n$ ($n\geq 2$), and is bounded together with all its derivatives in $\R^n\times\Gamma$, then, for any cut-off function $\chi\in C_0^\infty (\R_+)$ such that $\chi =1$ near 0, and for any $\varphi \in C_b^\infty (\R^n)$ with $\sup(|\varphi| +|\nabla\varphi|)$ small enough, the operator $e^{\varphi/h}{\rm Op}_h^W(\chi (|\xi|)a(x,\xi))e^{-\varphi/h}$ is uniformly bounded on $L^2(\R^n)$. (Indeed, this can be seen by first making integrations by parts with respect to $\xi$, and then by performing   changes of contour similar to those of (\ref{chcontourangul}).)
\end{remark}

In the sequel, we also need,
\begin{lemma}\sl 
\label{continctilde}
There exists a constant $C>0$ such that, for all $\varepsilon, h>0$ small enough, one has,
\begin{eqnarray*}
&& \Vert e^{-\varepsilon k^{2/3}\la\eta\ra/h}T_1e^{-4k\la y_n\ra^{1/2}/h}{\mathcal U}c{\mathcal U}^{-1}e^{4k\la y_n\ra^{1/2}/h} \check{\mathbf u}\Vert\\
&&  \leq C\Vert (1+\frac{\la\eta\ra^{2C}}{\la y_n-2\eta^2\ra^C})e^{-\varepsilon k^{2/3}\la\eta\ra/h}T_1\check {\mathbf u}\Vert.
\end{eqnarray*}
\end{lemma}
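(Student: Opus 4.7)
The plan is to reduce the estimate to a direct analysis of the distributional kernel of the composite operator, combining three ingredients: (a) the representation $e^{-4k\la y_n\ra^{1/2}/h}\,{\mathcal U}c{\mathcal U}^{-1}\,e^{4k\la y_n\ra^{1/2}/h}={\rm Op}^L_h(\check c)$ with the symbol $\check c$ produced at the beginning of this appendix, (b) the standard intertwining $T_1\,{\rm Op}^L_h(a)={\rm Op}^L_h\bigl(a(y-\eta^*,y^*)\bigr)\,T_1$ on the Bargmann side, and (c) carefully calibrated complex deformations in the dual variables $(y^*,\eta^*)$ whose admissible amplitude is dictated by the analyticity of $\check c$ inherited from Assumption (\ref{ass2}).

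First I would write $T_1\,e^{-4k\la y_n\ra^{1/2}/h}{\mathcal U}c{\mathcal U}^{-1}e^{4k\la y_n\ra^{1/2}/h}\check{\mathbf u}={\rm Op}^L_h\bigl(\check c(y-\eta^*,y^*)\bigr)\,T_1\check{\mathbf u}$ and then conjugate by $e^{-\varepsilon k^{2/3}\la\eta\ra/h}$. Because $\check c(y,\eta)$ is holomorphic in $y$ on a strip of width comparable to $\la \Re(y_n-2\eta^2)\ra^{1/2}$, this conjugation is admissible and gives a modified symbol $c_\varepsilon(\eta,y-\eta^*,y^*)$ with derivatives bounded uniformly in $\varepsilon$. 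As in the argument leading to (\ref{cutetastar}) I would then split by $1=\chi_1(|\eta^*|)+(1-\chi_1(|\eta^*|))$. For the component with $1-\chi_1(|\eta^*|)$, the contour deformation $\eta'\mapsto \eta'-i\delta\eta^*/\la\eta^*\ra$ already used above yields an ${\mathcal O}(e^{-\delta_1/h})$ bound, which is absorbed into the right-hand side of the lemma without any polynomial cost.

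For the main piece with $\chi_1(|\eta^*|)$ I would express its action as a twisted Gaussian convolution of $T_1\check{\mathbf u}$ in $(y-z,\eta-\zeta)$, weighted by $c_\varepsilon$ evaluated at arguments in which $y_n-2\eta^2$ appears in the analytic slot of $c$. The crux of the proof is then to perform two successive contour shifts, $\eta^*\mapsto \eta^*+i\delta_*(\eta-\zeta)/|\eta-\zeta|$ followed by an angular deformation $\omega^*\mapsto \omega^*+i\delta(\eta-\zeta)/|\eta-\zeta|$ as in (\ref{chcontourangul}), but now with amplitude $\delta_*$ of order $\la y_n-2\eta^2\ra^{1/2}$, which is the maximal shift compatible with the $\widetilde{\mathcal S}_\delta$-analyticity of $c$. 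These deformations produce the phase gain $\exp(-\delta_*|\eta-\zeta|/h)$, which combined with the built-in Gaussian $\exp(-(\eta-\zeta)^2/4h)$ and the subsequent $(y^*,\eta^*)$ integration, trades each factor $\la\eta^*\ra$ for a factor $\la y_n-2\eta^2\ra^{-1/2}$. Accumulating enough such trades to kill the natural $\la\eta\ra^2$-growth of the full symbol of $\widetilde P$ produces exactly the polynomial weight $1+\la\eta\ra^{2C}/\la y_n-2\eta^2\ra^C$ on the right-hand side, with a constant $C>0$ independent of $h$, $\mu$ and $\varepsilon$.

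The main obstacle will be the bookkeeping in the transition region where $\la\eta\ra$ is large but $\la y_n-2\eta^2\ra$ is small: one must verify that the shift amplitude $\delta_*=\delta\la y_n-2\eta^2\ra^{1/2}$ keeps $(y-\eta^*,y^*)$ inside the analyticity domain of $\check c$ uniformly in $(\eta,\zeta,y,z)$ and that the real part of the total phase remains coercive. Once this is in place, an application of the Schur/Calder\'on--Vaillancourt test, in the spirit of Proposition \ref{propA1}(i), to the resulting bounded kernel acting on $\bigl(1+\la\eta\ra^{2C}/\la y_n-2\eta^2\ra^C\bigr)e^{-\varepsilon k^{2/3}\la\eta\ra/h}T_1\check{\mathbf u}$ yields the stated bound with a uniform constant, and completes the proof.
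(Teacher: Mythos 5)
Your proposal correctly identifies the representation $e^{-4k\la y_n\ra^{1/2}/h}{\mathcal U}c{\mathcal U}^{-1}e^{4k\la y_n\ra^{1/2}/h}={\rm Op}^L_h(\check c)$, the intertwining with $T_1$, the passage to the $\varepsilon$-conjugated symbol $c_\varepsilon$, and the $\chi_1(|\eta^*|)$-splitting as the first steps. But the step you flag as ``the main obstacle'' is in fact where the proof as written breaks down, and the way you propose to treat it — a contour shift in $\eta^*$ of amplitude $\delta_*\sim\la y_n-2\eta^2\ra^{1/2}$ — is not compatible with the actual analyticity domain of $\check c$. After the intertwining, the second argument of $c$ sitting inside $c_\varepsilon(\eta,y-\eta^*,y^*)$ is $(y-\eta^*)_n-2|y^*|^2$, so the maximal shift in $\eta^*$ allowed by Assumption~(\ref{ass2}) has amplitude of order $\la y_n-\eta_n^*-2|y^*|^2\ra^{1/2}$: it depends on the \emph{dual integration variables} $(y^*,\eta^*)$, not on the parameter $\eta$. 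Since $(y^*,\eta^*)$ range over all of $\R^{2n}$ inside the integral, there is no uniform choice of $\delta_*$ equal to $\delta\la y_n-2\eta^2\ra^{1/2}$ that stays inside the domain for all $(y^*,\eta^*)$ — precisely in the transition region you single out, these two quantities can be wildly different.

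What actually closes this gap in the paper is a different mechanism: the analyticity width is encoded once and for all in a uniform derivative estimate on $c_\varepsilon$, namely $\partial_{y,\eta,\eta^*}^\alpha\partial_{y^*}^\beta c_\varepsilon={\mathcal O}\bigl(1+\la y^*\ra^{|\beta|}\la y_n-\eta_n^*-2|y^*|^2\ra^{-|\beta|/2}\bigr)$. Calder\'on--Vaillancourt then yields an operator bound involving the pseudodifferential weight $\la hD_y\ra^{2C}\la y_n-hD_{\eta_n}-2(hD_y)^2\ra^{-C}$ on the Bargmann side, and the crucial final step is the standard structural property of $T_1$ (the fact that on ${\rm Ran}\,T_1$ one can replace $hD_y$ by $\eta$ and $hD_\eta$ by $0$), which converts that pseudodifferential weight into the multiplication weight $1+\la\eta\ra^{2C}\la y_n-2\eta^2\ra^{-C}$ appearing on the right-hand side. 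Your proposal never invokes this $T_1$-property, and without it the passage from the $(y^*,\eta^*)$-dependent analyticity width to the $\eta$-dependent weight in the conclusion cannot be justified.
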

\begin{proof} Using (\ref{ass2}), one can check that the symbol $c_\varepsilon$ satisfies,
$$
\partial_{y, \eta,  \eta^*}^\alpha \partial_{y^*}^\beta c_\varepsilon={\mathcal O}(1+\la y^*\ra^{|\beta|}\la y_n-\eta_n^* -2|y^*|^2\ra^{-|\beta|/2}),
$$
uniformly with respect to $\varepsilon$.
Therefore, we learn from the Calder\'on-Vaillancourt Theorem the existence of some constant $C>0$ such that, 
\begin{eqnarray*}
&&\Vert {\rm Op}^L_h ( c_\varepsilon (\eta, y-\eta^*, y^*))e^{-\varepsilon k^{2/3}\la \eta\ra/h}T_1 \check{\mathbf u}\Vert\\
 &&\leq C\Vert (1+\la hD_y\ra^{2C}\la y_n-hD_{\eta_n}-2(hD_y)^2\ra^{-C})e^{-\varepsilon k^{2/3}\la \eta\ra/h}T_1 \check{\mathbf u}\Vert.
\end{eqnarray*}
Then, the result follows from standard properties of the transform $T_1$, that permit us to replace $hD_y$ by $\eta$, and $hD_\eta$ by 0, in the previous estimate (see, e.g., \cite{Ma6}).
\end{proof}

Now, we set,
$$
\check P:= e^{-4k\la y_n\ra^{1/2}/h}\widetilde Pe^{4k\la y_n\ra^{1/2}/h},
$$
and we
observe that, for any integer $N\geq 1$, the operator,
$$
(\check P -\rho)^N =e^{-4k\la y_n\ra^{1/2}/h}{\mathcal U}(P-\rho)^N{\mathcal U}^{-1}e^{4k\la y_n\ra^{1/2}/h},
$$
is a sum of terms of the type: $a(y, hD_y)e^{-4k\la y_n\ra^{1/2}/h}{\mathcal U}c{\mathcal U}^{-1}e^{4k\la y_n\ra^{1/2}/h}$, where $a(y, hD_y)$ is a $2\times 2$ matrix of partial differential operator with polynomial coefficients, and $c=c(x)$ satisfies (\ref{ass2}). Moreover, the leading term (that is, the one corresponding to the operator $a(y, hD_y)$ with higher degree in $hD_y$) is just $(-h^2\Delta)^N\left(\begin{array}{cc} (-1)^N & 0\\ 0 & 1\end{array}\right)$. Therefore,
combining (\ref{fondestopc}) with standard microlocal exponential weighted estimates (see \cite{Ma6}, Section 3.5), we see that the equation $(\check P -\rho)^N\check{\mathbf u}=0$ implies that,  for any  $\chi \in C_0^\infty (\R^n)$, for any  $\psi\in C_b^\infty (\R^{2n}; \R_+)$ with $\sup|\psi| + \sup |\nabla\psi| $ small enough,  and for any $\varepsilon >0$ small enough, one has,
\be
\label{estellipeta}
\Vert |\eta|^{2N}\chi (y)e^{\psi /h-\varepsilon k^{2/3}\la\eta\ra/h}T_1 \check{\mathbf u}\Vert \leq C\Vert \la \eta\ra^{2N-1}e^{\psi /h-\varepsilon k^{2/3}\la\eta\ra/h}T_1 \check{\mathbf u}\Vert,
\ee
where the constant $C$ does not depend on $\varepsilon$, and depends on $\psi$ through the quantities $\sup| \partial^\alpha\psi|$ only. In particular, we can take $\psi$ such that $\psi =0$ on $\{\chi (y)\not= 1\}\cup \{ |\eta|^{2N}\leq 2C\la \eta\ra^{2N-1}\}$, and $\psi \geq \delta_0$ on $K\times \{|\eta | \geq C_0\}$ with $\delta_0, C_0>0$ independent of $\varepsilon$ and $\mu$, and where $K$ is an arbitrary compact subset of the interior of $\{\chi =1\}$. In this case, we deduce from (\ref{estellipeta}),
\be
\label{estmic1}
\Vert |\eta|^{2N}e^{\psi /h-\varepsilon k^{2/3}\la\eta\ra/h}T_1 \check{\mathbf u}\Vert_{{\rm Supp}\psi} \leq 2C\Vert \la \eta\ra^{2N-1}e^{-\varepsilon k^{2/3}\la\eta\ra/h}T_1 \check{\mathbf u}\Vert_{\R^{2n}\backslash{\rm Supp}\psi}
\ee
On the other hand, using Lemma \ref{continctilde} and  the equation $(\check P -\rho)^N\check{\mathbf u}=0$, we also deduce,
\begin{eqnarray*}
\Vert |y_n -\eta^2|^{N}e^{-\varepsilon k^{2/3}\la\eta\ra/h}T_1 \check{\mathbf u}_1\Vert +\Vert |\eta|^{2N}e^{-\varepsilon k^{2/3}\la\eta\ra/h}T_1 \check{\mathbf u}_2\Vert\\
\leq C\Vert (\la \eta\ra^{2N-1}+ \la y_n -\eta^2\ra^{N-1})(1+ \frac{\la \eta\ra^{2C}}{\la y_n-2\eta^2\ra^C})e^{-\varepsilon k^{2/3}\la\eta\ra/h}T_1 \check{\mathbf u}\Vert, 
\end{eqnarray*}
with some other constant $C>0$, still independent of $\varepsilon$. Now, on $\{\eta^2\geq 2|y_n|\}$, we have $\eta^2-y_n\geq \frac12\eta^2$, while $\la \eta\ra^{2C}\la y_n-2\eta^2\ra^{-C}$ is bounded. Therefore, we obtain,
\begin{eqnarray*}
&&\Vert |\eta|^{2N}e^{-\varepsilon k^{2/3}\la\eta\ra/h}T_1 \check{\mathbf u}\Vert_{\eta^2\geq 2|y_n|}\\
&&\leq 
C\Vert (\la \eta\ra^{2N-1}+ \la y_n -\eta^2\ra^{N-1})(1+ \frac{\la \eta\ra^{2C}}{\la y_n-2\eta^2\ra^C})e^{-\varepsilon k^{2/3}\la\eta\ra/h}T_1 \check{\mathbf u}\Vert_{\eta^2\leq 2|y_n|}\\
&&\hskip 5cm +C\Vert |\eta|^{2N-1}e^{-\varepsilon k^{2/3}\la\eta\ra/h}T_1 \check{\mathbf u}\Vert_{\eta^2\geq 2|y_n|},
\end{eqnarray*}
and thus, if $C_1>0$ is sufficiently large (in order to have $|\eta|^{2N}\geq \la\eta\ra^{2N-1}$ when $|\eta|^2\geq C_1$),
\begin{eqnarray}
\label{F7}
&&\Vert |\eta|^{2N}e^{-\varepsilon k^{2/3}\la\eta\ra/h}T_1 \check{\mathbf u}\Vert_{\eta^2\geq \max(C_1,2|y_n|)}\\
&&\leq 
C_N\Vert  \la y_n\ra^{N+C}e^{-\varepsilon k^{2/3}\la\eta\ra/h}T_1 \check{\mathbf u}\Vert_{\eta^2\leq 2|y_n|}+C_N\Vert e^{-\varepsilon k^{2/3}\la\eta\ra/h}T_1 \check{\mathbf u}\Vert_{ \eta^2\leq C_1}, \nonumber
\end{eqnarray}
with $C_N>0$ independent of $\varepsilon$. Setting $\theta_1=\theta_1(y_n;h):= 4k\la y_n\ra^{1/2}/h$ and $\theta_2=\theta_2(\eta;h):=  k\la \eta\ra/h$, for $\eta^2\leq 2|y_n|$, we write,
\begin{eqnarray*}
\la y_n\ra^{N+C}T_1 \check{\mathbf u}&=&\la y_n\ra^{N+C}T_1 e^{-\theta_1}\widetilde{\mathbf u}\\
&=& \la y_n\ra^{N+C}e^{\theta_2-\theta_1}(e^{\theta_1-\theta_2}T_1 e^{-\theta_1} T_1^*e^{\theta_2}) e^{-\theta_2}T_1\widetilde{\mathbf u},
\end{eqnarray*}
and since $\theta_2(\eta)-\theta_1(y_n)\leq -\frac12\theta_1(y_n)$, and $\la y_n\ra^{N+C}e^{-\frac12\theta_1(y_n)}={\mathcal O}(1)$, we obtain,
\be
\label{F8}
\Vert  \la y_n\ra^{N+C}e^{-\varepsilon k^{2/3}\la\eta\ra/h}T_1 \check{\mathbf u}\Vert_{\eta^2\leq 2|y_n|} \leq C'_N\Vert (e^{\theta_1-\theta_2}T_1 e^{-\theta_1} T_1^*e^{\theta_2}) e^{-\theta_2}T_1\widetilde{\mathbf u}\Vert,
\ee
with $C'_N>0$ independent of $\varepsilon$.
Then, we have,
\begin{lemma}\sl 
\label{commutpoidsT}
The operator $J:= e^{\theta_1-\theta_2}T_1 e^{-\theta_1} T_1^*e^{\theta_2}$ is bounded on $L^2(\R^{2n})$, uniformly with respect to $h>0$ small enough.
\end{lemma}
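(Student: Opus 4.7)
The strategy I would use is Schur's test applied to the Schwartz kernel of $J$. From the definitions of $T_1$ and $T_1^*$,
\[
K_J(y,\eta;y',\eta') = e^{\theta_1(y_n)-\theta_2(\eta)+\theta_2(\eta')}\!\! \int_{\R^n}\!\! e^{-\theta_1(z_n)}\, e^{i(y-z)\eta/h - (y-z)^2/(2h) - i(y'-z)\eta'/h - (y'-z)^2/(2h)}\, dz.
\]
Setting $m=(y+y')/2$ and $\alpha=\eta-\eta'$, I would complete the square in $z$ and carry out the Gaussian $z'$-integration, which is unobstructed since $\theta_1$ depends only on $z_n$. This reduces the kernel to
\[
K_J = C h^{(n-1)/2}\, e^{\theta_1(y_n)-\theta_2(\eta)+\theta_2(\eta')}\, e^{-(y-y')^2/(4h) - \alpha'^2/(4h) + i\Phi/h}\, \widetilde I(m_n,\alpha_n),
\]
with $\Phi$ real and $\widetilde I(m_n,\alpha_n):=\int_\R e^{-(z_n-m_n)^2/h - iz_n\alpha_n/h - \theta_1(z_n)}\,dz_n$, so all the delicacy is concentrated in the 1D integral $\widetilde I$.

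The two quantitative inputs that make the argument run are: (a) $\la \cdot\ra^{1/2}$ on $\R$ and $\la \cdot\ra$ on $\R^n$ are globally Lipschitz with bounded derivatives, so $|\theta_1(a)-\theta_1(b)|\leq (Ck/h)|a-b|$ and $|\theta_2(\eta)-\theta_2(\eta')|\leq (k/h)|\eta-\eta'|$, and $\theta_1$ extends holomorphically to the strip $|\Im w|<1$ with the same type of bound; (b) $k^2/h = h(\ln h^{-1})^2 = o(1)$ as $h\to 0_+$. To estimate $\widetilde I$, I plan to complete the square $-(z_n-m_n)^2/h - iz_n\alpha_n/h = -(z_n-m_n+i\alpha_n/2)^2/h - \alpha_n^2/(4h) - im_n\alpha_n/h$ and then deform the contour $z_n\mapsto z_n - i\delta\,\mathrm{sgn}(\alpha_n)$ with $\delta = \min(|\alpha_n|/2,\, 1/2)$. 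For $|\alpha_n|\leq 1$ this stays in the holomorphy strip of $\theta_1$ and delivers the Gaussian factor $e^{-\alpha_n^2/(4h)}$; for $|\alpha_n|>1$ it delivers the exponential factor $e^{-|\alpha_n|/(2h) + 1/(4h)}$ (the $1/(4h)$ coming from the shift of the real Gaussian weight). In both cases the remaining real $z_n$-integration is bounded via the Lipschitz property of $\theta_1$ on $\R$: after $z_n = m_n + \sqrt h\, s$, the loss $e^{(Ck/\sqrt h)|s|}$ combines with $e^{-s^2}$ to an $O(1)$ factor because $k^2/h=o(1)$, yielding $\int e^{-(z_n-m_n)^2/h}e^{-\theta_1(z_n)}\, dz_n \leq C\sqrt h\, e^{-\theta_1(m_n)}$.

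Assembling these pieces, I would obtain
\[
|K_J| \leq Ch^{n/2}\, e^{-(y-y')^2/(4h) - \alpha'^2/(4h)}\, e^{\theta_1(y_n)-\theta_1(m_n) + \theta_2(\eta')-\theta_2(\eta)}\, \Psi(\alpha_n,h),
\]
where $\Psi(\alpha_n,h)$ decays as $e^{-\alpha_n^2/(4h)}$ on $|\alpha_n|\leq 1$ and as $e^{-|\alpha_n|/(2h)+O(1/h)}$ on $|\alpha_n|\geq 1$. Young's inequality $(k/h)|u|\leq u^2/(8h) + 2k^2/h$, applied to $u=|y-y'|/2$ and $u=|\alpha'|$, absorbs the $\theta_1$- and $\theta_2$-corrections into half of the corresponding Gaussians (again using $k^2/h=o(1)$). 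In the $\alpha_n$-direction the growth $e^{(k/h)|\alpha_n|}=h^{-|\alpha_n|}$ is decisively beaten by the $e^{-|\alpha_n|/(2h)}$ from the fixed-size deformed contour, since $1/(2h)\gg\ln h^{-1}$ for $h$ small. The net bound is
\[
|K_J(y,\eta;y',\eta')|\leq Ch^{n/2}\bigl(e^{-(y-y')^2/(8h)-(\eta-\eta')^2/(8h)}+e^{-(y-y')^2/(8h)-|\alpha_n|/(4h)}\bigr),
\]
which is Schur-integrable with $\sup\int |K_J|\, dy'\, d\eta' = O(h^{3n/2})$ and likewise in $(y,\eta)$. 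Schur's test then gives $\|J\|_{L^2\to L^2}=O(1)$ uniformly in $h$.

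The main obstacle I expect is the estimate for $|\alpha_n|\geq 1$: the natural route of iterated integration by parts in $z_n$ yields only polynomial decay in $|\alpha_n|/h$, which is \emph{insufficient} against the exponential growth $e^{\theta_2(\eta')-\theta_2(\eta)}=h^{-|\alpha_n|}$. One genuinely needs the fixed-amplitude contour deformation by $\delta=1/2$, which is available only because $\theta_1$ extends holomorphically to $|\Im w|<1$ with modulus controlled by its real-line values up to $O(k/h)$. The heart of the argument is the quantitative inequality $1/(2h)\gg k/h=\ln h^{-1}$, which expresses that the deformation gain beats the exponential weight loss.
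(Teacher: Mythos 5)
Your argument is correct but takes a genuinely different route from the paper's. Both start from the same kernel formula for $J$, but the paper deforms the full $n$-dimensional $z$-contour by a \emph{small} amount $ik^{2/3}(\zeta-\eta)/|\zeta-\eta|$, whereas you first separate out the unobstructed $z'$-directions (reducing to a one-dimensional integral in $z_n$) and then deform the $z_n$-contour by a \emph{fixed-size} amount $\delta=\min(|\alpha_n|/2,\,1/2)$. The paper's choice of scale $k^{2/3}$ (which is $\gg k$ yet $\to 0$) is tuned so that every loss from moving to complex $z$ --- from the $\theta_1$-weight, the $\theta_2$-weight, and the shifted Gaussian --- is automatically $e^{o(1)}$, since each is of order ${\mathcal O}(k^{2/3}\cdot\sup|\nabla\theta_j|)={\mathcal O}(k^{5/3}/h)\to 0$; the oscillatory gain $e^{-k^{2/3}|\zeta-\eta|/h}$ then only has to beat the $\theta_2$-Lipschitz loss $e^{(k/h)|\zeta-\eta|}$, which it does because $k^{2/3}\gg k$. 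Your fixed-size deformation instead delivers the much stronger gain $e^{-|\alpha_n|/(4h)}$ (after absorbing the $e^{1/(4h)}$ Gaussian-shift loss, which is beneficial only for $|\alpha_n|\geq 1$), at the cost of incurring complex-weight losses of size $e^{{\mathcal O}(k/h)}=h^{-{\mathcal O}(1)}$; these are no longer ${\mathcal O}(1)$, but, as you correctly observe, are nevertheless crushed because $1/h\gg k/h=\ln(1/h)$. Both approaches are sound: the paper's is slicker (all error terms are automatically negligible, with no case split in $\alpha_n$), while yours is more explicit, requires no cleverly chosen intermediate scale, and gives a sharper Schur bound. One slip in your final display: the second term $Ch^{n/2}e^{-(y-y')^2/(8h)-|\alpha_n|/(4h)}$ omits the factor $e^{-\alpha'^2/(8h)}$, without which the kernel is not Schur-integrable in $\eta'$; this factor is of course present in your derivation, coming from the unobstructed $z'$-integration with half of it eaten by the $\theta_2$-correction in the $\alpha'$-direction, exactly as in the first term.
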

\begin{proof} The distributional kernel of $J$ is given by,
\begin{eqnarray*}
J(y,\eta; z,\zeta)=\int_{\R^n} e^{i(y-x)\eta/h+i(x-z)\zeta/h -(y-x)^2/2h -(x-z)^2/2h} \\
\times e^{\theta_2(\zeta)-\theta_2(\eta)+\theta_1(y_n)-\theta_1(x_n)}dx.
\end{eqnarray*}
In this integral, we can perform the change of contour,
$$
\R^n \ni x\mapsto x+ik^{2/3} \frac{\zeta -\eta}{|\zeta -\eta|},
$$
and, since $|\nabla\theta_1|\leq \frac12 k^{2/3}$ and  $|\nabla\theta_2|\leq \frac14k^{2/3}$, we obtain,
\begin{eqnarray*}
|J(y,\eta; z,\zeta)|\leq \int_{\R^n}e^{-k^{2/3}|\zeta -\eta|/h-(y-x)^2/2h -(x-z)^2/2h+k^{4/3}/h} \\
\times e^{k^{2/3}|\zeta -\eta|/4h + k^{2/3}|y-x|/2h}dx.
\end{eqnarray*}
Therefore, using the fact that $k^{2/3}|y-x|\leq 2k^{4/3} + (y-x)^2/2$ and $k^{4/3}/h\rightarrow 0$, for $h$ small enough, we find,
$$
|J(y,\eta; z,\zeta)|\leq 2\int_{\R^n}e^{-3k^{2/3}|\zeta -\eta|/4h-(y-x)^2/4h -(x-z)^2/2h}dx,
$$
and the result follows from the Schur Lemma (see, e.g., \cite{Ma6}).
\end{proof}
We deduce from Lemma \ref{commutpoidsT} and (\ref{F7})-(\ref{F8}) that we have,
\begin{eqnarray*}
&&\Vert |\eta|^{2N}e^{-\varepsilon k^{2/3}\la\eta\ra/h}T_1 \check{\mathbf u}\Vert_{\eta^2\geq \max(C_1,2|y_n|)}\\
&&\leq 
C_N\Vert  e^{-k\la\eta\ra/h}T_1 \widetilde{\mathbf u}\Vert+C_N\Vert e^{-\varepsilon k^{2/3}\la\eta\ra/h}T_1 \check{\mathbf u}\Vert_{ \eta^2\leq C_1}, 
\end{eqnarray*}
with $C_1,C_N$ independent of $\varepsilon$, and thus,
\begin{eqnarray*}
&&\Vert |\eta|^{2N}e^{-\varepsilon k^{2/3}\la\eta\ra/h}T_1 \check{\mathbf u}\Vert_{\eta^2\geq \max(C_1,2|y_n|)}\\
&&\leq 
C_N\Vert  e^{-k\la\eta\ra/h}T_1 \widetilde{\mathbf u}\Vert+C_Nh^{-C_1}\Vert e^{- k\la\eta\ra/h}T_1 e^{-4k\la y_n\ra^{2/3}}\widetilde {\mathbf u}\Vert.
\end{eqnarray*}
Now, one can see as in the proof of Lemma \ref{commutpoidsT} that we have,
\be
\label{F9}
\Vert e^{- k\la\eta\ra/h}T_1 e^{-4k\la y_n\ra^{2/3}}\widetilde {\mathbf u}\Vert \leq C \Vert e^{- k\la\eta\ra/h}T_1 \widetilde {\mathbf u}\Vert,
\ee
and, in virtue of (\ref{estaprioriTu}), we deduce,
$$
\Vert |\eta|^{2N}e^{-\varepsilon k^{2/3}\la\eta\ra/h}T_1 \check{\mathbf u}\Vert_{\eta^2\geq \max(C_1,2|y_n|)} ={\mathcal O}(h^{-C_1}),
$$
uniformly with respect to $h$ and $\varepsilon$. Sending $\varepsilon$ to 0 in this estimate, we conclude,
\be
\label{F10}
\Vert |\eta|^{2N}T_1 \check{\mathbf u}\Vert_{\eta^2\geq \max(C_1,2|y_n|)} ={\mathcal O}(h^{-C_1}).
\ee
Gathering (\ref{F8}), (\ref{F9}), (\ref{F10}), and using Lemma \ref{commutpoidsT}, we obtain,
\be
\label{F11}
\Vert \la \eta\ra^{2N} T_1\check {\mathbf u}\Vert ={\mathcal O}(h^{-C}),
\ee
and, inserting (\ref{F11}) into (\ref{estmic1}), and sending again $\varepsilon$ to 0, this gives us,
\be
\label{F12}
\Vert |\eta|^{2N}e^{\psi /h}T_1 \check{\mathbf u}\Vert_{{\rm Supp}\psi} ={\mathcal O}(h^{-C}).
\ee
Therefore, thanks to our choice of the function $\psi$, we finally obtain,
\be
\label{F13}
\Vert \la\eta\ra^{2N}T_1 \check{\mathbf u}\Vert_{K\times \{|\eta|\geq C_0\}} ={\mathcal O}(e^{-\delta_0/2h}).
\ee
where $K\subset\R^n$ is an arbitrary compact set, $ C_0, N\geq 1$ are arbitraryily large and $\delta_0=\delta_0(K,C_0,N)$ is a positive constant that does not depend on $\mu$.
\vskip 0.2cm
To complete the proof, we fix some arbitrary constant $C\geq 1$, we take $K'\subset \subset K\cap \{\la y_n\ra\leq C^2\}$, and we write,
\begin{eqnarray*}
&&\Vert \la \eta\ra^{2N}T_1 \widetilde{\mathbf u}\Vert_{K'\times \{|\eta|\geq 2C_0\}}\\
 &&\leq h^{-4C}\Vert \la \eta\ra^{2N}e^{-4k\la y_n\ra^{1/2}/h}T_1 \widetilde{\mathbf u}\Vert_{K'\times \{|\eta|\geq 2C_0\}} \\
&&\leq  h^{-4C}\Vert \la \eta\ra^{2N}e^{ -4k\la y_n\ra^{1/2}/h}T_1e^{ 4k\la y_n\ra^{1/2}/h}\check{\mathbf u}\Vert_{K'\times \{|\eta|\geq 2C_0\}}\\
&&\leq  h^{-4C}\Vert J_1\left( \la\eta\ra^{2N}T_1\check{\mathbf u}\right)\Vert_{K'\times \{|\eta|\geq 2C_0\}},
\end{eqnarray*}
with,
$$
J_1:=\la \eta\ra^{2N}e^{ -4k\la y_n\ra^{1/2}/h}T_1e^{ 4k\la y_n\ra^{1/2}/h}T_1^*\la\eta\ra^{-2N}.
$$
Then, in a similar way as in Lemma \ref{commutpoidsT}, we see that the distributional kernel of $J_1$ satisfies,
$$
|J_1(y,\eta;z,\zeta)|\leq 2\int e^{-\delta |\zeta -\eta|/h-(y-x)^2/4h - (x-z)^2/2h +\delta^2/h}dx,
$$
where $\delta >0$ can be taken arbitrarily small. We deduce,
\begin{eqnarray*}
\Vert J_1\left( \la\eta\ra^{2N}T_1\check{\mathbf u}\right)\Vert_{K'\times \{|\eta|\geq 2C_0\}}\leq
2e^{\delta^2/h}\Vert  \la\eta\ra^{2N}T_1\check{\mathbf u}\Vert_{K\times \{|\eta|\geq C_0\}}\\
+e^{\delta^2/h-\delta_1/h}\Vert \la\eta\ra^{2N}T_1\check{\mathbf u}\Vert,
\end{eqnarray*}
where the positive number $\delta_1$ only depends on the distance between $K'$ and $\R^n \backslash K$. As a consequence, taking $\delta^2\leq \min (\frac12\delta_1, \frac14 \delta_0)$, and using (\ref{F12})-(\ref{F13}) we conclude,
$$
\Vert \la \eta\ra^{2N}T_1 \widetilde{\mathbf u}\Vert_{K'\times \{|\eta|\geq 2C_0\}}={\mathcal O}(h^{-4C}e^{-\delta_0/4h} + h^{-5C}e^{-\delta_1/2h}).
$$
Since $K$ was an arbitrary compact set, so is $K'$, and thus Lemma \ref{estuetalarge} is proved.

{}

\end{document}